\documentclass[12pt]{article}
\usepackage{amsmath,amssymb,fullpage}

\usepackage{enumerate}

\usepackage{graphicx}

\usepackage{makeidx}
\usepackage[utf8]{inputenc}
\usepackage{wrapfig}
\usepackage{amsmath}
\usepackage{amsfonts}
\usepackage{mathrsfs}
\usepackage{amssymb}
\usepackage{latexsym}
\usepackage{amsbsy}
\usepackage{color}
\usepackage{bbm}

\usepackage{mathrsfs}


\usepackage{wasysym}

\providecommand{\otherindexspace}[1]{}

\usepackage{amsthm}

\newtheorem{theorem}{Theorem}[section]

\newtheorem{lemma}[theorem]{Lemma}
\newtheorem{proposition}[theorem]{Proposition}
\newtheorem{remark}[theorem]{Remark}

\newtheorem{definition}[theorem]{Definition}

\newtheorem{example}[theorem]{Example}
\newtheorem{assumption}[theorem]{Assumption}

\numberwithin{equation}{section}


\def\cal#1{\mathcal{#1}}

\def \H{\mathbb {H}}
\def \R{\mathbb {R}}

\usepackage{fancyhdr}



\pagestyle{fancy}

\pagestyle{plain}
\usepackage{graphics}
\usepackage{graphicx}

\DeclareGraphicsExtensions{.eps,.bmp,.jpg,.pdf,.mps,.png,.gif}

\makeatletter
\def\titre{\@title}
\makeatother

\title{Game options in an imperfect market with default}



\author{Roxana Dumitrescu\thanks{Department of Mathematics, King's College London, United Kingdom,  email: \textbf{roxana.dumitrescu@kcl.ac.uk}} \and Marie-Claire Quenez \thanks{LPMA,
Université Paris 7 Denis Diderot, Boite courrier 7012, 75251 Paris cedex 05, France, email: \textbf{quenez@math.univ-paris-diderot.fr}} \and  Agnès Sulem
\thanks{ INRIA Paris,  3 rue Simone Iff, CS 42112, 75589 Paris Cedex 12, France, email: \textbf{agnes.sulem@inria.fr} The authors thank an anonymous referee for insightful comments which led to a significantly improved version of the paper.}}

\begin{document}

\date{\today}

\maketitle

\begin{abstract}

We study  pricing  and  superhedging strategies for game options in an imperfect market 
with default. 
We extend the results obtained by Kifer in \cite{Kifer}  in the case of a perfect market model to the case of an imperfect  market with default,  when the imperfections are taken into account via the nonlinearity  of the wealth dynamics. 
We  introduce the {\em seller's price} of the game option as the infimum of the initial wealths which allow the seller to be superhedged.
We {prove} that this price coincides with the value function of an associated {\em generalized} Dynkin game, recently introduced in \cite{DQS2},  expressed with a nonlinear expectation induced by a nonlinear BSDE with default jump. We moreover 
study the existence of superhedging strategies. 
 We then address  the case of ambiguity on the model, - for example ambiguity on the default probability - and characterize  the robust seller's price of a game option as the value function of a {\em mixed generalized} Dynkin game.
 We study the existence of a cancellation time  and a trading strategy which allow the seller to be super-hedged, whatever the model is. 
 \end{abstract}

\textbf{Key-words:} 
Game options, imperfect markets, generalized Dynkin games, nonlinear expectations, backward stochastic differential equations, nonlinear pricing, super-hedging price, doubly reflected backward stochastic differential equations.


\section{Introduction}


Game options, which have been introduced by Kifer (2000) \cite{Kifer}, are derivative contracts that can be terminated by both counterparties at 
any time before a maturity date $T$. More precisely, a game option allows the seller to cancel it and the buyer to exercise it at any stopping time smaller than $T$. 
 If the buyer exercises at time $\tau$ before the seller cancels, then the seller pays the buyer  the amount $ \xi_{\tau}$, but if the seller 
cancels  before the buyer exercises, then he pays  the amount $\zeta_{\sigma} \geq  \xi_{\tau}$ to the buyer at the cancellation time $\sigma$. The difference 
$\zeta_{\sigma} - \xi_{\sigma}$ is interpreted as a penalty that the seller pays to the buyer for the cancellation of the contract. 
 In short, if the buyer selects an exercise time $\tau $ and the seller selects a cancellation time $\sigma$, 
 then the latter pays to the former the payoff
 $\xi_\tau \textbf{1}_{\tau \leq \sigma}+ \zeta_\sigma \textbf{1}_{\tau >\sigma}$ at time $\tau \wedge \sigma$.

 
 In the case of classical perfect markets, Kifer introduces 
 the "{fair price}"  of the game option, defined as  the minimum initial wealth needed for the seller  to cover his liability to pay the payoff to the buyer  until a cancellation time, whatever is the exercise time chosen by the buyer.
He shows  both in the CCR discrete-time model and in the Black and Scholes model 
 that this price  is 
  equal  to the value function of the following  Dynkin game:
\begin{equation}\label{Kifer}
 \sup_\tau \inf_\sigma \mathbb{E}_{Q}[ {\tilde \xi}_\tau \textbf{1}_{\tau \leq \sigma}+ {\tilde \zeta}_\sigma \textbf{1}_{\tau >\sigma}]= \inf_\sigma \sup_\tau \mathbb{E}_{Q}[{\tilde \xi}_\tau \textbf{1}_{\tau \leq \sigma}+ {\tilde \zeta}_\sigma \textbf{1}_{\tau >\sigma}],
\end{equation}
where $\tilde \xi_t$ and $\tilde \zeta_t$ are the discounted values of $\xi_t$ and $\zeta_t$, equal to $e^{-rt} \xi_t$ and 
$e^{-rt} \zeta_t$ respectively in the Black and Scholes model, where $r$ is the instantaneous interest rate.
Here, $\mathbb{E}_{Q}$ denotes the expectation under the unique martingale probability measure $Q$ of the market model. 
Further research on the pricing of game options and on more sophisticated game-type financial contracts includes in particular papers by Dolinsky and Kifer (2007) \cite{DK}  and Dolinsky and al. (2011) \cite{Dolinsky} in the discrete time case, and  by Hamad\`{e}ne (2006) \cite{H} 
in a continuous time perfect market model with continuous payoffs $\xi$ and $\zeta$.
 We also mention the paper by Bielecki and al. (2009) \cite{BCJR} which studies the pricing of game options in a market model with default.
Note that  in \cite{KK}, Kallsen and Kuhn (2004) study game options in an incomplete market.
They consider
another type of pricing called {\em neutral valuation} via utility maximization.

The aim of the present paper is to study pricing and hedging issues for game options in 
 the case of  imperfections in the market model taken into account via the nonlinearity  of the wealth dynamics, modeled via a nonlinear driver $g$. We moreover include the possibility of a default. A large class of imperfect market models can fit in our framework, like different borrowing and lending interest rates, or taxes on the profits from risky investments. 
 Our model also includes the case when the seller of the option is a 
 "large trader" whose  hedging  strategy may affect the market prices and the default probability. 
 
 Here, we suppose that  the payoffs $\xi$ and $\zeta$ associated with the  game option are  right-continuous left-limited  (RCLL) only and they satisfy Mokobodzki's condition.  We  call {\em seller's price} of the game option,  the infimum (denoted by $u_0$) of the initial wealths such that  there exists  a cancellation time $\sigma$  and a portfolio strategy which allow  the seller to pay $\xi_\tau$ (at time $\tau$)  to the buyer if the buyer exercises at any time ${\tau \leq \sigma}$, and $\zeta_\sigma$ (at time $\sigma$) if the buyer has still not exercise at 
time $\sigma$. Note that this infimum is not necessarily attained. 
 We provide a  characterization of
 the seller's price $u_0$ of the game option 	as the (common) value of a corresponding {\em generalized}
  Dynkin game (recently introduced in \cite{DQS2}). 
 More precisely, we show that
\begin{align}\label{GD}
u_0= \sup_\tau \inf_\sigma \mathcal{E}^g[\xi_\tau \textbf{1}_{\tau \leq \sigma}+ \zeta_\sigma \textbf{1}_{\tau >\sigma}]= \inf_\sigma \sup_\tau \mathcal{E}^g[\xi_\tau \textbf{1}_{\tau \leq \sigma}+ \zeta_\sigma \textbf{1}_{\tau >\sigma}], 
\end{align}
where $\mathcal{E}^g$ is a nonlinear expectation/evaluation induced by a nonlinear BSDE with default jump solved under the primitive probability measure $P$ with driver $g$.
  Note that in the particular case of a perfect market, the driver $g$ is linear and 
  one can show by using an actualization procedure and a change of probability measure that 
 \eqref{GD} corresponds to $\eqref{Kifer}$. 
 
We prove that, under an additional left-regularity assumption on $\zeta$ (but not on $\xi$), there exist a  cancellation time and a trading strategy which allow the seller to be super-hedged. 
In this case, the infimum in the definition of the seller's price $u_0$ is  attained. 
When $\zeta$ is only RCLL, the infimum is not necessarily attained. 
 However, 
we show that for each $\varepsilon>0$, the amount $u_0$  allows the seller 
to be super-hedged up to $\varepsilon$ until a well chosen cancellation time.  The proofs of these results rely on the links 
between {\em generalized} Dynkin games and nonlinear doubly reflected BSDEs with default jump.

 The second main question  we  study is the pricing and  superhedging problem of game options in the case of uncertainty on 
the (imperfect) market model. To the best of our knowledge, this problem has not been studied in the literature except by Dolinsky (2014) in \cite{Do} in a discrete time framework.
In particular, our model can take into account an {\em ambiguity on the default probability} as illustrated in Section \ref{example}.
We prove that the robust seller's  price of the game option  under uncertainty, defined as the infimum of the initial wealths with allow the seller to be superhedged whatever the model is,  coincides with the value function of a mixed generalized Dynkin game. We also study the existence of robust superhedging strategies.

The paper is organized as follows:  in Section \ref{sec2}, we introduce our imperfect market model with default and nonlinear wealth dynamics.
In Section \ref{sec3}, we study  pricing and superhedging of game options and their links with {\em generalized Dynkin games}. In Section  \ref{mixed}, we
address the  case of an imperfect market with model ambiguity. 
Section \ref{sec-comp} provides some complementary results concerning the buyer's point of view and the case with 
dividends.  Some results on doubly reflected BSDEs with default jumps and a useful  lemma of analysis  are given in  Appendix.

\section{Imperfect market model with default}\label{sec2}

\subsection{Market model with default}\label{marketmodel}
Let $(\Omega, \mathcal{G}, {P})$ be a complete probability space 
 equipped with two stochastic processes:
  a unidimensional standard Brownian motion $W$ and a jump process $N$ defined by 
  $N_t={\bf 1}_{\vartheta\leq t}$ for any $t\in[0,T]$, where $\vartheta$ is a random variable which models a default time. We assume that this default can appear at any time that is $P(\vartheta \geq t)>0$ for any $t\geq 0$. We denote by ${\mathbb G}=\{\mathcal{G}_t, t\geq 0 \}$ the {\em augmented filtration} that is generated by $W$ and $N$ (in the sense of \cite[IV-48]{DM1}). We suppose that  $W$ is a ${\mathbb G}$-Brownian motion. 
 We denote by ${\cal P}$ the ${\mathbb G}$-predictable $\sigma$-algebra.
 Let  $(\Lambda_t)$ be the  predictable compensator of the nondecreasing process $(N_t)$.
 Note that $(\Lambda_{t \wedge \vartheta})$ is then the predictable compensator of
  $(N_{t \wedge \vartheta} )= (N_t)$. By uniqueness of the predictable compensator, 
  $\Lambda_{t \wedge \vartheta} = \Lambda_t$, $t\geq0$ a.s.
  We assume that $\Lambda$ is absolutely continuous w.r.t. Lebesgue's measure, so that there exists a nonnegative process $\lambda$, 
 called the intensity process, such that $\Lambda_t=\int_0^t \lambda_s ds$, $t\geq0$.
  Since $\Lambda_{t \wedge \vartheta} = \Lambda_t$,  $\lambda$ vanishes after $\vartheta$. 
We denote by $M$ the compensated martingale   which satisfies 
\begin{equation*}
M_t  = N_t-\int_0^t\lambda_sds\,.
\end{equation*}

Let $T >0$ be the finite horizon. We introduce the following sets:
\begin{itemize}
\item ${S}^{2}$ 
is the set of ${\mathbb G}$-adapted RCLL processes $\varphi$ such that $\mathbb{E}[\sup_{0\leq t \leq T} |\varphi_t | ^2] < +\infty$.
\item ${\cal A}^2$  is the set of real-valued non decreasing RCLL predictable
 processes $A$ with $A_0 = 0$ and $\mathbb{E}(A^2_T) < \infty$.

\item ${\mathbb H}^2$  is the set of ${\mathbb  G}$-predictable processes $Z$ such that
 $
 \| Z\|^2:= \mathbb{E}\Big[\int_0^T|Z_t|^2dt\Big]<\infty \,.
 $
\item  ${\mathbb H}^2_{\lambda}:= L^2( \Omega \times [0,T],{\cal P}, \lambda_tdt)$, equipped with the scalar product $\langle U,V \rangle _{\lambda}:= \mathbb{E}\Big[\int_0^TU_t V_t\lambda_tdt\Big]$, for all 
$U,V$ in ${\mathbb H}^2_{\lambda}$. For each $U \in$ ${\mathbb H}^2_{\lambda}$, we set 
$\| U\|_{\lambda}^2:=\mathbb{E}\Big[\int_0^T|U_t|^2\lambda_tdt \Big]<\infty \,.$
\end{itemize}


Note that for each $U \in {\mathbb H}^2_{\lambda}$, we have $\| U\|_{\lambda}^2=\mathbb{E}\Big[\int_0^{T\wedge \vartheta}|U_t|^2\lambda_tdt\Big] \,$ because 
the ${\mathbb G}$-intensity $\lambda$ vanishes after $\vartheta$. 
Moreover, we can suppose that for each 
$U$ in ${\mathbb H}^2_{\lambda}$ $=$ $L^2( \Omega \times [0,T],{\cal P}, \lambda_tdt)$, $U$ (or its representant in ${\cal L}^2( \Omega \times [0,T],{\cal P}, \lambda_tdt)$ still denoted by 
$U$) vanishes  after $\vartheta$. 
\\
Moreover,  $\mathcal{T}$ denotes the set of
stopping times $\tau$ such that $\tau \in [0,T]$ a.s.\, and for each $S$ in $\mathcal{T}$, 
   $\mathcal{T}_{S}$ is  the set of
stopping times
$\tau$ such that $S \leq \tau \leq T$ a.s.

We recall the martingale representation theorem 
(see e.g.  \cite{JYC}):
\begin{lemma}\label{theoreme representation}
Any ${\mathbb  G}$-local martingale $m= (m_t)_{0\leq t \leq T}$ has the representation 
\begin{equation}
\label{equation representation}
m_t  = m_0+\int_0^t z_sdW_s+\int_0^t l_sdM_s \,, \quad\forall\,t\in[0,T] \quad a.s. \,,
\end{equation}
where $z= (z_t)_{0\leq t \leq T}$ and $l= (l_t)_{0\leq t \leq T}$ are 
predictable such that the two above stochastic integrals are well defined.
If $m$ is a square integrable martingale, then $z \in {\mathbb H}^2$ and $l \in {\mathbb H}^2_{\lambda}$.
\end{lemma}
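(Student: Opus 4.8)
The plan is to establish the predictable representation property (PRP) for the pair $(W,M)$ in the filtration $\mathbb{G}$, and then to read off the $L^2$-bounds from the It\^o isometry. The starting observation is that, since $W$ is a (continuous) $\mathbb{G}$-Brownian motion and $M$ is a finite-variation pure-jump $\mathbb{G}$-martingale, the two are orthogonal, $\langle W,M\rangle=0$, while $\langle W,W\rangle_t=t$ and $\langle M,M\rangle_t=\int_0^t\lambda_s\,ds$. I would first treat square-integrable martingales and then pass to the local case by localization.

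For the square-integrable case I would work in the Hilbert space $\mathcal{M}^2_0$ of square-integrable $\mathbb{G}$-martingales null at $0$, and consider the stable subspace $\mathcal{L}:=\{\int_0^\cdot z_s\,dW_s+\int_0^\cdot l_s\,dM_s:\ z\in\mathbb{H}^2,\ l\in\mathbb{H}^2_\lambda\}$. By the It\^o isometry together with the orthogonality relation, the map $(z,l)\mapsto \int z\,dW+\int l\,dM$ is an isometry from $\mathbb{H}^2\times\mathbb{H}^2_\lambda$ onto $\mathcal{L}$, so $\mathcal{L}$ is closed, and establishing PRP amounts to showing $\mathcal{L}^\perp=\{0\}$. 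Concretely, I would show that if $m\in\mathcal{M}^2_0$ satisfies $\langle m,W\rangle=0$ and $\langle m,M\rangle=0$, then $m\equiv0$.

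To obtain this I would test $m$ against Dol\'eans--Dade exponentials. For bounded predictable $f,h$ (with $h>-1$), let $L^{f,h}$ denote the stochastic exponential of $\int f\,dW+\int h\,dM$, a square-integrable martingale solving a linear SDE driven by $W$ and $M$. Since $L^{f,h}$ is a stochastic integral of $W$ and $M$, the hypotheses give $\langle m,L^{f,h}\rangle=0$, so integration by parts shows $m L^{f,h}$ is a martingale, whence $\mathbb{E}[m_T\,L^{f,h}_T]=m_0=0$. The crucial step --- and the main obstacle --- is the totality argument: the terminal values $\{L^{f,h}_T\}$ span a dense subspace of $L^2(\mathcal{G}_T)$. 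This is exactly where the assumption that $\mathbb{G}$ is generated by $W$ and $N$ (and that $W$ is a $\mathbb{G}$-Brownian motion) is essential, since it guarantees that no randomness beyond $W$ and the single default jump enters $\mathcal{G}_T$. I would prove totality by a monotone-class argument, approximating an arbitrary element of $L^2(\mathcal{G}_T)$ by functions of finitely many increments of $W$ and of the default jump, and realizing the corresponding exponentials as $L^{f,h}_T$. Given totality, $\mathbb{E}[m_T X]=0$ for $X$ in a dense set forces $m_T=0$, hence $m\equiv0$ and $\mathcal{L}=\mathcal{M}^2_0$.

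Finally, I would extract the stated conclusions. For a square-integrable martingale $m$, the decomposition $m-m_0\in\mathcal{M}^2_0=\mathcal{L}$ yields $z\in\mathbb{H}^2$ and $l\in\mathbb{H}^2_\lambda$ directly, and the isometry gives $\mathbb{E}[m_T^2]=m_0^2+\|z\|^2+\|l\|_\lambda^2$. For a general $\mathbb{G}$-local martingale I would pick a localizing sequence $(T_n)$ reducing $m$ to $L^2$-martingales, apply the representation on each $[0,T_n]$, and invoke uniqueness of the integrands (up to $dP\,dt$- and $dP\,\lambda_t dt$-null sets) to patch them into predictable processes $z,l$ on $[0,T]$ for which the two stochastic integrals are well defined, matching \eqref{equation representation}. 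Since $\lambda$ vanishes after $\vartheta$, the $l$-integral only charges $[0,\vartheta]$, consistently with the convention that elements of $\mathbb{H}^2_\lambda$ vanish after $\vartheta$.
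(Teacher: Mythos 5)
The paper does not actually prove this lemma: it is recalled from the literature (the citation is to Jeanblanc--Yor--Chesney), so there is no in-paper argument to compare yours against step by step. Your architecture is the classical one for predictable representation theorems --- It\^o isometry onto the stable subspace generated by $W$ and $M$, reduction to showing the orthogonal complement is trivial, testing against Dol\'eans--Dade exponentials, then localization --- and this route can be made to work here. It is, however, different from the proof usually given in the default/enlargement-of-filtration setting, which exploits the explicit structure of $\mathcal{G}_T$-measurable random variables ($\xi=\xi^0\mathbf{1}_{\vartheta>T}+\xi^1(\vartheta)\mathbf{1}_{\vartheta\leq T}$ with $\xi^0$ measurable for the Brownian filtration), the key formula expressing $\mathbb{E}[\xi\mid\mathcal{G}_t]$ through $\mathcal{F}_t$-conditional expectations and the Az\'ema supermartingale, and the Brownian PRP; one then verifies the representation by a direct It\^o computation. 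That route avoids any totality argument, which is precisely where your sketch is weakest.

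Concretely, the totality step as you describe it does not go through verbatim. First, the family $\{L^{f,h}_T\}$ is not closed under multiplication: by Yor's formula $\mathcal{E}(X)\mathcal{E}(Y)=\mathcal{E}(X+Y+[X,Y])$, and for $X=\int f\,dW+\int h\,dM$, $Y=\int f'\,dW+\int h'\,dM$ the bracket contains the non-martingale term $\int h h'\,dN$; so the multiplicative monotone-class theorem cannot be applied to this family directly, and "approximating by products of functions of increments of $W$ and of the default jump" requires a separate argument to land back inside the linear span of the $L^{f,h}_T$. Second, since $\lambda$ is only assumed $\mathbb{G}$-predictable (the Cox structure is imposed only later, in Section \ref{mamo}), the jump exponential $\mathcal{E}(\int h\,dM)_T=(1+h_\vartheta\mathbf{1}_{\vartheta\leq T})\exp(-\int_0^{T\wedge\vartheta}h_s\lambda_s\,ds)$ is not a function of $\vartheta$ alone even for deterministic $h$, so the product factorization of $L^2(\mathcal{G}_T)$ you invoke is not immediate. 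Both obstacles are surmountable (e.g. via the Jacod--Yor extremality criterion, or by the explicit conditional-expectation route above), but they are the actual content of the lemma and need to be addressed rather than asserted. A smaller point: in the localization step, stopping a general local martingale at $T_n$ does not produce a square-integrable martingale, so you need the fundamental theorem of local martingales (or the fact that here all jumps occur at the single time $\vartheta$) before applying the $L^2$ case.
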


 We consider now a financial market with three assets with price process $S=(S^{0}, S^{1},S^{2})'$
governed by the equation:
\begin{equation*}
\begin{cases}
dS_t^{0}=S_t^{0} r_tdt\\
dS_t^{1}=S_t^{1}[\mu_t^1dt +  \sigma^1_tdW_t]\\
dS_t^{2}=S_{t^-}^{2} [\mu^2_tdt+\sigma^2_tdW_t-dM_t].
\end{cases}
\end{equation*}
The process $S^0= (S_t^{0})_{0\leq t \leq T}$ corresponds to the price of a non risky asset with interest rate 
process $r= (r_t)_{0\leq t \leq T}$, 
$S^1= (S_t^{1})_{0\leq t \leq T}$ to a non defaultable risky asset, and $S^2= (S_t^{2})_{0\leq t \leq T}$ to a defaultable  asset with total default. The price process $S^2$ vanishes after $\vartheta$.
 
All the processes $\sigma^1,\sigma^2,$ $r, \mu^1,\mu^2$ are 
predictable (that is  ${\cal P}$-measurable). 
We set  $\sigma=(\sigma^1,\sigma^2)'$. 
We make  the following assumptions:
\begin{assumption}\label{hypo coeff}
The coefficients $\sigma^1, \sigma^2 > 0$, 
and  $r$, $\sigma^1,\sigma^2,$ $\mu^1,\mu^2,\lambda,$ $\lambda^{-1}$,${(\sigma^1)}^{-1}$, 
${(\sigma^2)}^{-1}$ are bounded. 
\end{assumption}
We consider an investor, endowed with an initial wealth equal to $x $, who can invest his wealth in the three assets of the market. 
At each time $t < \vartheta$, he chooses   the amount $\varphi_t^1$ (resp. $\varphi_t^2$) of wealth invested in the first 
(resp. second) risky asset. However, after time $\vartheta$, the investor cannot invest his wealth in the defaultable 
asset since its price is equal to $0$, and he only chooses  the amount $\varphi_t^1$ of wealth invested in the first risky asset. Note that the process $\varphi^2$ can be defined on the whole interval $[0,T]$ by setting $\varphi_t^2=0$ for each $t \geq \vartheta$.
A process $\varphi_.= (\varphi_t^1, \varphi_t^2)'_{0 \leq t \leq T}$ is called a {\em risky assets stategy} if 
it belongs to ${\mathbb H}^2 \times  {\mathbb H}^2_{\lambda}$.\\
We denote by  $V^{x, \varphi}_t$ (or simply $V_t$) the {\em wealth},
 or equivalently the value of the portfolio, at 
time $t$. The amount invested in the non risky asset at time $t$ is then given by $V_t - (\varphi_t^1+ \varphi_t^2)$.
\paragraph{The perfect market model.}
In the classical case of a perfect market model,  the wealth process and the strategy satisfy the self financing condition:
\begin{equation}\label{portfolio}
dV_t  = (r_t V_t+\varphi_t^1 (\mu^1_t - r_t)+\varphi_t^2(\mu^2_t - r_t) ) dt + 
(\varphi_t^1 \sigma^1_t + \varphi_t^2 \sigma^2_t) dW_t - \varphi_t^2  dM_t.
\end{equation}
Setting $K_t:=- \varphi_t^2$, and $Z_t:=\varphi_t^1 \sigma^1_t + \varphi_t^2 \sigma^2_t$, which implies that 
$\varphi_t^{1} = (Z_t +   \sigma^2_t  K_t)( \sigma_t^1)^{-1}$,
we get 
\begin{align*}
dV_t & = r_t V_t+(Z_t + { \sigma^2_t} K_t ) (\mu_t^1-r_t)( \sigma_t^1)^{-1}-K_t (\mu^2_t - r_t)  dt + Z_t dW_t + K_t  dM_t\\
 & = (r_t V_t+ Z_t  \theta_t^1+ K_t \theta_t ^2 \lambda_t) dt  
 + Z_t dW_t + K_t  dM_t,
\end{align*}
where $\theta_t^1:=\dfrac{\mu_t^1-r_t}{\sigma_t^1}$ and $\theta_t^2:=  \dfrac{\sigma_t^2 \theta_t^1 -\mu_t^2+r_t}{\lambda_t  }\,{\bf 1}_{\{t \leq \vartheta \} }$. \\
Consider a European contingent claim with maturity $T>0$ and payoff $\xi$ which is $\mathcal{G}_T$ measurable, belonging to  ${L}^2$. The problem is to price and hedge this claim by constructing a replicating portfolio. 
From  \cite[Proposition 2.6 ]{DQS4}, there exists an unique process $(X, Z, K) \in \mathcal{S}^2 \times {\mathbb H}^2 \times  {\mathbb H}^2_{\lambda}$ solution of the following  BSDE with default jump:
\begin{equation}\label{portfolio}
- dX_t = \displaystyle -  (r_t X_t+Z_t \theta_t^1+K_t  \theta_t^2 \lambda_t) dt -  Z_t dW_t - K_t  dM_t\,; \quad
X_T=\xi.
\end{equation}
The solution $(X, Z, K)$ provides the replicating portfolio. More precisely, 
the process $X$  corresponds to its value, and 
the hedging risky assets stategy  $\varphi \in {\mathbb H}^2_{\lambda}$ is given by $\varphi=\Phi (Z, K)$, 
where $\Phi$ is the one to one map defined on ${\mathbb H}^2 \times  {\mathbb H}^2_{\lambda}$ by:
\begin{definition}\label{stbis}
Let $\Phi$  be the functional defined by 
$$\Phi:{\mathbb H}^2 \times  {\mathbb H}^2_{\lambda} \rightarrow {\mathbb H}^2 \times  {\mathbb H}^2_{\lambda}; 
(Z, K) \mapsto \Phi (Z, K):= \varphi,$$ where $\varphi= (\varphi^1, \varphi^2)$ is given by 
\begin{equation*}
 \varphi_t^{2} = - {K_t} \;\; ; \;\; 
\varphi_t^{1} = 
 \frac{Z_t +   \sigma^2_t  K_t\, }{\sigma^1_t},
\end{equation*}
which is equivalent to 
$
K_t=- \varphi_t^2\, ;\,\,\,
Z_t= {\varphi ^1_t} \sigma^1_t + {\varphi^2_t}\, \sigma^2_t = {\varphi_t}' \sigma_t .
$ 
\end{definition}
Note that the processes $\varphi^2$ and $K$, which belong to ${\mathbb H}^2_{\lambda}$,
both vanish after time $\vartheta$.

The process $X$ coincides with $V^{X_0, \varphi}$, the value of the  portfolio 
associated with initial wealth $x=X_0$ and portfolio strategy $\varphi$. 
From the seller's point of view, this portfolio is a hedging portfolio.  Indeed, by investing the initial amount $X_0$ in the reference assets along the strategy $\varphi$, the seller  can  pay the amount $\xi$ to the buyer at time $T$ (and similarly at each initial time $t$). 
We derive that $X_t$ is the price at time $t$ of the option, called {\em hedging price}, and denoted by 
$X_t(\xi)$. 
By the representation property of  the solution  of a $\lambda$-linear BSDE with default jump (see \cite[Theorem 2.13]{DQS4}), we have that the solution $X$ of BSDE \eqref{portfolio} can be written as follows:  
\begin{equation}\label{free}
X_t(\xi)=\mathbb{E}[e^{-\int_t ^T r_s ds} \zeta_{t,T}\xi \,|\,{\cal G}_t], 
\end{equation} where 
$\zeta_{t, \cdot}$ satisfies 
\begin{equation}\label{def-zeta}
d\zeta_{t,s}= \zeta_{t,s^-} [-\theta^1_s dW_s - \theta^2_s  dM_s]; \quad  \zeta_{t,t}=1.
\end{equation}
 This defines a {\em linear} price system $X$: $\xi \mapsto X (\xi)$.
Suppose now that 
\begin{equation}\label{cth}
\theta^2_t < 1, \; 0 \leq t \leq \vartheta\, \,dt \otimes dP -a.s.\end{equation}
Then $\zeta_{t,\cdot}>0$.
Let $Q$ be the probability measure which admits  $\zeta_{0,T}$ as density on ${\cal G}_T$.
Using Girsanov's theorem, it can be shown that $Q$ is the unique martingale probability measure.
 In this case, the price system $X$ is increasing 
and corresponds to the classical
 arbitrage free price system (see \cite{JYC, BJR, BCJR}).

\begin{remark} We have presented above the case of a defaultable asset with total default. 
A different model for the asset price $S^2$ (see  e.g. \cite[Chapter 7, Section 9.3]{JYC}) could be considered: 
$$dS_t^{2}=S_{t^-}^{2} [\mu^2_tdt+\sigma^2_tdW_t+ \beta_t dM_t],$$
where $\beta _t \neq 0$ and $\beta_t > -1$, with $\beta_t$, $\beta_t^{-1}$ bounded. In this case, the price does not vanish after the default time $\vartheta$. 
We suppose that 
\begin{equation}\label{conditiono}
\dfrac{\mu_t^1-r_t}{\sigma_t^1}\,{\bf 1}_{\{t > \vartheta \} } = \dfrac{\mu_t^2-r_t}{\sigma_t^2} \,\,{\bf 1}_{\{t >\vartheta \} }\,\quad dt \otimes dP- \text{a.s.}
\end{equation}
Let  $\zeta_{0,\cdot}$ be defined by \eqref{def-zeta} with
$
 \theta_t^1=\dfrac{\mu_t^1-r_t}{\sigma_t^1}; \quad 
\theta_t^2=  \dfrac{\mu_t^2-\sigma_t^2 \theta_t^1-r_t}{\beta_t \lambda_t  }\,{\bf 1}_{\{t \leq \vartheta \} }.$
Assume that $\theta^2_t < 1, \; 0 \leq t \leq \vartheta\, \,dt \otimes dP$ -a.s. The assumption \eqref{conditiono} ensures that the probability measure $Q$ with  $\zeta_{0,T}$ as density on ${\cal G}_T$ is 
the unique martingale probability measure.
The arbitrage free price of the contingent claim $\xi$ is given by \eqref{free} and satisfies 
BSDE \eqref{portfolio};
moreover, the hedging strategy $\varphi= (\varphi^1, \varphi^2)$ is given by: 
$
 \varphi_t^{2} = \frac{K_t}{\beta_t}$ and $
\varphi_t^{1} = \frac{Z_t -  \varphi_t^{2}  \sigma^2_t }{\sigma^1_t}.
$

\end{remark}

\paragraph{The imperfect market model ${\cal M}^g$. } From 
 now on,   we assume that  there are  imperfections in the market which are taken into account via 
the {\em nonlinearity} of the
dynamics of the wealth. More precisely, the
dynamics of the wealth $V$ associated with strategy $\varphi=(\varphi^1, \varphi^2)$  can be written via  a {\em nonlinear} 
driver, defined as follows: 
\begin{definition}[Driver, $\lambda$-{\em admissible} driver]\label{defd}
A  function $g$
is said to be a {\em driver} if\\
$g: [0,T]  \times \Omega \times \R^3  \rightarrow \R $; 
$(\omega, t,y, z, k) \mapsto  g(\omega, t,y,z,k) $
which   is $ {\cal P} \otimes {\cal B}(\R^3) 
- $ measurable, and such that
 $g(.,0,0,0) \in {\mathbb H}^2$.
 
A driver $g$ is called a $\lambda$-{\em admissible driver} if moreover there exists a constant $ C \geq 0$ such that 
$dP \otimes dt$-a.s.\,,
for each $(y_1, z_1, k_1)$, $(y_2, z_2, k_2)$,
\begin{equation}\label{lip}
|g(\omega, t, y_1, z_1, k_1) - g(\omega, t, y_2, z_2, k_2)| \leq
C ( |y_1 - y_2| +|z_1 - z_2| +   \sqrt \lambda_t |k_1 - k_2 |).
\end{equation}
The positive real $C$ is called the $\lambda$-{\em constant} associated with driver $g$.
\end{definition}
Note that condition \eqref{lip} implies that  for each $\,t > \vartheta$, since $\lambda_t=0$,
$g$ does not depend on $k$. 
In other terms, for each $(y,z,k)$, we have: 
$g(t,y,z,k)= g(t,y,z,0)$, $ t > \vartheta$ $dP \otimes dt$-a.s.
%
Let $x \in {\mathbb R}$ be the initial wealth and let $\varphi=(\varphi^1, \varphi^2)$ in ${\mathbb H}^2 \times  {\mathbb H}^2_{\lambda}$ be a portfolio strategy.

We suppose that  the associated {\em wealth} process  $V^{x, \varphi}_t$ (or simply $V_t$)
satisfies  the following dynamics:
 \begin{equation}\label{weaun}
-dV_t= g(t,V_t, {\varphi_t}' \sigma_t , - \varphi_t^{2} ) dt - {\varphi_t}' \sigma_t dW_t +\varphi_t^{2} dM_t, 
 \end{equation}
 with $V_0=x$. Since $g$ is lipschitz with respect to $y$, this formulation makes sense. Indeed, setting $ f^1_t := \int_0^t {\varphi_t}' \sigma_t  dW_s +\varphi_t^{2} dM_s$, for each $\omega$, the deterministic function $
 (V_t^{Y_0, \varphi }(\omega))$ is defined as the unique solution of the following deterministic differential equation:
 \begin{align}\label{riun}
V_t^{x, \varphi }(\omega) = x-\int_0^t 
g(\omega, s,V_s^{x, \varphi }(\omega),{\varphi_s}' \sigma_s(\omega), - \varphi_s^{2}(\omega) )ds + f^1_t(\omega)
, \,\, 
0 \leq t \leq T.
 \end{align}
 
Note that, equivalently, setting $Z_t= {\varphi_t}' \sigma_t$ and
  $K_t= -  \varphi_t^2 $, the dynamics \eqref{weaun} of the wealth process $V_t$ can be written as follows:
 \begin{equation}\label{wea}
-dV_t= g(t,V_t, Z_t,K_t ) dt -  Z_t dW_t - K_t dM_t.
\end{equation}
In the following, our imperfect market model is denoted by ${\cal M}^g$.\\
Note that in the case of a perfect market (see \eqref{portfolio}), we have:
\begin{equation}\label{perfectlineaire}
g(t,y,z,k) = - r_t y -  \theta^1_t z  -    \theta^2_t  k \lambda_t
 ,
 \end{equation}
 which is a $\lambda$-admissible driver by Assumption \ref{hypo coeff}.

 %
\subsection{A nonlinear pricing 
system}
 Pricing and hedging  European options in the imperfect market ${\cal M}^g$ leads to BSDEs with nonlinear driver $g$ and a default jump. By   \cite[Proposition 2.6]{DQS4}, 
we have 
\begin{proposition} \label{existence} Let  $g$ be a $\lambda$-admissible driver, let $\xi \in {L}^2({\cal G_T})$.
There exists an unique solution  $(X(T, \xi), Z(T, \xi), K(T, \xi))$ (denoted simply by
 $(X, Z, K)$)  in $ \mathcal{S}^2 \times {\mathbb H}^2 \times  {\mathbb H}^2_{\lambda}$ of the following BSDE:
\begin{equation}\label{BSDE}
-dX_t = g(t,X_t, Z_t,K_t ) dt  -  Z_t dW_t - K_t dM_t; \quad
X_T=\xi.
\end{equation}
\end{proposition}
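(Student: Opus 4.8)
The plan is to obtain $(X,Z,K)$ as the unique fixed point of a contraction, following the classical Pardoux--Peng scheme adapted to the present filtration, which carries both a Brownian and a jump component. I would work in the Banach space $\mathcal{H} := \mathbb{H}^2 \times \mathbb{H}^2 \times \mathbb{H}^2_{\lambda}$ equipped, for a parameter $\beta>0$ to be fixed later, with the weighted norm
\[
\|(Y,Z,K)\|_\beta^2 := \mathbb{E}\Big[\int_0^T e^{\beta s}\big(|Y_s|^2 + |Z_s|^2\big)\,ds\Big] + \mathbb{E}\Big[\int_0^T e^{\beta s}|K_s|^2 \lambda_s\,ds\Big],
\]
which is equivalent to the unweighted one. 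The Lipschitz assumption \eqref{lip} is tailored precisely to this norm: the factor $\sqrt{\lambda_t}$ in front of $|k_1-k_2|$ ensures that, once squared, the $k$-contribution is measured in the $\mathbb{H}^2_\lambda$-norm, which is the natural norm for the jump coefficient.

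First I would define the solution map. Given $(Y,Z,K)\in\mathcal{H}$, set $g_s := g(s,Y_s,Z_s,K_s)$; by \eqref{lip} and $g(\cdot,0,0,0)\in\mathbb{H}^2$ one checks that $g_\cdot\in\mathbb{H}^2$. Consider the square-integrable martingale $m_t := \mathbb{E}\big[\xi + \int_0^T g_s\,ds \mid \mathcal{G}_t\big]$. By the representation Lemma~\ref{theoreme representation}, $m_t = m_0 + \int_0^t \tilde Z_s\,dW_s + \int_0^t \tilde K_s\,dM_s$ with $\tilde Z\in\mathbb{H}^2$ and $\tilde K\in\mathbb{H}^2_\lambda$. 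Setting $X_t := m_t - \int_0^t g_s\,ds = \mathbb{E}\big[\xi+\int_t^T g_s\,ds\mid\mathcal{G}_t\big]$ yields a triple $(X,\tilde Z,\tilde K)$ solving the affine BSDE $-dX_t = g_t\,dt - \tilde Z_t\,dW_t - \tilde K_t\,dM_t$ with $X_T=\xi$, and this solution is unique. This defines $\Psi(Y,Z,K):=(X,\tilde Z,\tilde K)$, a map from $\mathcal{H}$ into itself.

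The crux is the contraction estimate. Taking two inputs and their images, and writing $(\delta Y,\delta Z,\delta K)$ and $(\delta X, \delta \tilde Z, \delta \tilde K)$ for the respective differences, I would apply Itô's formula to $e^{\beta t}|\delta X_t|^2$ on $[0,T]$. Using $\delta X_T=0$, taking expectations, and exploiting the orthogonality of $dW$ and $dM$ together with the isometry $\mathbb{E}[\int_0^T |\delta\tilde K_s|^2 \,d\langle M\rangle_s] = \mathbb{E}[\int_0^T |\delta\tilde K_s|^2 \lambda_s\,ds]$ (which holds because $\langle M\rangle_t = \int_0^t\lambda_s\,ds$ and $\lambda$ vanishes after $\vartheta$), the martingale-part terms reconstruct exactly the weighted $\|\delta\tilde Z\|^2$ and $\|\delta\tilde K\|^2_\lambda$. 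The driver difference is controlled by \eqref{lip}, and after a Young inequality to absorb the cross terms the $\beta$-term on the left dominates the $\delta Y$-contribution. Choosing $\beta$ large enough yields $\|\Psi(Y,Z,K)-\Psi(Y',Z',K')\|_\beta^2 \le \tfrac12 \|(\delta Y,\delta Z,\delta K)\|_\beta^2$. The main obstacle is precisely this bookkeeping: one must verify that the $\lambda$-weight produced by the jump isometry matches the $\sqrt{\lambda}$ in the Lipschitz bound, so that the $k$-variable is handled uniformly in the (possibly vanishing) intensity.

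Finally, Banach's fixed point theorem gives a unique $(X,Z,K)\in\mathcal{H}$ with $\Psi(X,Z,K)=(X,Z,K)$, that is, the unique solution of \eqref{BSDE}. It remains to upgrade $X\in\mathbb{H}^2$ to $X\in\mathcal{S}^2$: from the equation, $\sup_t|X_t|$ is bounded in terms of $|\xi|$, $\int_0^T|g_s|\,ds$, and the suprema of the two stochastic integrals, and applying the Burkholder--Davis--Gundy inequality to the latter gives $\mathbb{E}[\sup_{0\le t\le T}|X_t|^2]<\infty$, which completes the proof.
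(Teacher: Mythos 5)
Your proof is correct and is essentially the argument the paper relies on: the paper does not prove Proposition \ref{existence} itself but cites \cite[Proposition 2.6]{DQS4}, whose proof is precisely this Picard--Banach contraction built on the martingale representation of Lemma \ref{theoreme representation}, with the $\sqrt{\lambda_t}$-Lipschitz condition matched to the $\mathbb{H}^2_\lambda$-isometry and a final BDG upgrade to $\mathcal{S}^2$ (the same estimates appear in the paper's Appendix, Proposition \ref{est}, for the doubly reflected case). No gaps.
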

Let us consider a European option with maturity $T$ and terminal payoff  $\xi \in {L}^2({\cal G_T})$ in this market model. Let $(X, Z, K)$ be 
the solution of BSDE \eqref{BSDE}.
The process $X$ is equal to  the wealth process associated with initial value $x= X_0$,
strategy $\varphi $ $= \Phi  ( Z,K)$ (where $\Phi$ is defined in Definition \ref{stbis})  that is
 $X= V^{X_0, \varphi}$.
Its initial value $X_0=X_0(T, \xi)$  is thus a sensible price 
 (at time $0$)  of the claim $\xi$ for the seller since this amount allows him/her to construct a trading 
strategy  $\varphi $ $\in {\mathbb H}^2 \times  {\mathbb H}^2_{\lambda}$, called {\em hedging strategy} (for the seller),  such that the value of the associated portfolio is equal to $\xi$ at time $T$. 
Moreover, by the uniqueness of the solution of BSDE \eqref{BSDE}, it is the unique price (at time $0$) which satisfies this hedging property. Similarly, $X_t=X_t(T, \xi)$ satisfies an analogous property at time $t$, and is called the {\em hedging price}
 at time $t$.
 %
This  leads  to a {\em nonlinear pricing} system, first introduced by El Karoui-Quenez (\cite{EQ96})
 in a Brownian framework (later called 
 {\em $g$-evaluation} in 
\cite{Peng2004}) and denoted by ${\cal E}^g$.
For each $S\in [0,T]$, for each $\xi \in {L}^2({\cal G_S})$ 
 the associated 
$g$-evaluation is defined by 
${\cal E}_{t,S}^{^{g}} (\xi):= X_t(S, \xi)$ for each $t \in [0,S]$.

In order to ensure the (strict) monotonicity and the no arbitrage property of the nonlinear pricing system ${\cal E}^g$, we make the following assumption (see 
 \cite[Section 3.3]{DQS4}).
\begin{assumption}\label{Royer} 
Assume that there exists a bounded map \begin{equation*}
 {\bf \gamma}:  [0,T]  \times \Omega\times \R^4   \rightarrow  \R \,; \, (\omega, t, y,z, k_1, k_2) \mapsto 
\gamma_t^{y,z,k_1,k_2}(\omega)
\end{equation*}
 ${\cal P } \otimes {\cal B}(\R^4) $-measurable and satisfying $ dP\otimes dt $-a.s.\,, for each $(y,z, k_1, k_2)$ $\in$ $\R^4$,
\begin{equation} \label{critere}
g( t,y,z, k_1)- g(t,y,z, k_2) \geq  \gamma_t^{y,z, k_1,k_2} (k_1 - k_2 )  \lambda_t,
\end{equation} 
and $P$-a.s.\,, for each $(y,z, k_1, k_2)$ $\in$ $\R^4$,
$\gamma_{t}^{y,z, k_1, k_2} > -1$.

\end{assumption}
\noindent 
This assumption is satisfied e.g. when $g(t,\cdot)$ is non decreasing with respect to $k$, 
or if 
 $g$ is ${\cal C}^1$ in $k$  with $ \partial_k g(t, \cdot) > - \lambda_t$ on $\{t \leq \vartheta\}$. In the special case of a perfect
  market, $g$ is given by \eqref{perfectlineaire}, which implies  that $ \partial_k g(t, \cdot)=- \theta^2_t\lambda_t$. In this case,
 Assumption \ref{Royer}  is thus equivalent to $ \theta^2_t<1$, which corresponds to the usual assumption \eqref{cth} made in the literature on default risk.
 
\begin{remark}\label{prixnul}
Suppose that $g(t,0,0,0)=0$ $dP\otimes dt $-a.s.\, 
Then the price of an option with a null payoff is equal to $0$, that is, for each $S\in [0,T]$, 
${\cal E}^{^{g}}_{\cdot, S} (0)= 0$ a.s.
Moreover,  by the comparison theorem for BSDEs with default jump  (see \cite[Theorem 2.17]{DQS4}), it follows that 
the nonlinear pricing system  ${\cal E}^{^{g}}$ is nonnegative, that is, for each $S\in [0,T]$, for all $\xi \in {L}^2({\cal G_S})$,
if $\xi \geq 0$ a.s., then ${\cal E}^{^{g}}_{\cdot, S} (\xi)\geq 0$ a.s.
\end{remark}

 \begin{definition}\label{defmart}
Let $Y \in S^2$. The process $(Y_t)$ is said to be a strong ${\cal E}$-supermartingale (resp. martingale)  if ${\cal E}_{\sigma ,\tau}(Y_{\tau}) \leq Y_{\sigma}$ (resp. $= Y_{\sigma}$) a.s. on $\sigma \leq \tau$,  for all $ \sigma, \tau \in \mathcal{T}_0$. 
\end{definition}

\begin{proposition}  \label{rima}
For each $S\in [0,T]$ and for each $\xi \in {L}^2({\cal G_S})$, the associated price (or $g$-evaluation)
${\cal E}_{t,S}^g (\xi)$ is an $\mathcal{E}^g$-martingale.
Moreover, for each $x \in \mathbb{R}$ and each portfolio strategy $\varphi$ $\in$ ${\mathbb H}^2\times 
{\mathbb H}^2_{\lambda}$, the associated wealth process $V^{x, \varphi}$ is an $\mathcal{E}^g$-martingale.
\end{proposition}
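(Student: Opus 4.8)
The plan is to reduce both assertions to the uniqueness part of Proposition \ref{existence}, through a single observation: any $S^2$ process that solves the $g$-dynamics \eqref{wea} is automatically an $\mathcal{E}^g$-martingale. I would first fix the meaning of $\mathcal{E}_{\sigma,\tau}^g$ for stopping times $\sigma\le\tau$ in $\mathcal{T}_0$: for $\eta\in L^2(\mathcal{G}_\tau)$, define $\mathcal{E}_{\sigma,\tau}^g(\eta):=X_\sigma$, where $(X,Z,K)$ is the solution of the BSDE \eqref{BSDE} on $[0,T]$ with terminal value $\eta$ and truncated driver $\mathbf{1}_{\{t\le\tau\}}\,g$. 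This driver is again $\lambda$-admissible and predictable, so the object is well defined by Proposition \ref{existence}; moreover $X_t=\eta$ and $Z_t=K_t=0$ on $[\tau,T]$ (since $\mathbb{E}[\eta\mid\mathcal{G}_t]=\eta$ there), so the definition is consistent with the deterministic-horizon $\mathcal{E}^g$.

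The crux is the following restriction argument. Suppose $(Y,Z,K)\in S^2\times\mathbb{H}^2\times\mathbb{H}^2_\lambda$ satisfies $-dY_t=g(t,Y_t,Z_t,K_t)\,dt-Z_t\,dW_t-K_t\,dM_t$ on $[0,T]$, and fix $\sigma\le\tau$ in $\mathcal{T}_0$. Setting $\hat Y_t:=Y_{t\wedge\tau}$, $\hat Z_t:=Z_t\mathbf{1}_{\{t\le\tau\}}$ and $\hat K_t:=K_t\mathbf{1}_{\{t\le\tau\}}$, one checks directly that $(\hat Y,\hat Z,\hat K)$ solves the BSDE with terminal value $Y_\tau$ and truncated driver $\mathbf{1}_{\{t\le\tau\}}\,g$: on $[0,\tau]$ it reproduces the given dynamics, while on $[\tau,T]$ both the driver and the martingale part vanish and $\hat Y$ stays equal to $Y_\tau$. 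By the uniqueness in Proposition \ref{existence}, $\mathcal{E}_{\sigma,\tau}^g(Y_\tau)=\hat Y_\sigma=Y_{\sigma\wedge\tau}=Y_\sigma$ on $\{\sigma\le\tau\}$, i.e. $Y$ is a strong $\mathcal{E}^g$-martingale (note we obtain equality, hence a martingale, not merely a supermartingale).

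Both statements now follow. For the first, the process $Y_t:=\mathcal{E}_{t,S}^g(\xi)=X_t(S,\xi)$ is by definition the solution of \eqref{BSDE} on $[0,S]$ with $X_S=\xi$, hence solves the $g$-dynamics and lies in $S^2$; applying the restriction argument (on $[0,S]$, for stopping times valued in $[0,S]$) yields $\mathcal{E}_{\sigma,\tau}^g(\mathcal{E}_{\tau,S}^g(\xi))=\mathcal{E}_{\sigma,S}^g(\xi)$, the claimed martingale property. For the second, with $Z_t=\varphi_t'\sigma_t\in\mathbb{H}^2$ and $K_t=-\varphi_t^2\in\mathbb{H}^2_\lambda$ (by Assumption \ref{hypo coeff} and $\varphi\in\mathbb{H}^2\times\mathbb{H}^2_\lambda$), the wealth $V^{x,\varphi}$ solves exactly \eqref{wea}; once its $S^2$-integrability is established, the same restriction argument gives $\mathcal{E}_{\sigma,\tau}^g(V_\tau^{x,\varphi})=V_\sigma^{x,\varphi}$ on $\{\sigma\le\tau\}$.

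The only points requiring genuine (though routine) work are the well-posedness of the truncated-driver BSDEs used to define $\mathcal{E}^g$ over random intervals, and the $S^2$-estimate for the forward wealth equation \eqref{wea}, which I would obtain by a standard Gronwall/a priori estimate using the Lipschitz property \eqref{lip} of $g$ and the square-integrability of $(Z,K)$. Neither carries conceptual difficulty; the entire proposition is a direct consequence of BSDE uniqueness.
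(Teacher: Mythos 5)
Your proposal is correct and follows essentially the same route as the paper: the paper's proof simply invokes ``the flow property of BSDEs'' (which is exactly the uniqueness-based restriction argument you spell out) for the first assertion, and identifies $V^{x,\varphi}$ with the solution of the BSDE with driver $g$ and terminal condition $V_T^{x,\varphi}$ for the second. You merely make explicit two points the paper leaves implicit — the meaning of $\mathcal{E}^g_{\sigma,\tau}$ over random intervals and the $S^2$-integrability of the wealth process — both of which are handled correctly.
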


\begin{proof} 
By the flow property of BSDEs, the solution of a BSDE with driver $g$ is an 
$\mathcal{E}^g$-martingale. The first assertion 
 follows. The second one is obtained by noting that 
$V^{x, \varphi}$ is the solution of the 
BSDE with driver $g$, terminal time $T$ and terminal condition $V_T^{x, \varphi}$.
\end{proof}

\begin{example}[Examples of market imperfections] \label{eximp} 

\
\begin{itemize}
\item Different borrowing and lending interest rates $R_t$ and $r_t$, with $R_t\geq r_t$:  the driver $g$ is then of the  form 
\begin{equation*} \label{largeinvestor}
g(t, V_t, \varphi_t '\sigma_t ,  - \varphi_t^{2}  )=- r_t V_t-\varphi_t^1 (\mu^1_t - r_t)- \varphi_t^2(\mu^2_t - r_t)   +  (R_t-r_t) (V_t -\varphi_t^1 - \varphi_t^2 )^-,
\end{equation*} 
where $\varphi_t^2$ vanishes after $\vartheta$ and 
 (see e.g. \cite{CviK}).

\item  Large investor seller:
Suppose  that the seller of the option  is a  large trader whose hedging strategy $\varphi$ and its associated cost $V$ may influence the market prices
  (see e.g. \cite{CM, BK2}). 
 Taking into account  the possible feedback effects in the market model, the large trader-seller  may suppose that the coefficients 
are of the form $\sigma_t (\omega)= \bar \sigma (\omega, t, V_t, \varphi_t)$ where $
\bar \sigma:   \Omega \times [0,T]  \times \mathbb{R}^3 \,; \, ( \omega,t, x, z,k) \mapsto 
 \bar \sigma ( \omega, t, x, z, k)
$ is a ${\cal P } \otimes {\cal B}({\mathbb R}^3) $-measurable map,
 and similarly for the other coefficients $ r$, $ \mu^1$, $ \mu^2$. The driver is thus of the form:
 $$g(t, V_t, \varphi '_t \bar \sigma_t (t,  V_t, \varphi_t), -\varphi_t^{2}  )=- \bar r (t,  V_t, \varphi_t) V_t-\varphi_t^1 \,
 (\bar \mu^1_t - \bar r_t) (t,  V_t, \varphi_t)-\varphi_t^2(\bar \mu^2_t - \bar r_t)   (t,  V_t, \varphi_t). $$ 
%
  Here, the map  $\Psi:$ $(\omega, t,y,\varphi) \mapsto (z,k)$ with $z={\varphi}' \bar \sigma_t(\omega,t,y,\varphi)$ and
 $k=- \varphi^2$ is assumed to be one to one with respect to $\varphi$, and such that its inverse $\Psi^{-1}_{\varphi} $ is ${\cal P}\otimes {\cal B} ({\bf R}^3)$-measurable.

\item  Taxes on risky investments profits:  Let  $\rho$ $\in$ $]0,1[$ represents an instantaneous tax coefficient (see e.g. \cite{EPQ01}). The driver is then given by:
$$g(t, V_t, \varphi '_t \sigma_t , \varphi_t^{2}\beta_t  )=- r_t V_t-\varphi_t^1 (\mu^1_t - r_t)- \varphi_t^2(\mu^2_t - r_t)  + \rho (\varphi^1_t + \varphi^2_t )^+.$$
\end{itemize}
\end{example}


\section{Pricing and hedging of game options in the imperfect market ${\cal M}^g$}\label{sec3}

Let $T>0$ be the terminal time.
Let $\xi$  and $\zeta$ be  adapted RCLL processes in ${S}^2$ with $\zeta_T= \xi_T$ a.s.\,and $\xi_t \leq \zeta_t$, $0 \leq t \leq T$ a.s.\,
 We suppose  that Mokobodzki's condition is  satisfied, that is  there exist two nonnegative RCLL supermartingales $H$ and $H'$ in $ \mathcal{S}^2$ such that:
\begin{equation*}\label{Moki}
\xi_t  \leq H_t -H'_t \leq \zeta_t  \quad 0\leq t \leq T \quad {\rm a.s.}
\end{equation*}


The game option consists for the seller to select a cancellation time $\sigma \in {\cal T}$ and for the buyer to choose an exercise time $\tau \in {\cal T}$, so that the seller pays to the buyer at time $\tau \wedge \sigma$
the amount 
\begin{equation*} \label{terminal}
I(\tau,\sigma):=\xi_{\tau} {\bf 1} _{\tau \leq \sigma}+ \zeta_{\sigma} {\bf 1} _{\sigma < \tau}.
\end{equation*}
  
 We now introduce the {\em seller's price} of the game option, denoted by $u_0$, defined as the infimum of the initial wealths which enable the seller to choose 
a cancellation time $\sigma$  and to 
construct a portfolio which will cover  his liability to pay the payoff to the buyer up to $\sigma$ no matter the exercise time chosen by the buyer.
\begin{definition}
For each initial wealth $x$, a {\em super-hedge} against the game option is a pair $(\sigma, \varphi)$ of a stopping time $\sigma \in {\cal T}$ and a portfolio strategy $ \varphi$ $\in$  ${\mathbb H}^2 \times  {\mathbb H}^2_{\lambda}$ such that \footnote{Note that condition \eqref{condA} is equivalent to 
$V^{x, \varphi}_{t \wedge \sigma} \geq I(t,\sigma), \;\; 0 \leq t \leq T \quad {\rm a.s.}$}
\begin{equation}\label{condA}
V^{x, \varphi}_{t } \geq \xi_t,  \; 0\leq t \leq \sigma \; \text{ a.s. and } V^{x, \varphi}_{\sigma } \geq \zeta_{\sigma}
 \text{ a.s.} 
\end{equation}
We denote by ${\cal S} (x)= {\cal S}_{\xi, \zeta}(x) $ the  set of all super-hedges associated with initial wealth $x$.

 We define the {\em seller's price} as
\begin{equation} \label{seller's price}
 u_0:= \inf \{x \in \R,\,\, \exists  (\sigma, \varphi) \in {\cal S} (x) \}.
 \end{equation}
 When the infimum in \eqref{seller's price} is attained,  
the amount $u_0$ allows the seller to be super-hedged, and is called  the {\em superhedging price}. 
\end{definition}
\begin{remark}\label{positive2}
We have $(0,0) \in {\cal S} (\zeta_0)$ since $V^{\zeta_0,0}_0= \zeta_0$ and $\zeta_0\geq \xi_0$. By \eqref{seller's price}, we thus get $u_0 \leq \zeta_0$.

Moreover, when $g(t,0,0,0)=0$ $dP\otimes dt $-a.s.\, and $\zeta \geq 0$, then we can restrict ourselves to nonnegative initial wealths, that is   $u_0= \inf \{x \geq 0,\,\, \exists  (\sigma, \varphi) \in {\cal S} (x) \}$.
Indeed, let $x \in \R$ be such that there exists $(\sigma, \varphi) \in {\cal S} (x) $. Then, $V^{x, \varphi}_{\sigma } \geq \zeta_{\sigma} \geq 0$ a.s. Now, by Proposition \ref{rima}
the wealth process $V^{x, \varphi}$ is an $\mathcal{E}^g$-martingale. We thus have $x= \mathcal{E}^g_{0, \sigma}(V^{x, \varphi}_{\sigma })$. Since the pricing system  $\mathcal{E}^g$ is nonnegative (see Remark \ref{prixnul}), it follows that $x= \mathcal{E}^g_{0, \sigma}(V^{x, \varphi}_{\sigma })\geq 0$. 
\end{remark}

We now provide  a dual formulation of the seller's price, expressed in terms of 
the nonlinear pricing system ${\cal E }^g$.%
We introduce the following definition:
 \begin{definition}
 We define the {\em $g$-value} of the game option as 
 \begin{equation} \label{prixg}
\inf_{\sigma \in \mathcal{T} }  \sup_{\tau \in \mathcal{T}} \cal{E}^g_{0,\tau \wedge \sigma}[I(\tau, \sigma)].
\end{equation}
\end{definition}

Our aim is to  show that the seller's price $u_0$ of the game option is equal to its $g$-value. To this purpose, we first give the following characterization of the  $g$-value.

\begin{proposition}\label{fairpricegame} (Characterization of the {\em $g$-value} of the game option)
Suppose that the payoffs $\xi$ and $\zeta$ are (only) RCLL. 
The {\em $g$-value} of the game option satisfies:
\begin{equation}\label{Dynkin}
 \inf_{\sigma \in \mathcal{T} }  \sup_{\tau \in \mathcal{T}} \cal{E}^g_{0,\tau \wedge \sigma}[I(\tau, \sigma)]= 
\sup_{\tau\in \mathcal{T} } \inf_{\sigma \in \mathcal{T}} 
\cal{E}^g_{0,\tau \wedge \sigma}[I(\tau, \sigma)]=Y_0,
\end{equation}
where $(Y, Z, K, A,A')$ is the unique solution in  ${S}^2 \times {\mathbb H}^2 \times {\mathbb H}^2_{\lambda}\times {\cal A}^2 \times {\cal A}^2$ of the doubly  reflected BSDE (DRBSDE)  
associated with driver $g$ and barriers $\xi, \zeta$, that is
 \begin{align}
   -dY_t &= g(t,Y_t,  Z_t, K_t )dt +dA_t -dA_t^{'}- Z_t  dW_t -K_t dM_t; \;  Y_T = \xi_T, \label{DRBSDE} \\
   \text{with} &  \nonumber \\
  &(i)   \;\; \xi_t \leq Y_t \leq \zeta_t,  \; 0 \leq t \leq T \text{ a.s.}, \nonumber \\
 &(ii)   \; \; dA_t \perp dA_t'  \quad \text {(i.e. the measures $dA_t$ and $dA_t'$ are mutually singular)}\nonumber \\
& (iii)   \displaystyle   \int_0^T (Y_t - \xi_t) dA^c_t = 0 \text{ a.s. and } \; \displaystyle    \int_0^T (\zeta_t-Y_t)dA^{'c}_t = 0 \text{ a.s.  } \nonumber \\
& \qquad  \Delta A_{\tau}^d=  \Delta A_{\tau}^d {\bf 1}_{\{Y_{\tau^-} = \xi_{\tau^-}\}} \text{ and }  \;  \Delta A_{\tau}^{'d}= \Delta A_{\tau}^{'d} {\bf 1}_{\{Y_{\tau^-} = \zeta_{\tau^-}\}} \text{ a.s. } \forall \tau \in {\cal T} \text{ predictable. } \nonumber
\end{align}
\end{proposition}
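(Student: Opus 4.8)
The plan is to exploit the link between the doubly reflected BSDE \eqref{DRBSDE} and the game via the strong $\mathcal{E}^g$-super/submartingale property of its solution $Y$. First, Mokobodzki's condition together with the results on doubly reflected BSDEs with default recalled in the Appendix (see also \cite{DQS2}) guarantees that \eqref{DRBSDE} has a unique solution $(Y,Z,K,A,A')$, so $Y_0$ is well defined. Since for any family the inequality $\sup_\tau\inf_\sigma \leq \inf_\sigma\sup_\tau$ holds trivially, it suffices to establish the two bounds
\[
\inf_{\sigma\in\mathcal{T}}\sup_{\tau\in\mathcal{T}} \mathcal{E}^g_{0,\tau\wedge\sigma}[I(\tau,\sigma)] \leq Y_0 \leq \sup_{\tau\in\mathcal{T}}\inf_{\sigma\in\mathcal{T}} \mathcal{E}^g_{0,\tau\wedge\sigma}[I(\tau,\sigma)].
\]
The crucial observation is the behaviour of $Y$ between the reflection times. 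On any stochastic interval where $A'$ does not increase, \eqref{DRBSDE} reduces to a $g$-BSDE perturbed by the nondecreasing process $A$, so by Definition \ref{defmart} the process $Y$ is a strong $\mathcal{E}^g$-supermartingale there; symmetrically, where $A$ does not increase, $Y$ is a strong $\mathcal{E}^g$-submartingale.

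For the left inequality I would let the seller select $\sigma^{*}=\inf\{t\geq 0:\,Y_t=\zeta_t\}$. By the minimality conditions (iii) of \eqref{DRBSDE}, $A'$ is flat on $[0,\sigma^{*}[$, hence $Y$ is a strong $\mathcal{E}^g$-supermartingale on $[0,\sigma^{*}]$, giving $Y_0\geq \mathcal{E}^g_{0,\tau\wedge\sigma^{*}}[Y_{\tau\wedge\sigma^{*}}]$ for every $\tau$. Using $Y_\tau\geq\xi_\tau$ on $\{\tau\leq\sigma^{*}\}$ (constraint (i)) and $Y_{\sigma^{*}}=\zeta_{\sigma^{*}}$ on $\{\sigma^{*}<\tau\}$ (the barrier is hit), one gets $Y_{\tau\wedge\sigma^{*}}\geq I(\tau,\sigma^{*})$, so by monotonicity of $\mathcal{E}^g$ (comparison theorem, cf. Remark \ref{prixnul}) we obtain $Y_0\geq \mathcal{E}^g_{0,\tau\wedge\sigma^{*}}[I(\tau,\sigma^{*})]$. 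Taking the supremum over $\tau$ and then the infimum over $\sigma$ yields $\inf_\sigma\sup_\tau \mathcal{E}^g_{0,\tau\wedge\sigma}[I(\tau,\sigma)]\leq Y_0$.

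The right inequality is symmetric: the buyer selects $\tau^{*}=\inf\{t\geq 0:\,Y_t=\xi_t\}$; on $[0,\tau^{*}[$ the process $A$ is flat, so $Y$ is a strong $\mathcal{E}^g$-submartingale on $[0,\tau^{*}]$ and $Y_0\leq \mathcal{E}^g_{0,\tau^{*}\wedge\sigma}[Y_{\tau^{*}\wedge\sigma}]$ for every $\sigma$. Using $Y_{\tau^{*}}=\xi_{\tau^{*}}$ on $\{\tau^{*}\leq\sigma\}$ and $Y_\sigma\leq\zeta_\sigma$ on $\{\sigma<\tau^{*}\}$ gives $Y_{\tau^{*}\wedge\sigma}\leq I(\tau^{*},\sigma)$, hence $Y_0\leq \mathcal{E}^g_{0,\tau^{*}\wedge\sigma}[I(\tau^{*},\sigma)]$; taking the infimum over $\sigma$ and the supremum over $\tau$ gives $Y_0\leq\sup_\tau\inf_\sigma \mathcal{E}^g_{0,\tau\wedge\sigma}[I(\tau,\sigma)]$. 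Combining the two bounds with the trivial inequality forces equality everywhere and identifies the common value as $Y_0$.

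The main obstacle is the mere RCLL regularity of $\xi$ and $\zeta$. Since the barriers may jump, the hitting times $\sigma^{*},\tau^{*}$ need not realize the equalities $Y_{\sigma^{*}}=\zeta_{\sigma^{*}}$ and $Y_{\tau^{*}}=\xi_{\tau^{*}}$ exactly, so they are in general only $\varepsilon$-optimal. I would therefore replace them by the approximating stopping times $\sigma^{\varepsilon}=\inf\{t:\,Y_t\geq\zeta_t-\varepsilon\}$ and $\tau^{\varepsilon}=\inf\{t:\,Y_t\leq\xi_t+\varepsilon\}$, repeat the supermartingale and submartingale estimates up to an error term of order $\varepsilon$ controlled through a uniform estimate on $\mathcal{E}^g$, and use the predictable-jump conditions on $A^d,A'^d$ in (iii) to treat the predictable (inaccessible) jumps correctly, before letting $\varepsilon\to 0$. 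This $\varepsilon$-approximation combined with the strong $\mathcal{E}^g$-super/submartingale characterization of reflected solutions is exactly the content of the generalized Dynkin game theory of \cite{DQS2}, which I would invoke for the delicate limiting arguments.
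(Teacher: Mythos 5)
Your proposal is correct and follows essentially the same route as the paper: the paper's proof simply invokes existence/uniqueness for the DRBSDE (Appendix) and then cites the argument of \cite[Theorem 4.9]{DQS2}, which is precisely the strong $\mathcal{E}^g$-super/submartingale mechanism with the $\varepsilon$-optimal stopping times $\sigma^{\varepsilon}=\inf\{t:Y_t\geq\zeta_t-\varepsilon\}$ and $\tau^{\varepsilon}=\inf\{t:Y_t\leq\xi_t+\varepsilon\}$ that you sketch (and to which you yourself defer for the limiting step). One small refinement: by right-continuity one does have $Y_{\sigma^*}=\zeta_{\sigma^*}$ at the exact hitting time; the real reason the $\varepsilon$-times are needed for merely RCLL barriers is to guarantee $\Delta A'_{\sigma^{\varepsilon}}=0$ (via the predictable-jump condition in (iii)), so that $A'$ vanishes on the closed interval and the supermartingale bound survives up to and including the terminal time.
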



%
Using the terminology introduced in \cite{DQS2}, the first equality in \eqref{Dynkin}
means that the {\em generalized Dynkin game} associated with the criterium $\cal{E}^g_{0,\tau \wedge \sigma}[I(\tau, \sigma)]$ is {\em fair}.

When $g$ is linear and when there is no default, this  
corresponds to a well-known result on classical Dynkin games and linear DRBSDEs (see e.g. \cite{CK,H}). 

\begin{proof} The proof of existence and uniqueness of a solution  $(Y, Z, K, A,A')$ of the DRBSDE \eqref{DRBSDE} is given in  appendix. 
Proceeding as in the proof of  \cite[Theorem 4.9]{DQS2} which was given in the framework of a random Poisson measure, we can prove that for each $S \in {\cal T}$,
$
Y_S= {\rm ess} \inf_{\sigma \in {\cal T}_S }\,{\rm ess} \sup_{\tau \in {\cal T}_S} \, \cal{E}^g_{S,\tau \wedge \sigma}[I(\tau, \sigma)]=
{\rm ess} \sup_{\tau \in {\cal T}_S} \, {\rm ess} \inf_{\sigma \in {\cal T}_S} \, \cal{E}^g_{S,\tau \wedge \sigma}[I(\tau, \sigma)]$ a.s.
The results of the proposition then follow by taking $S=0$. 
\end{proof}

\begin{proposition}\label{lusc}
Let $(Y, Z, K, A,A')$ be the unique solution of the DRBSDE \eqref{DRBSDE}. 
 When $\xi$ (resp. $- \zeta$) is left-u.s.c. along stopping times, then 
$A$ (resp. $A'$) is continuous.
\end{proposition}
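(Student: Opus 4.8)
The plan is to show that the predictable nondecreasing process $A$ has no jumps, from which continuity follows. Since $A \in {\cal A}^2$ is predictable, all of its jump times are exhausted by predictable stopping times, so it suffices to prove that $\Delta A_\tau = 0$ a.s. for every predictable $\tau \in {\cal T}$. I would argue by contradiction on the event $\{\Delta A_\tau > 0\}$, extracting information from the jump of the DRBSDE \eqref{DRBSDE} at $\tau$ together with the Skorokhod conditions $(i)$--$(iii)$.

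The key computation is to evaluate the jump of $Y$ at a predictable $\tau$. First I would observe that the martingale parts do not jump at $\tau$: $\Delta W_\tau = 0$ since $W$ is continuous, and $\Delta M_\tau = 0$ a.s. because $\vartheta$ is totally inaccessible, its compensator $\Lambda_t = \int_0^t \lambda_s\,ds$ being absolutely continuous, so that $P(\vartheta = \tau) = 0$. Hence from \eqref{DRBSDE}, $\Delta Y_\tau = -\Delta A_\tau + \Delta A'_\tau$ a.s. By the mutual singularity $(ii)$, on $\{\Delta A_\tau > 0\}$ we have $\Delta A'_\tau = 0$, so $\Delta Y_\tau = -\Delta A_\tau \leq 0$; moreover the jump condition in $(iii)$ forces $Y_{\tau^-} = \xi_{\tau^-}$ on this set. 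Combining $Y_\tau = \xi_{\tau^-} - \Delta A_\tau$ with the lower-barrier constraint $Y_\tau \geq \xi_\tau$ from $(i)$ yields
\[
0 < \Delta A_\tau \leq \xi_{\tau^-} - \xi_\tau \quad \text{a.s. on } \{\Delta A_\tau > 0\}.
\]

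Now I would invoke the left-upper-semicontinuity of $\xi$ along stopping times, which at a predictable time reads $\xi_{\tau^-} \leq \xi_\tau$ a.s. (for an RCLL process the limit along any announcing sequence is $\xi_{\tau^-}$, so upper semicontinuity from the left amounts to this inequality). This gives $\xi_{\tau^-} - \xi_\tau \leq 0$, contradicting the displayed strict inequality. Hence $P(\Delta A_\tau > 0) = 0$ for every predictable $\tau$, so $A^d \equiv 0$ and $A$ is continuous. The statement for $A'$ is entirely symmetric: at a predictable $\tau$ with $\Delta A'_\tau > 0$ one has $\Delta A_\tau = 0$, $Y_{\tau^-} = \zeta_{\tau^-}$ and $\Delta Y_\tau = \Delta A'_\tau > 0$, so $Y_\tau = \zeta_{\tau^-} + \Delta A'_\tau$; the upper-barrier constraint $Y_\tau \leq \zeta_\tau$ together with the left u.s.c. of $-\zeta$ (i.e. $\zeta_\tau \leq \zeta_{\tau^-}$) forces $\Delta A'_\tau = 0$.

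The main obstacle I anticipate is the careful handling of the jump structure rather than the algebra: justifying that it is enough to test predictable times (jumps of a predictable process are covered by predictable stopping times), and rigorously establishing $\Delta M_\tau = 0$ at such times via the total inaccessibility of the default time $\vartheta$. Once these measure-theoretic points are in place, the remaining ingredients are the short comparison above and the correct reading of \emph{left-u.s.c. along stopping times} as $\xi_{\tau^-} \leq \xi_\tau$ (resp. $\zeta_\tau \leq \zeta_{\tau^-}$) at predictable times.
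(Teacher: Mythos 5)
Your proof is correct and follows essentially the same route as the paper: both arguments reduce to a predictable stopping time $\tau$, use that the martingale parts of \eqref{DRBSDE} do not jump there so that $\Delta Y_\tau=\Delta A'_\tau-\Delta A_\tau$, invoke the Skorokhod jump conditions to localize the jump on $\{Y_{\tau^-}=\xi_{\tau^-}\}$ (resp. $\{Y_{\tau^-}=\zeta_{\tau^-}\}$), and conclude from the barrier inequality together with $\xi_{\tau^-}\leq\xi_\tau$ (resp. $\zeta_\tau\leq\zeta_{\tau^-}$). The only difference is presentational: you make explicit the total inaccessibility of $\vartheta$ and the exhaustion of the jumps of $A$ by predictable times, which the paper leaves implicit.
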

\begin{proof} 
Note first that for each predictable stopping time $\tau$, by \eqref{DRBSDE}, we have 
$(\Delta Y_{\tau})^+ = \Delta A'_{\tau}$ a.s. and $(\Delta Y_{\tau})^- = \Delta A_{\tau}$ a.s. 
Suppose that now that $- \zeta$ is left-u.s.c. along stopping time. Let $\tau$ be a 
predictable stopping time.
 Using the equality
$\Delta A'_{\tau}=(\Delta Y_{\tau})^+$ together with the Skorokhod conditions satisfied by 
$A'$, we get 
\begin{equation} \label{abcd}
\Delta A'_{\tau}= {\bf 1}_{ \{Y_{\tau^-} = \zeta_{\tau^-}\} }( Y_{\tau}- Y_{\tau^-})^+= 
{\bf 1}_{ \{Y_{\tau^-} = \zeta_{\tau^-}\} }( Y_{\tau}- \zeta_{\tau^-})^+.
\end{equation}
Now, since $- \zeta$ is left-u.s.c. along stopping times, we have 
$Y_{\tau}- \zeta_{\tau^-} \leq Y_{\tau}- \zeta_{\tau} \leq 0$ a.s.\,, where the last equality follows 
from the inequality $Y \leq \zeta$. Using \eqref{abcd}, we derive that $\Delta A'_{\tau}= 0$ a.s.
It follows that 
$A'$ is continuous. By similar arguments, one can show that if $\xi$ is left-u.s.c. along stopping times, then 
$A$ is continuous.
\end{proof}

Using the above propositions, we can now show the dual formulation for the seller's price. We first consider the simpler case when $\zeta$ is left lower-semicontinuous (or equivalently $-\zeta$ is left-u.s.c.) along stopping times. In this case, we prove below that the seller's price is equal to the $g$-value 
and that  the infimum in \eqref{seller's price} is attained. This implies that the seller's price is the {\it super-hedging} price. Moreover, a super-hedge strategy is provided via the solution of the associated DRBSDE.

\begin{theorem}[Seller's/super-hedging price and super-hedge of the game option] \label{superhedging} 

Suppose that  $\zeta$ is left lower-semicontinuous along stopping times (and $\xi$ is only RCLL). 
The seller's price  \eqref{seller's price} of the game option coincides with the {\em $g$-value} of the game option, that is 
\begin{equation}\label{u0}
u_0=\inf_{\sigma \in \mathcal{T} }  \sup_{\tau \in \mathcal{T}} \cal{E}^g_{0,\tau \wedge \sigma}[I(\tau, \sigma)]= 
\sup_{\tau\in \mathcal{T} } \inf_{\sigma \in \mathcal{T}} 
\cal{E}^g_{0,\tau \wedge \sigma}[I(\tau, \sigma)].
\end{equation}
Let $(Y, Z, K, A,A')$ is the  solution of the 
DRBSDE  
associated with driver $g$ and barriers $\xi, \zeta$. The seller's price is equal to $Y_0$, 
  that is 
 $$u_0=Y_0.$$
Moreover, the infimum in \eqref{seller's price} is attained. The seller's price is thus the {\em super-hedging price} and 
there exists a super-hedge strategy $(\sigma^*, \varphi^*)$ associated with the initial amount $u_0$, given by
\begin{equation}\label{sigmaetoile}
\sigma^*:= \inf \{ t \geq 0,\,\, Y_t = \zeta_t\} \quad {\rm and} \quad \varphi^*:=\Phi(Z,K),
\end{equation}
 where $\Phi$ is defined in Definition \ref{stbis}.
\end{theorem}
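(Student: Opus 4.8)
The plan is to establish the two inequalities $u_0\ge Y_0$ and $u_0\le Y_0$ separately, the second one by exhibiting the explicit super-hedge $(\sigma^*,\varphi^*)$ of \eqref{sigmaetoile}. The chain of equalities in \eqref{u0} then follows at once from Proposition \ref{fairpricegame}, which already identifies the $g$-value with $Y_0$ and establishes the fairness of the associated generalized Dynkin game; so the only genuinely new content is $u_0=Y_0$ together with the attainability of the infimum and the construction of a super-hedge.

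For the lower bound $u_0\ge Y_0$, I would take any initial wealth $x$ admitting a super-hedge $(\sigma,\varphi)\in{\cal S}(x)$ and show $x\ge Y_0$. The super-hedging conditions \eqref{condA} give, for every $\tau\in{\cal T}$, that $V^{x,\varphi}_{\tau\wedge\sigma}\ge I(\tau,\sigma)$ a.s.: on $\{\tau\le\sigma\}$ this reads $V^{x,\varphi}_{\tau}\ge\xi_\tau$, while on $\{\sigma<\tau\}$ it reads $V^{x,\varphi}_{\sigma}\ge\zeta_\sigma$. By Proposition \ref{rima} the wealth process $V^{x,\varphi}$ is an ${\cal E}^g$-martingale, so $x={\cal E}^g_{0,\tau\wedge\sigma}(V^{x,\varphi}_{\tau\wedge\sigma})$; the monotonicity of ${\cal E}^g$ (which follows from the comparison theorem for BSDEs with default jump) then yields $x\ge{\cal E}^g_{0,\tau\wedge\sigma}(I(\tau,\sigma))$. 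Taking the supremum over $\tau$, and noting that this particular $\sigma$ is a competitor for the infimum, I obtain $x\ge\inf_{\sigma'}\sup_\tau{\cal E}^g_{0,\tau\wedge\sigma'}(I(\tau,\sigma'))=Y_0$; taking the infimum over all admissible $x$ gives $u_0\ge Y_0$.

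For the upper bound I would check that $(\sigma^*,\varphi^*)\in{\cal S}(Y_0)$. First, $\varphi^*=\Phi(Z,K)\in{\mathbb H}^2\times{\mathbb H}^2_\lambda$ since $(Z,K)$ is part of the DRBSDE solution, and $\sigma^*\le T$ because $Y_T=\xi_T=\zeta_T$. The key observation is that $A'\equiv 0$ on $[0,\sigma^*]$: by Proposition \ref{lusc} the left lower-semicontinuity of $\zeta$ makes $A'$ continuous, and since $Y_t<\zeta_t$ on $[0,\sigma^*)$ the Skorokhod condition forces $A'$ to remain constant there, hence, by continuity, up to and including $\sigma^*$. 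Consequently, on $[0,\sigma^*]$ the process $Y$ solves $-dY_t=g(t,Y_t,Z_t,K_t)\,dt+dA_t-Z_t\,dW_t-K_t\,dM_t$ with $A$ nondecreasing, whereas $V:=V^{Y_0,\varphi^*}$ solves the same equation with the same $(Z,K)$, the same initial value, but without the $dA_t$ term. Subtracting, the stochastic integrals cancel, and linearizing $g$ in its $y$-variable (Lipschitz, with a bounded coefficient $\delta$) gives the pathwise linear equation $d(Y_t-V_t)=-\delta_t(Y_t-V_t)\,dt-dA_t$ with $Y_0-V_0=0$; solving it with the integrating factor $\exp(\int_0^t\delta_s\,ds)$ and using $dA\ge0$ shows $Y_t-V_t\le 0$, i.e.\ $V_t\ge Y_t$ on $[0,\sigma^*]$. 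Combined with the constraint $Y\ge\xi$ and with $Y_{\sigma^*}=\zeta_{\sigma^*}$, this yields $V_t\ge\xi_t$ on $[0,\sigma^*]$ and $V_{\sigma^*}\ge\zeta_{\sigma^*}$, so $(\sigma^*,\varphi^*)\in{\cal S}(Y_0)$ and $u_0\le Y_0$, with the infimum attained.

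The main obstacle is the step $A'\equiv 0$ on $[0,\sigma^*]$: this is exactly where the left lower-semicontinuity of $\zeta$ enters, through the continuity of $A'$ provided by Proposition \ref{lusc}. If $A'$ were allowed a jump at $\sigma^*$---as can happen when $\zeta$ is merely RCLL---the comparison $V\ge Y$ would break down at the cancellation time and the explicit super-hedge would fail, which is precisely why the clean ``infimum attained'' conclusion needs the extra regularity on $\zeta$. A secondary technical point is verifying $Y_{\sigma^*}=\zeta_{\sigma^*}$ from the first-hitting definition of $\sigma^*$ together with the right-continuity of $Y$ and $\zeta$.
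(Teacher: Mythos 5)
Your proposal is correct and follows essentially the same route as the paper: the lower bound $u_0\ge Y_0$ via the $\mathcal{E}^g$-martingale property of the wealth (Proposition \ref{rima}) and monotonicity of $\mathcal{E}^g$ combined with Proposition \ref{fairpricegame}, and the upper bound by showing $A'_{\sigma^*}=0$ (continuity of $A'$ from Proposition \ref{lusc} plus the Skorokhod condition) and then comparing the forward equations for $Y$ and $V^{Y_0,\varphi^*}$ pathwise. Your inline linearization with an integrating factor is exactly the content of the paper's Lemma \ref{classique}, so there is no substantive difference.
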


\begin{remark}
In the  special case of a perfect market model, our result gives that $u_0$ is characterized as 
 the value function of a classical Dynkin game problem, which is shown in the literature (see e.g. \cite{Kifer,H}) under an additional regularity assumption on $\xi$, by using an actualization procedure, a change of probability
 measure, and some results on classical Dynkin games. Moreover, in this particular case, the characterization of $u_0$ and of the super-hedge via the solution of a {\em linear} doubly reflected BSDE  are shown in 
\cite{H} by using  the links  between {\em linear} DRBSDEs and classical Dynkin games (first provided in \cite{CK}). 
To solve the  problem in the case of an imperfect market model, when $g$ is nonlinear, 
we need to use other arguments, in particular some properties of the nonlinear $g$-evaluation ${\cal E}^g$, comparison theorems  for backward SDEs and for forward differential equations, and the links between {\em nonlinear} doubly reflected BSDEs and {\em generalized Dynkin
 games} (first provided in \cite{DQS2}).
  \end{remark}

\begin{proof}  
By Proposition \ref{fairpricegame}, the {\em $g$-value} of the game option is equal $Y_0$. 
Note that $u_0= \inf \mathcal{H}$,
%
where $\mathcal{H}$  is the set of initial capitals which allow the seller to be {\em super-hedged}, that is
$$\mathcal{H}= \{ x \in \mathbb{R}: \exists ( \sigma,\varphi) \in {\cal S}(x) \}. $$
Let us show that $Y_0 \geq u_0$. It is sufficient to prove that there exists 
$( \sigma^*,\varphi^*) \in {\cal S}(Y_0)$.
By Proposition \ref{lusc}, since $- \zeta$ is left-u.s.c. along stopping times, the process  
$A'$ is continuous.
Let $\sigma^*$ be defined as in \eqref{sigmaetoile}.  We have a.s. that $Y_t< \zeta_t$ for each $t \in$ $[0, \sigma^*[$.  Since $Y$ is solution of the DRBSDE \eqref{DRBSDE}, the process $A'$ is thus constant on $[0, \sigma^*[$ a.s.\, and even on $[0, \sigma^*]$ by continuity. Hence, $A'_{\sigma^*}=A'_0=0$ a.s. 
 For almost every $\omega$, we thus have
\begin{align}\label{forwardb}
Y_t(\omega)=Y_0-\int_0^t g(s,\omega, Y_s(\omega),Z_s(\omega),K_s(\omega))ds+f_t(\omega)-A_t(\omega), \,\, 
0 \leq t \leq \sigma^*(\omega).
\end{align}
where $f_t:= \int_0^t Z_sdW_s+
\int_0^tK_sdM_s$.
Now, the wealth $V_.^{Y_0, \varphi ^*}$, associated with the initial capital $Y_0$ and the financial strategy 
$\varphi^*:=\Phi(Z,K)$ satisfies for almost every $\omega$ the forward deterministic differential equation:
\begin{align}\label{ri}
V_t^{Y_0, \varphi ^*}(\omega) = Y_0-\int_0^t g(s,V_s^{Y_0, \varphi ^*}(\omega),Z_s(\omega),K_s(\omega))ds + f_t(\omega)
, \,\, 
0 \leq t \leq T.
 \end{align}
%
Since $A$ is non decreasing, by applying the classical comparison result on $[0, \sigma^*(\omega)]$ (see e.g.  Lemma \ref{classique}) for the two forward differential equations \eqref{forwardb} and \eqref{ri}, with the same  coefficient $(s,x) \mapsto -g(s,\omega, x,Z_s(\omega),K_s(\omega))$, we get
\begin{equation*}
V_t^{Y_0, \varphi ^*} \geq Y_t  \geq \xi_t, \,\,0 \leq t \leq \sigma^* \quad {\rm a.s.}\,,
\end{equation*}
 where the last inequality follows from the inequality $Y \geq \xi$. We also have
\begin{equation*}
  V_{\sigma^*}^{Y_0, \varphi ^*} \geq Y_{\sigma^*} = \zeta_{\sigma^*} \quad {\rm a.s.},
  \end{equation*}
 where the last equality follows from
   the definition of the stopping time $\sigma^*$  and the right-continuity of $Y$ and $\zeta$. Hence, 
   \begin{equation}\label{ay}
  ( \sigma^*,\varphi^*) \in {\cal S}(Y_0), 
  \end{equation}
which implies that $Y_0 \in \mathcal{H}$. We thus get the inequality $Y_0 \geq u_0$. 
  
It remains to show that $u_0\geq Y_0$.
%
Since $Y_0= \inf_{\sigma \in \mathcal{T}} \sup_{\tau \in \mathcal{T}} \mathcal{E}_{0,T}^g[I(\tau, \sigma)]$ (by Proposition \ref{fairpricegame}),  it is sufficient to show that
\begin{equation}\label{dd}
u_0 \geq  \inf_{\sigma \in \mathcal{T}} \sup_{\tau \in \mathcal{T}} \mathcal{E}_{0,T}^g[I(\tau, \sigma)].
\end{equation}
Let $x \in \mathcal{H}$. There exists $( \sigma,\varphi) \in {\cal S}(x)$, that is 
  a pair $(\sigma, \varphi)$ of a stopping time $\sigma \in {\cal T}$ and a portfolio strategy $ \varphi$ $\in$  ${\mathbb H}^2 \times  {\mathbb H}^2_{\lambda}$ such that 
  $V^{x, \varphi}_{t } \geq \xi_t$, $0\leq t \leq \sigma$ a.s. and 
  $V^{x, \varphi}_{\sigma } \geq \zeta_{\sigma}$ a.s.\,, which implies that for 
all $\tau \in \mathcal{T}$ we have
$$V^{x, \varphi}_{\tau \wedge \sigma} \geq I(\tau, \sigma) \quad {\rm a.s.}$$
By taking the $\mathcal{E}^g$-evaluation in the above inequality and then the supremum on 
$\tau \in \mathcal{T}$, using the monotonicity of the $\mathcal{E}^g$-evaluation and the
$\mathcal{E}^g$-martingale property of the wealth process $V^{x, \varphi}$ (see Proposition \ref{rima}),
we obtain $x= \mathcal{E}^g _{0,\tau \wedge \sigma}[V^{x, \varphi}_{\tau \wedge \sigma}] \geq 
\mathcal{E}_{0, \tau \wedge \sigma}^g [ I(\tau, \sigma)]$, for each $\tau \in \mathcal{T}$. By taking 
the supremum over $\tau \in \mathcal{T}$, and then the infimum over $\sigma \in \mathcal{T}$, we get
\begin{align*}
x  \geq 
\inf_{\sigma \in \mathcal{T}} \sup_{\tau \in \mathcal{T}}
\mathcal{E}_{0, \tau \wedge \sigma}^g [ I(\tau, \sigma)].
\end{align*}
This inequality holds for any $x \in \mathcal{H}$. By taking the infimum over $x \in \mathcal{H}$, 
we obtain the inequality \eqref{dd}, which yields that $u_0\geq Y_0$. Since $Y_0 \geq u_0$, we get $Y_0=u_0$. Moreover, this equality together with 
\eqref{ay} implies that $( \sigma^*,\varphi^*) \in {\cal S}(u_0).$
The proof is thus complete. 
\end{proof}



\begin{remark} \label{simi}
%
%
Let $\hat \sigma$ be a stopping time such that $A'_{\hat \sigma}=0$ a.s. and 
 $Y_{\hat \sigma}=\zeta_{\hat \sigma}$ a.s.
By the above proof, the pair $( \hat \sigma,\varphi^*)$ is a super-hedge for the initial amount $u_0$, that is  $( \hat \sigma,\varphi^*)
 \in {\cal S}(u_0).$ 
 For example, under the assumption of  Theorem \ref{superhedging} (that is, the  left-u.s.c. property along stopping times of $-\zeta$),  the stopping time 
 $
\bar \sigma:= \inf \{t \geq 0: \,\,\, A'_t >0\}
$
satisfies these two equalities.
  Note that  $\bar \sigma \geq \sigma^*$. In general, the equality does not hold.
\end{remark}

 \begin{remark}
 Note that under the assumption of Theorem \ref{superhedging}, there does not necessarily exist a saddle point for the {\em generalized Dynkin game} \eqref{Dynkin}. However, if we suppose additionally that $\xi$ is left-u.s.c. along stopping time, there exists a saddle point. More precisely, in this case, 
by \cite[Theorem 4.7]{DQS2}, the pair $(\tau^*, \sigma^*)$, with $\sigma^*$ defined in \eqref{sigmaetoile} and 
$
\tau^*:= \inf \{t \geq 0: \,\,\, Y_t = \xi_t\}$, is a  saddle point for the {\em generalized Dynkin game} \eqref{Dynkin}, that is, 
  for all $(\tau, \sigma) \in \mathcal{T}^2$ we have
$$ {\cal E}^g_{0, \tau \wedge \sigma^*}[I(\tau, \sigma^*)] \leq Y_0={\cal E}^g_{0, \tau^* \wedge \sigma^*}
[I(\tau^*, \sigma^*)] \leq {\cal E}^g_{0, \tau^* \wedge \sigma}[I(\tau^*, \sigma)],$$
which implies that $\tau^*$ is optimal for the optimal stopping problem
 $\sup_{\tau \in \mathcal{T}} \mathcal{E}^g[I(\tau, \sigma^*)]$. \\
The same properties also hold for the pair $(\bar \tau, \bar \sigma)$ where 
$\bar \tau:=  \inf \{t \geq 0: \, A_t >0\}$. 
\end{remark}

We consider now the general case  when $\zeta$ is only RCLL (as $\xi$). In this case,  the seller's price $u_0$ is still equal to the $g$-value but it does not necessarily allow 
the seller to build a {\em super-hedge} against the option. We introduce the 
 definition of $\varepsilon$-{\em super-hedges}: 
\begin{definition}
For each initial wealth $x$ and for each $\varepsilon>0$, an $\varepsilon$-{\em super-hedge} against the game option is a pair $(\sigma, \varphi)$ of a stopping time $\sigma \in {\cal T}$ and a risky-assets strategy  $ \varphi$ $\in$  ${\mathbb H}^2 \times  {\mathbb H}^2_{\lambda}$ such that
\begin{equation*}
 V_t^{x, \varphi} \geq \xi_t, \,\,0 \leq t \leq \sigma\,\,\, {\rm a.s.} \quad {\rm and }\quad
V^{x, \varphi}_{\sigma}  \geq   \zeta_{\sigma}- 
\varepsilon  \,\,\, {\rm a.s.}
\end{equation*}
In other terms, by investing the initial capital amount $x$ in the market following the risky-assets strategy $\varphi$, the seller is completely hedged before $\sigma$, and at the cancellation time $\sigma$, he is hedged up to an amount of $\varepsilon$.

\end{definition}

We prove below  that when $\zeta$ and $\xi$ are only RCLL, the seller's price $u_0$ is equal 
to the $g$-value and that there exits an $\varepsilon$-{\em super-hedge} for the game option.

%
\begin{theorem}[{\em Seller's price} and $\varepsilon$-{\em super-hedge} of the game option] \label{epsil} 

Suppose that the process $\zeta$ and $\xi$ are only RCLL. 
The seller's price  \eqref{seller's price} of the game option coincides with the {\em $g$-value} of the game option, that is 
$$u_0=\inf_{\sigma \in \mathcal{T} }  \sup_{\tau \in \mathcal{T}} \cal{E}^g_{0,\tau \wedge \sigma}[I(\tau, \sigma)]= 
\sup_{\tau\in \mathcal{T} } \inf_{\sigma \in \mathcal{T}} 
\cal{E}^g_{0,\tau \wedge \sigma}[I(\tau, \sigma)].$$
Let $(Y, Z, K, A,A')$ be the  solution of the 
DRBSDE  
associated with driver $g$ and barriers $\xi, \zeta$. The seller's price is equal to $Y_0$,
that is
 \begin{equation}\label{new}
 u_0=Y_0.
 \end{equation}
The infimum in \eqref{seller's price} is not nessarily attained. Let  $\varphi^*:=\Phi(Z,K)$ and for each $\varepsilon>0$,
 let 
\begin{equation}\label{sigmaep}
\sigma_{\varepsilon}:= \inf \{t \geq 0: \,\,\, Y_t \geq \zeta_t-\varepsilon \}.
\end{equation} 
The pair  
$(\sigma_{\varepsilon}, \varphi^*)$ 
is an 
 $\varepsilon$-super-hedge  for the initial capital $u_0$. 
 \end{theorem}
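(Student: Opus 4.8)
The plan is to split the proof into two parts, closely mirroring the proof of Theorem~\ref{superhedging}: first establish the price equality $u_0 = Y_0$, then exhibit the $\varepsilon$-super-hedge. The inequality $u_0 \geq Y_0$ requires no regularity on $\zeta$: the argument given at the end of the proof of Theorem~\ref{superhedging} (take any $x \in \mathcal{H}$ with super-hedge $(\sigma,\varphi)$, use $V^{x,\varphi}_{\tau\wedge\sigma} \geq I(\tau,\sigma)$, apply the monotonicity and the $\mathcal{E}^g$-martingale property of $V^{x,\varphi}$ from Proposition~\ref{rima}, then take $\sup_\tau$ and $\inf_\sigma$) uses only $Y \geq \xi$, $Y_\sigma \geq \zeta_\sigma$ and the characterization of Proposition~\ref{fairpricegame}; none of these invokes left-regularity of $\zeta$. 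So that half carries over verbatim.

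The reverse inequality $u_0 \leq Y_0$ is where the RCLL-only case diverges, and this is the main obstacle. In Theorem~\ref{superhedging} one used Proposition~\ref{lusc} to conclude that $A'$ is continuous, so that the stopping time $\sigma^* = \inf\{t : Y_t = \zeta_t\}$ satisfies $A'_{\sigma^*} = 0$, which made $(\sigma^*,\varphi^*)$ an exact super-hedge. When $\zeta$ is only RCLL, $A'$ may have jumps and $Y$ need not reach $\zeta$ continuously, so $\sigma^*$ fails. The remedy is to replace $\sigma^*$ by $\sigma_\varepsilon = \inf\{t : Y_t \geq \zeta_t - \varepsilon\}$ from \eqref{sigmaep}. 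The key step I would carry out is to show that $A'$ is constant on $[0,\sigma_\varepsilon[$: by definition of $\sigma_\varepsilon$ we have $Y_t < \zeta_t - \varepsilon < \zeta_t$ for $t \in [0,\sigma_\varepsilon[$, so $Y_t < \zeta_t$ strictly on this interval, and the Skorokhod condition $\int_0^T (\zeta_t - Y_t)\,dA'^c_t = 0$ together with the jump condition $\Delta A'^d_\tau = \Delta A'^d_\tau \mathbf{1}_{\{Y_{\tau^-}=\zeta_{\tau^-}\}}$ forces $dA' = 0$ on $[0,\sigma_\varepsilon[$. Hence $A'_{\sigma_\varepsilon^-} = A'_0 = 0$.

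Having controlled $A'$ up to (but not including) $\sigma_\varepsilon$, I would run the forward-comparison argument exactly as in \eqref{forwardb}--\eqref{ri}: on $[0,\sigma_\varepsilon[$ the process $Y$ solves the forward ODE with the nonincreasing perturbation $-A$ (since $A'$ is constant there and $A$ is nondecreasing), so Lemma~\ref{classique} gives $V^{Y_0,\varphi^*}_t \geq Y_t \geq \xi_t$ for $0 \leq t < \sigma_\varepsilon$; by right-continuity this extends to $t \leq \sigma_\varepsilon$ if $A'$ places no mass at $\sigma_\varepsilon$ itself, but in general I only need the first super-hedging condition on $[0,\sigma_\varepsilon[$ and, at the cancellation time, the relaxed inequality. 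The remaining point is the terminal estimate: from $V^{Y_0,\varphi^*}_{\sigma_\varepsilon^-} \geq Y_{\sigma_\varepsilon^-}$ and the definition of $\sigma_\varepsilon$ one expects $Y_{\sigma_\varepsilon} \geq \zeta_{\sigma_\varepsilon} - \varepsilon$ by right-continuity of $Y$ and $\zeta$, whence $V^{Y_0,\varphi^*}_{\sigma_\varepsilon} \geq \zeta_{\sigma_\varepsilon} - \varepsilon$ a.s. The delicate part is matching the wealth process and $Y$ exactly at $\sigma_\varepsilon$ when $A'$ may jump there; I would handle this by working on the left-open interval and passing to the limit, using that any jump of $A'$ at $\sigma_\varepsilon$ only increases $Y$, which can only help the inequality $V^{Y_0,\varphi^*}_{\sigma_\varepsilon} \geq \zeta_{\sigma_\varepsilon} - \varepsilon$. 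Combining the two inequalities shows $(\sigma_\varepsilon,\varphi^*)$ is an $\varepsilon$-super-hedge for initial capital $Y_0$, so $Y_0 \in \mathcal{H}_\varepsilon$ and $u_0 \leq Y_0$; together with the reverse inequality this gives \eqref{new}.
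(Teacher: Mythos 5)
Your argument for $u_0 \geq Y_0$ is correct and is exactly the paper's (it reuses the second half of the proof of Theorem \ref{superhedging}, which indeed needs no regularity of $\zeta$), and your choice of $\sigma_\varepsilon$ together with the forward comparison on $[0,\sigma_\varepsilon[$ is also the paper's route. Two problems arise in the second half, one local and one structural.

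The local one: your claim that a possible jump of $A'$ at $\sigma_\varepsilon$ ``only increases $Y$, which can only help'' is backwards. An upward jump of $Y$ produced by $\Delta A'_{\sigma_\varepsilon}>0$ is not reproduced by the wealth $V^{Y_0,\varphi^*}$, so it would break precisely the comparison $V^{Y_0,\varphi^*}_{\sigma_\varepsilon}\geq Y_{\sigma_\varepsilon}$ that you need at the cancellation time. The correct resolution, which is the one in the paper, is that no such jump can occur: on $[0,\sigma_\varepsilon[$ one has $Y<\zeta-\varepsilon$, hence $Y_{\sigma_\varepsilon^-}\leq \zeta_{\sigma_\varepsilon^-}-\varepsilon<\zeta_{\sigma_\varepsilon^-}$, and the Skorokhod jump condition $\Delta A^{'d}_{\tau}=\Delta A^{'d}_{\tau}{\bf 1}_{\{Y_{\tau^-}=\zeta_{\tau^-}\}}$ then forces $\Delta A'_{\sigma_\varepsilon}=0$; thus $A'_{\sigma_\varepsilon}=0$ and the comparison of Lemma \ref{classique} holds on the closed interval $[0,\sigma_\varepsilon]$.

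The structural one is a genuine gap: the concluding step ``so $Y_0\in\mathcal{H}_\varepsilon$ and $u_0\leq Y_0$'' does not follow. By definition \eqref{seller's price}, $u_0$ is the infimum of initial wealths admitting an \emph{exact} super-hedge, i.e. satisfying $V_{\sigma}\geq\zeta_\sigma$ with no $\varepsilon$ slack; exhibiting an $\varepsilon$-super-hedge from the initial wealth $Y_0$ does not place $Y_0$, nor any identified nearby wealth, in $\mathcal{H}$. One must convert the $\varepsilon$-super-hedge into an exact super-hedge at a cost of order $\varepsilon$. The paper does this by introducing the auxiliary BSDE $(Y',Z',K')$ with terminal time $\sigma_\varepsilon$ and terminal condition $\zeta_{\sigma_\varepsilon}\vee V^{Y_0,\varphi^*}_{\sigma_\varepsilon}$: since $V^{Y_0,\varphi^*}_{\sigma_\varepsilon}\geq \zeta_{\sigma_\varepsilon}\vee V^{Y_0,\varphi^*}_{\sigma_\varepsilon}-\varepsilon$, the a priori estimate for BSDEs with default jump gives $Y'_0\leq Y_0+K\varepsilon$ with $K$ depending only on $T$ and the $\lambda$-constant, the comparison theorem gives $Y'\geq V^{Y_0,\varphi^*}\geq\xi$ up to $\sigma_\varepsilon$, and $(\sigma_\varepsilon,\Phi(Z',K'))$ is then an exact super-hedge from $Y'_0$. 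Hence $u_0\leq Y_0+K\varepsilon$ for every $\varepsilon>0$, which yields $u_0\leq Y_0$. Without some such device your proof does not close.
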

 
\begin{proof} By Proposition \ref{fairpricegame}, the {\em $g$-value} is equal $Y_0$. Let $\varepsilon>0$. We have  $Y_{.} \leq \zeta_{.} -\varepsilon$ on $[0 , \sigma_{\varepsilon}[$. 
Since $A'$ satisfies the Skorohod condition (iii), it follows that almost surely, $A'$ is 
constant on $[0, \sigma_{\varepsilon}[$.
 Also, 
$ Y_{ (\sigma^{\varepsilon})^-  } \leq \zeta_{ (\tau^{\varepsilon})^-  }- \varepsilon \,$ a.s.\,, 
which implies that $\Delta A' _ {\sigma_{\varepsilon} } =0$ a.s.\,
Hence, $A' _ {\sigma^{\varepsilon} } =0$ a.s.\, It follows that for almost every $\omega$, 
the deterministic function $Y_.(\omega)$ is the solution of the {\em forward} deterministic  differential equation  \eqref{forwardb}  on $[0, \sigma_{\varepsilon}(\omega)]$. 
Now, for almost every $\omega$, the wealth  $V_.^{Y_0, \varphi ^*}(\omega)$ is 
the solution of the deterministic  differential equation  \eqref{ri}.
By applying the classical comparison result on differential equations (Lemma \ref{classique}), we derive that
$V_t^{Y_0, \varphi ^*} \geq Y_t\geq \xi_t$, $0 \leq t \leq \sigma_{\varepsilon}$ a.s.\, Moreover, we have
  $V_{\sigma_{\varepsilon}}^{Y_0, \varphi ^*} \geq Y_{\sigma_{\varepsilon}}  \geq   \zeta_{\sigma_{\varepsilon}}- 
\varepsilon$, where the last inequality follows from definition of the stopping time $\sigma_{\varepsilon}$  and the 
  right-continuity of $Y$ and $\zeta$. Hence, $(\sigma_{\varepsilon}, \varphi^*)$ 
is an 
 $\varepsilon$-super-hedge  for the initial capital amount $Y_0$.
 
 It remains to show that $Y_0=u_0$. 
 The proof of the inequality $u_0\geq Y_0$, which uses Proposition \ref{fairpricegame}, has been done in the second part of the proof of Theorem \ref{superhedging} and does not require the continuity of $A'$. Let us show the converse inequality. Let $\varepsilon >0$. 
 Let $(Y',Z',K')$ be the solution of the BSDE associated with terminal time $\sigma_{\varepsilon}$ and terminal condition $\zeta_{\sigma_{\varepsilon}}\vee V^{Y_0, \varphi^*}_{\sigma_{\varepsilon}}$. Now $(V^{Y_0, \varphi^*}, Z,K)$ is the solution  of the BSDE associated with terminal time $\sigma_{\varepsilon}$ and terminal condition $V^{Y_0, \varphi^*}_{\sigma_{\varepsilon}}$. 
 By an a priori estimate on BSDEs with default jump (see \cite[Proposition 2.4]{DQS4}), 
 since $V^{Y_0, \varphi^*}_{\sigma_{\varepsilon}}  \geq   \zeta_{\sigma_{\varepsilon}}\vee V^{Y_0, \varphi^*}_{\sigma_{\varepsilon}}- 
\varepsilon$  a.s.\,, we derive that $V^{Y_0, \varphi^*}_{0}=Y_0  \geq   Y'_0 - 
K \varepsilon$ a.s.\,, where $K$ is a constant which only depends on $T$ and the $\lambda$-constant $C$.
By the comparison theorem for BSDEs, $Y'_t \geq V^{Y_0, \varphi^*}_{t}\geq \xi_t$. 
We derive that the amount $Y'_0$ ($\leq Y_0 + K \varepsilon$) allows the seller to be super-hedged, and 
 the associated super-hedge is given by $\sigma_{\varepsilon}$ and $\varphi ' := \Phi (Z',K')$. By definition of $u_0$, we derive that
 $u_0 \leq Y'_0 \leq Y_0 + K \varepsilon$, for each $\varepsilon>0$. Hence, $u_0 \leq Y_0$. Since $u_0 \geq Y_0$, we get $u_0 = Y_0$. 
 \end{proof}
\section{Pricing and hedging of game options with model uncertainty}\label{mixed}


We study now game options  with uncertainty on the model, which includes in particular the case of  uncertainty on the default 
probability (see  Example \ref{example} below).

\subsection{Market model with ambiguity}\label{mamo}

 In this section, we need to  use  a measurable selection theorem, 
which requires to work on an appropriate probability space.
  We consider a Cox process model, which is a typical example of default model. 
 We  work on  the canonical space constructed as follows:
 let  $\Omega_W$ be the Wiener space defined by  $\Omega_W:= {\cal C}  ({\mathbb R}^+)$, that is the set of continuous functions $\omega$ from ${\mathbb R}^+$ into $\mathbb{R}$ such that $\omega(0) = 0$. Recall that $\Omega_W$ is a Polish space for the
 norm $\|\cdot \|_\infty$.    The space $\Omega_W$ is equipped with the 
 $\sigma$-algebra ${\cal F}_W$ generated by the coordinate process $(W_t)_{t\geq 0}$ (which is equal to its Borelian $\sigma$-algebra). 
 Let $P_W$ be the probability under which $(W_t)_{t\geq 0}$ is a standard Brownian motion. 
 Let $\Omega_\Theta := \mathbb{R}$, equipped with its Borelian $\sigma$-algebra 
 ${\cal F}_\Theta= {\cal B}(\mathbb{R})$, and the
  probability $P_\Theta$ such that the identity map $\Theta$ admits an exponential law with parameter $1$.  
  We consider the product space $\Omega:=\Omega_W \times \Omega_\Theta$, which is a Polish space. It is equipped with the 
  $\sigma$-algebra ${\cal F}_W \otimes {\cal F}_\Theta$, and the probability $P:=P_W \otimes P_\Theta$. 
  Let ${\cal G}$ be the  $\sigma$-algebra ${\cal F}_W \otimes {\cal F}_\Theta$ completed with respect to $P$.  Let ${\mathbb F}=(\mathcal{F}_t,t\geq 0)$  be the filtration ${\cal F}_W$ {\em completed}  with respect to ${\cal G}$ and $P$ (in the sense of \cite[p.3]{J}  or \cite[IV]{DM1}). Let $(\bar \lambda_t)_{t\geq 0}$ be a bounded positive 
  $\mathbb F$-predictable process. We introduce the following random variable, which represents the {\em default time}:    
  $$\vartheta:= \inf \{t \geq 0, \,\, \int_0^t \bar\lambda_s ds \geq \Theta \}.$$
  We have $P( \vartheta >t \,|\, {\cal F}_{\infty}) = 
  P( \vartheta >t \,|\, {\cal F}_{t})= \exp (-\int_0^t \bar\lambda_s ds)$, 
 which corresponds to the so-called condition ({\bf H}) (see e.g. \cite{JYC}).
 We now define the  {\em default process}:
$$N_t\,:=\,{\bf 1}_{\{\vartheta \leq t\}}\,, \quad t\geq 0.$$
%
%
   We denote by ${\mathbb G}=(\mathcal{G}_t,t\geq 0)$  the  filtration generated by  $W$ and $N$ {\em augmented} with respect to ${\cal G}$ and $P$ (in the sense of \cite[IV-48]{DM1}).
By classical results, since Condition ({\bf H}) holds, we derive that  $W$ is a ${\mathbb G}$-Brownian motion. 
 Moreover, the process $M$ defined by
 \begin{equation*}
M_t  := N_t-\int_0^{t \wedge \vartheta }\bar \lambda_sds, \quad  t\geq 0, \quad {\rm a.s.}
\end{equation*}
is a $\mathbb G$-martingale. 
For each $t\geq 0$, let  $ \lambda_t:= \bar \lambda_t \, {\bf 1}_{\{t \leq \vartheta \}}$.
The process $\lambda$, usually called the ${\mathbb G}$-{\em intensity} of $\vartheta$, thus  vanishes after $ \vartheta$. 
Let $T$ be a given terminal time. The sets ${\cal P}$,  ${S}^2$, ${\mathbb H}^2$, ${\mathbb H}^2_{\lambda}$ and ${\cal A}^2$ are defined as before. 

Let  $U$ be a nonempty closed subset of $\mathbb{R}$.
Let  
$g:[0,T] \times  \Omega \times \R^3 \times U \rightarrow \R $ ; 
$(t, \omega, y,z, k,\alpha) \mapsto  g(t, \omega, y,z, k, \alpha),$
 be a given   $ {\cal P} \otimes {\cal B}(\R^3) \otimes {\cal B}(U)$-measurable function.
Suppose $g(\cdot, \alpha)$ is  {\em uniformly $\lambda$- admissible}  with respect to $(y,z, k)$, that satisfies the inequality \eqref{lip} with a constant $C$ which does not depend on $ \alpha$. We also assume that $g(\cdot, \alpha)$ is continuous with respect to $\alpha$,
 and such that 
$ \sup_{\alpha \in U}  | g(  t, .,0,0,0, \alpha)|  \in \H_{2}$. Suppose also that
 \begin{equation}\label{rrrbis}
g(t,y,z,k_1, \alpha)- g(t,y,z,k_2, \alpha) \geq 
 \theta_t^{y,z,k_1,k_2} (k_1 - k_2 )  \lambda_t,
\end{equation}
where  
$\theta_t^{y,z,k_1,k_2}$ satisfies the conditions of Assumption \ref{Royer}, in particular the inequality $\theta_t^{y,z,k_1,k_2}>-1$. 

 
Let ${\cal U}$ be the set of $U$-valued predictable processes. 
For each $\alpha \in {\cal U}$, to simplify notation, we introduce the map $g^{\alpha}$ defined by
\begin{equation}\label{baralpha}
 g^{\alpha} (t, \omega, y,z, k):= g(t, \omega, y,z, k, {\alpha}_t(\omega)).
 \end{equation}
Note that these maps $g^{\alpha}$, $\alpha \in {\cal U}$, are all $\lambda$-{\em admissible} drivers with the same $\lambda$-constant $C$.
The control $\alpha$ represents the ambiguity parameter of the model. To each ambiguity parameter $\alpha$, corresponds a market model ${\cal M}_{\alpha}$ where the  
wealth process $V^{\alpha ,x, \varphi}$ associated with an initial wealth $x$ and  a risky assets stategy $\varphi$ $\in$ ${\mathbb H}^2 
\times {\mathbb H}^2_{\lambda}$ satisfies 
 \begin{eqnarray}\label{richessealpha}
-dV^{\alpha ,x, \varphi}_t= g(t,V^{\alpha ,x, \varphi}_t,  \varphi_t \sigma_t ,-\varphi_t^{2} ,{\alpha}_t) dt -  \varphi_t \sigma_t dW_t +\varphi_t^{2}dM_t\,;\,\,\,
V^{\alpha ,x, \varphi}_0=x.
\end{eqnarray}
 In the market model ${\cal M}_{\alpha}$, the nonlinear pricing system is given by 
${\cal E}^{g^{\alpha}}:= \{{\cal E}_{t, S}^{g^{\alpha}}, \,\, S \in [0,T], t \in [0,S]\}$, also called $g^{\alpha}$-evaluation. 

\subsection{Robust superhedging of game options}
In our framework with ambiguity, the {\em seller's robust price} of the game option denoted by ${\bf u_0}$ is defined as the infimum of the initial wealths which enable the seller to be superhedged for 
  any  ambiguity parameter  $\alpha \in {\cal U}$.
  %
  
 \begin{definition}\label{definitionrobust}
  For an initial wealth $x \in {\mathbb R}$, a {\em robust super-hedge} against the game option is a pair $(\sigma, \varphi)$ of a stopping time $\sigma \in {\cal T}$ and a portfolio strategy $ \varphi$ $\in$  ${\mathbb H}^2 \times  {\mathbb H}^2_{\lambda}$ such that  \footnote{Condition \eqref{condAbis} is equivalent to $V^{\alpha ,x, \varphi}_{t \wedge \sigma} \geq I(t,\sigma)$, $0 \leq t \leq T$ a.s.\, for all $\alpha \in {\cal U}$.}
    \begin{equation}\label{condAbis}
V^{\alpha, x, \varphi}_{t } \geq \xi_t,  \; 0\leq t \leq \sigma \; \text{ a.s. and } V^{\alpha, x, \varphi}_{\sigma } \geq \zeta_{\sigma}
 \text{ a.s.}\,, \quad \forall \alpha \in {\cal U}.
\end{equation}
We denote by     ${\cal S}^r (x)$ the {\em set of all robust super-hedges associated with initial wealth}
 $x$. 
 
The {\em seller's robust price} is defined as
\footnote{
Remark \ref{positive2} also holds for the {\em seller's robust price}, that is, ${\bf u_0} \leq \zeta_0$. Moreover, when $g(t,0,0,0)=0$ and $\zeta \geq 0$, then
${\bf u_0}= \inf \{x \geq 0,\,\, \exists  (\sigma, \varphi) \in {\cal S}^r  (x) \}$.}
\begin{equation}\label{robustprice}
 {\bf u_0}:= \inf \{x \in \R,\,\, \exists  (\sigma, \varphi) \in {\cal S}^r  (x) \}.
 \end{equation}
When the infimum is reached, ${\bf u_0}$ is called the {\em robust superhedging price}.
\end{definition}

Let $\alpha \in \mathcal{U}$.  By Theorem \ref{epsil}, the seller's price of the game option in the market 
${\cal M}_\alpha$ is characterized as its  {\em $g^\alpha$-value}. Moreover,  it is equal to $Y^\alpha_0$, where 
$(Y^\alpha, Z^\alpha, K^\alpha, A^\alpha,A^{' \alpha})$ is the unique solution in  ${S}^2 \times {\mathbb H}^2 \times {\mathbb H}^2_{\lambda}\times {\cal A}^2 \times {\cal A}^2$ 
of the DRBSDE  
associated with driver $g^\alpha$ and barriers $\xi$ and $ \zeta$. 
We now introduce an associated dual problem.
\begin{definition}
The {\em dual problem} associated to the seller's super-hedging problem is
 \begin{equation} \label{q}
{\bf v_0}: = \sup_{\alpha \in \mathcal{U}} Y^\alpha _0.
\end{equation}
\end{definition}
By Theorem \ref{epsil},  the seller's price $Y^\alpha_0$ of the game option in the market 
${\cal M}_\alpha$ is equal to the common value function of the  {\em generalized Dynkin game} associated with driver $g^\alpha$, that is, 
$$Y^\alpha _0=
 \inf_{\sigma \in \mathcal{T} }  \sup_{\tau \in \mathcal{T}} \cal{E}^{g^\alpha}_{0,\tau \wedge \sigma}[I(\tau, \sigma)]
=  \sup_{\tau \in \mathcal{T}} \inf_{\sigma \in \mathcal{T} } \cal{E}^{g^\alpha}_{0,\tau \wedge \sigma}[I(\tau, \sigma)].$$
Hence, the value function ${\bf v_0}$ of the dual problem is equal to the value function of a {\em mixed generalized Dynkin game}, that is
\begin{equation} \label{q2}
{\bf v_0} = \sup_{\alpha \in \mathcal{U}} Y^\alpha _0
= \sup_{\alpha \in \mathcal{U}} \inf_{\sigma \in \mathcal{T} }  \sup_{\tau \in \mathcal{T}} \cal{E}^{g^\alpha}_{0,\tau \wedge \sigma}[I(\tau, \sigma)]
= \sup_{\alpha \in \mathcal{U}}   \sup_{\tau \in \mathcal{T}} \inf_{\sigma \in \mathcal{T} } \cal{E}^{g^\alpha}_{0,\tau \wedge \sigma}[I(\tau, \sigma)].
\end{equation}
\begin{remark} 
We shall see below (see Proposition \ref{interversion}) that ${\bf v_0}$ is also equal to:
 $$\inf_{\sigma \in \mathcal{T} }\sup_{\alpha \in \mathcal{U}}  \sup_{\tau \in \mathcal{T}}
 \cal{E}^{g^\alpha}_{0,\tau \wedge \sigma}[I(\tau, \sigma)]
 .$$
\end{remark}
In order to show that ${\bf u_0}={\bf v_0}$, we will first prove that 
 ${\bf v_0}$ 
can be 
characterized as the solution of a doubly reflected BSDE. 

Now, by definition, we have
${\bf v_0} = \sup_{\alpha } Y^\alpha _0$, where $Y^\alpha$ is the solution of the doubly reflected BSDE associated with barriers $\xi$ and $\zeta$, and with driver $g(\cdot, \alpha_t)$. We will show that ${\bf v_0} $ coincides with the solution of the doubly reflected BSDE associated with the same barriers $\xi$ and $\zeta$, and with the driver $\sup_{\alpha}g(\cdot, \alpha)$.


More precisely, let ${\bf G}$ be the map defined for each $(t, \omega, z, k)$ by
\begin{equation}\label{d}
{\bf G}(t, \omega, y,z, k):=\sup_{\alpha \in U} g(t, \omega, y,z, k, \alpha) .
\end{equation}

\begin{lemma}\label{G}
The map  ${\bf G}$ is a $\lambda$-{\em admissible} driver and satisfies 
Assumption \ref{Royer}.
\end{lemma}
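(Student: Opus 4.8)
The plan is to verify, one at a time, the three defining requirements of a $\lambda$-admissible driver from Definition \ref{defd} — measurability, square-integrability at the origin, and the Lipschitz estimate \eqref{lip} — and then to check the monotonicity condition of Assumption \ref{Royer}. All four properties for $\mathbf{G}$ should descend from the corresponding properties of the family $g(\cdot,\alpha)$, the key point being that the relevant constants and coefficients are \emph{uniform in} $\alpha$.

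First I would treat measurability, which is the only delicate point. Since $U$ is a closed subset of $\mathbb{R}$ it is separable; let $D$ be a countable dense subset. Because $g(t,\omega,y,z,k,\cdot)$ is continuous in $\alpha$ by hypothesis, the supremum in \eqref{d} over $U$ equals the supremum over $D$, so that $\mathbf{G}=\sup_{\alpha\in D}g(\cdot,\alpha)$. For each fixed $\alpha$ the section $g(\cdot,\alpha)$ is $\mathcal{P}\otimes\mathcal{B}(\mathbb{R}^3)$-measurable, and a countable supremum of measurable functions is measurable; hence $\mathbf{G}$ is $\mathcal{P}\otimes\mathcal{B}(\mathbb{R}^3)$-measurable. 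This reduction to a countable supremum avoids invoking any measurable selection theorem at this stage.

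Next I would deduce integrability and the Lipschitz bound from elementary properties of the supremum. For integrability, $|\mathbf{G}(t,\omega,0,0,0)|\le\sup_{\alpha\in U}|g(t,\omega,0,0,0,\alpha)|$, which lies in ${\mathbb H}^2$ by assumption; this simultaneously shows that $\mathbf{G}$ is finite $dP\otimes dt$-a.s. For the Lipschitz estimate, I would fix two triples and use \eqref{lip} with the common constant $C$ to write, for every $\alpha$, $g(t,y_1,z_1,k_1,\alpha)\le g(t,y_2,z_2,k_2,\alpha)+C(|y_1-y_2|+|z_1-z_2|+\sqrt{\lambda_t}|k_1-k_2|)$; taking the supremum over $\alpha$ on both sides (the additive term being independent of $\alpha$) and then exchanging the two triples yields \eqref{lip} for $\mathbf{G}$ with the same constant $C$. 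The underlying fact is that $f_\alpha\le h_\alpha+c$ for all $\alpha$, with $c$ not depending on $\alpha$, implies $\sup_\alpha f_\alpha\le\sup_\alpha h_\alpha+c$.

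Finally, for Assumption \ref{Royer}, I would exploit the crucial feature that the coefficient $\theta_t^{y,z,k_1,k_2}$ in \eqref{rrrbis} does \emph{not} depend on $\alpha$. Indeed, for each $\alpha$ and each fixed $(y,z,k_1,k_2)$, inequality \eqref{rrrbis} reads $g(t,y,z,k_1,\alpha)\ge g(t,y,z,k_2,\alpha)+\theta_t^{y,z,k_1,k_2}(k_1-k_2)\lambda_t$; taking the supremum over $\alpha$ and using that the last term is $\alpha$-independent gives $\mathbf{G}(t,y,z,k_1)-\mathbf{G}(t,y,z,k_2)\ge\theta_t^{y,z,k_1,k_2}(k_1-k_2)\lambda_t$. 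Thus the choice $\gamma:=\theta$, which is bounded and satisfies $\theta_t^{y,z,k_1,k_2}>-1$, witnesses Assumption \ref{Royer} for $\mathbf{G}$. I expect the measurability reduction to be the only step needing genuine care; the integrability, Lipschitz, and monotonicity claims are then direct consequences of the order-preserving and translation-compatible behaviour of the supremum combined with the $\alpha$-uniformity built into the hypotheses.
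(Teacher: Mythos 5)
Your proof is correct and follows essentially the same route as the paper: measurability via a countable dense subset $D\subset U$ and continuity in $\alpha$, the Lipschitz bound and the inequality of Assumption \ref{Royer} by taking suprema over $\alpha$ of inequalities whose additive terms (the Lipschitz correction and $\theta_t^{y,z,k_1,k_2}(k_1-k_2)\lambda_t$, respectively) are $\alpha$-independent. The only differences are cosmetic — you spell out the Lipschitz step that the paper leaves to the reader and add the (correct) check that ${\bf G}(\cdot,0,0,0)\in{\mathbb H}^2$, while the paper phrases the Assumption \ref{Royer} step as an infimum over $\alpha$ of ${\bf G}(t,y,z,k_1)-g(t,y,z,k_2,\alpha)$, which is the same computation.
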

\begin{proof}Since $U$ is a closed subset of a Polish space, there exists a numerable subset $D$ of $U$, dense in $U$. Since $g$ is continuous with respect to $u$, 
the supremum in \eqref{d} can be taken in $D$. It follows that ${\bf G}$ is $ {\cal P} \otimes {\cal B}(\R^3)-$ measurable. 
Let us show that ${\bf G}$ satisfies Assumption $\eqref{Royer}.$  By  definition of ${\bf G}(t,y,z,k_1)$  
and by  Assumption $\eqref{rrrbis}$, we have for all $\alpha \in  \mathcal{U}$:
$${\bf G}(t,y,z,k_1)-g(t,y,z,k_2,\alpha)\geq g(t,y,z,k_1, \alpha)-g(t,y,z,k_2,\alpha)\geq  \theta_t^{y,z,k_1,k_2} (k_1 - k_2 )\lambda_t.$$
Taking the infimum on $\alpha \in \mathcal{U}$ in this inequality, and using the definition of ${\bf G}(t,y,z,k_2)$, we derive that  
${\bf G}(t,y,z,k_1)- {\bf G}(t,y,z,k_2) \geq 
 \theta_t^{y,z,k_1,k_2} (k_1 - k_2 )\lambda_t$, which gives the desired result.
 The proof of condition  \eqref{lip}   relies on similar arguments and is left to the reader. Hence, ${\bf G}$ is a $\lambda$-admissible driver.
 \end{proof}

We now prove that the dual function
 ${\bf v_0}$ 
is 
characterized as the solution of the doubly reflected BSDE associated with  
driver ${\bf G}$  and barriers $\xi$ and $\zeta$.

 \begin{theorem} (Characterization of the dual value function ${\bf v_0}$) \label{cha}
 Let  ${\bf v_0}$ be 
 defined by \eqref{q}. \\
 We have ${\bf v_0}= Y_0$, where
  $(Y, Z ,K, A, A')$ be the solution of the DRBSDE associated with  
driver ${\bf G}$  and barriers $\xi$ and $\zeta$.
If $U$ is 
 compact, there exists ${\bar \alpha} \in {\cal U}$ such that 
${\bf v_0} =Y^{{\bar \alpha}}_0 $, which means that the 
dual value function ${\bf v_0}$ is equal to the {\em $g^{{\bar \alpha}}$-value} of the game option in the market model 
${\cal M}_{{\bar \alpha}}$.   
\end{theorem}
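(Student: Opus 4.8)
The plan is to establish the identity ${\bf v_0}=Y_0$ by proving the two inequalities $Y_0\ge {\bf v_0}$ and $Y_0\le {\bf v_0}$ separately, and then to deduce attainment in the compact case from a measurable selection argument. For the easy inequality, fix $\alpha\in{\cal U}$. By the very definition of ${\bf G}$ in \eqref{d} we have $g^\alpha(t,y,z,k)\le {\bf G}(t,y,z,k)$ for all $(y,z,k)$, $dP\otimes dt$-a.s. Since both $g^\alpha$ and ${\bf G}$ are $\lambda$-admissible drivers satisfying Assumption \ref{Royer} (the latter by Lemma \ref{G}), the comparison theorem for doubly reflected BSDEs with default jump, applied with the common barriers $\xi,\zeta$, yields $Y^\alpha_t\le Y_t$ for all $t$, and in particular $Y^\alpha_0\le Y_0$. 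Taking the supremum over $\alpha\in{\cal U}$ gives ${\bf v_0}=\sup_\alpha Y^\alpha_0\le Y_0$.

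For the reverse inequality I would argue via a measurable selection combined with an a priori estimate. Fix $\varepsilon>0$. Exploiting the continuity of $g$ in $\alpha$ and a countable dense subset $D\subset U$ (as in the proof of Lemma \ref{G}), I would construct a predictable process $\alpha^\varepsilon\in{\cal U}$ which is $\varepsilon$-optimal along the solution $(Y,Z,K)$ of the DRBSDE with driver ${\bf G}$, namely $g^{\alpha^\varepsilon}(t,Y_t,Z_t,K_t)\ge {\bf G}(t,Y_t,Z_t,K_t)-\varepsilon$, $dP\otimes dt$-a.e. Writing $\delta_t:={\bf G}(t,Y_t,Z_t,K_t)-g^{\alpha^\varepsilon}(t,Y_t,Z_t,K_t)\in[0,\varepsilon]$, the process $Y$ then solves the DRBSDE with barriers $\xi,\zeta$ and driver $g^{\alpha^\varepsilon}(t,y,z,k)+\delta_t$. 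Comparing $Y$ with $Y^{\alpha^\varepsilon}$, the solution of the DRBSDE with driver $g^{\alpha^\varepsilon}$ and the same barriers, through the a priori estimate for doubly reflected BSDEs, one obtains $\mathbb{E}[\sup_t|Y_t-Y^{\alpha^\varepsilon}_t|^2]\le C'\,\mathbb{E}[\int_0^T \delta_t^2\,dt]\le C'T\varepsilon^2$, where $C'$ depends only on $T$ and the common $\lambda$-constant. Hence $Y_0\le Y^{\alpha^\varepsilon}_0+C''\varepsilon\le {\bf v_0}+C''\varepsilon$, and letting $\varepsilon\downarrow 0$ gives $Y_0\le {\bf v_0}$. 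Together with the first step this proves ${\bf v_0}=Y_0$.

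For the compact case, the supremum defining ${\bf G}(t,\omega,Y_t,Z_t,K_t)$ is actually attained, since $g(t,\omega,\cdot)$ is continuous on the compact set $U$. The argmax correspondence $(t,\omega)\mapsto\{\alpha\in U:\,g(t,\omega,Y_t,Z_t,K_t,\alpha)={\bf G}(t,\omega,Y_t,Z_t,K_t)\}$ is nonempty, closed-valued and measurable, so a measurable selection theorem provides a predictable process $\bar\alpha\in{\cal U}$ with $g^{\bar\alpha}(t,Y_t,Z_t,K_t)={\bf G}(t,Y_t,Z_t,K_t)$, $dP\otimes dt$-a.e. Consequently $(Y,Z,K,A,A')$ solves exactly the DRBSDE with driver $g^{\bar\alpha}$ and barriers $\xi,\zeta$; by uniqueness of its solution, $Y=Y^{\bar\alpha}$, so that $Y^{\bar\alpha}_0=Y_0={\bf v_0}$ and the supremum in \eqref{q} is attained at $\bar\alpha$.

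The technical heart of the argument, and the step I expect to be the main obstacle, is the measurable selection producing a \emph{predictable} $U$-valued process that optimizes the driver along the solution, exactly or up to $\varepsilon$. This is precisely the reason the authors set up the canonical Polish space in Section \ref{mamo}, which guarantees the applicability of a measurable selection theorem. One must take care that the selection is predictable rather than merely measurable, and that each $g^{\alpha^\varepsilon}$ (respectively $g^{\bar\alpha}$) remains a $\lambda$-admissible driver with the \emph{same} constant, so that the comparison theorem and the a priori estimate apply with constants uniform in $\varepsilon$; the rest of the proof is then a routine passage to the limit.
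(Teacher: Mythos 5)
Your proposal is correct and follows essentially the same route as the paper's proof: the comparison theorem for DRBSDEs for the inequality ${\bf v_0}\le Y_0$, a predictable measurable selection of an $\varepsilon$-optimal control along $(Y,Z,K)$ combined with the a priori estimate \eqref{A26} for the reverse inequality, and the argmax selection plus uniqueness of the DRBSDE solution for attainment when $U$ is compact. Your remark that the selection must be predictable and that the $\lambda$-constant must be uniform in $\alpha$ (so the estimate's constant depends only on $C$ and $T$) is exactly the point the paper's proof relies on.
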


\begin{proof} 
By definition of ${\bf G}$ (see \eqref{d}),  for each $(t, \omega, y,z, k)$ $\in$ $[0,T] \times  \Omega \times \R^3  \times U $, we have
   \begin{equation*}\label{gG}
   {\bf G}(t, \omega, y,z, k) \geq g(t, \omega, y,z, k, {\alpha}_t(\omega)).
    \end{equation*}
By the comparison theorem for DRBSDEs (see Theorem 5.1 in \cite{DQS2}), we thus have $Y \geq Y^{\alpha}$ a.s.\, 
 for each $\alpha \in {\cal U}$. It follows that  $Y_0 \geq \sup_{\alpha} Y^{\alpha }_0$.\\
 Let   $\varepsilon>0$. By definition of ${\bf G}$ as a supremum, for each $(t, \omega, y, z, l)$ $\in$ $\Omega \times [0,T] \times \R^2 \times \mathbb{R}$, there exists $\alpha ^{\varepsilon}$ $\in$ $U$ such that 
${\bf G} (t, \omega, y,z, k) -\varepsilon \leq  g (t, \omega, y,z, k, \alpha ^{\varepsilon}) . $
Now, the set
$$\{(t,\omega,\alpha) \in [0,T] \times \Omega \times U:\,\, {\bf G}(t,\omega,Y_{t^-}(\omega), 
Z_t(\omega), K_t(\omega)) -\varepsilon \leq g (t, \omega,Y_{t^-}(\omega), Z_t(\omega), K_t(\omega), \alpha)\} $$ 
belongs to $\mathcal{P} \otimes \mathcal{B}(U)$. 
Hence, since the canonical space $\Omega$ is a Polish space, by applying 
 a measurable selection theorem (see e.g.
 \cite[Section 81, Appendix of Ch. III]{DM1}) and  \cite[Lemma 1.2]{C} (or \cite[Lemma 26]{DQS1}), there exists an $U$-valued predictable process $( \alpha ^{\varepsilon}_t)$ such that 
\begin{equation*}
 {\bf G} (t,  Y_t,Z_t, K_t)-\varepsilon \leq   g (t, \omega, Y_t,Z_t, K_t , \alpha ^{\varepsilon}_t),  \; \; 0 \leq t \leq T, \;\;   dt \otimes dP-{\rm a.s.}
\end{equation*}
By using the estimate \eqref{A26} on DRBSDEs with default jump, with $\eta = \frac{1}{C^2}$ and $\beta = 3C^2 + 2C$, we derive 
that there exists a constant $K\geq 0$, which depends only on $C$ and $T$, such that, for each $\varepsilon>0$,
$$Y_0 - K \,\varepsilon \leq Y^{{\alpha^{\varepsilon }}}_0 .
$$ 
Since $Y_0\geq \sup_{\alpha} Y^{\alpha }_0 $, we thus get  $Y_0= \sup_{\alpha} Y^{\alpha }_0= {\bf v_0}$.\\
 Let us show the second assertion. If $U$ is compact, for each $(t, \omega, y, z, l)$ $\in$ $ [0,T] \times \Omega \times \R^2 \times L^2_\lambda$, 
  there exists ${\bar \alpha}$ $\in$ $U$ such that the supremum in \eqref{d} is attained at ${\bar \alpha}$. By the measurable selection theorem of \cite{DM1} and  \cite[Lemma 1.2]{C},
 there exists an $U$-valued predictable process $( {\bar \alpha}_t)$ such that 
\begin{equation*}
{\bf G}(t,Y_t, Z_t,K_t) = 
g(t,Y_t, Z_t,K_t, {{\bar \alpha}}_t),  \; \; 0 \leq t \leq T, \;\;   dt \otimes dP-{\rm a.s.}
\end{equation*}
 It follows that $Y$ and $Y^{{\bar \alpha}}$ are both solutions of the DRBSDE associated 
 with driver $g^{{\bar \alpha}}$. Hence, by the uniqueness of the solution of a DRBSDE, $Y$ $=Y^{{\bar \alpha}}$.
 \end{proof}
Using this result, we now provide the following theorem:
\begin{theorem}\label{srp}(Seller's robust price and  super-hedge) 
Suppose that $\zeta$ is left-lower semicontinuous along stopping times (and $\xi$ is only RCLL). 
 The {\em seller's robust price}  of the game option defined by \eqref{robustprice} is equal the dual
  value function ${\bf v_0}$ defined by \eqref{q}, that is 
$${\bf u_0} = {\bf v_0}.$$
Let  $(Y, Z ,K, A, A')$ be the solution of the DRBSDE associated with  
driver ${\bf G}$  defined by \eqref{d} and barriers $\xi$ and $\zeta$. The seller's robust price is equal to $Y_0$, that is
 $${\bf u_0}=Y_0.
 $$
Moreover, the infimum in \eqref{q} is attained. The robust seller's price is thus the {\em robust super-hedging price}
 of the game option.
Let  $\sigma^*:= \inf \{ t \geq 0,\,\, Y_t = \zeta_t\}$ and $\varphi^*:=\Phi(Z,K)$. The pair $(\sigma^*, \varphi^*)$  is a robust super-hedge for the initial capital ${\bf u_0}$. 

 If $U$ is compact,  there exists ${\bar \alpha}$ $\in$ ${\cal U}$ such that the robust superhedging price of the game option is equal to the superhedging price in the market model 
${\cal M}_{{\bar \alpha}}$, that is ${\bf u_0} = Y_0^{\bar \alpha}$. The ambiguity parameter  $ {{\bar \alpha}}$ corresponds to a {\em worst case scenario}
among  all the possible ambiguity parameters $\alpha \in \cal U$.
\end{theorem}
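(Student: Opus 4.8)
The plan is to combine the characterization \({\bf v_0}=Y_0\) already established in Theorem~\ref{cha} with a robust version of the super-hedging argument from Theorem~\ref{superhedging}. Since Theorem~\ref{cha} gives \({\bf v_0}=Y_0\), where \((Y,Z,K,A,A')\) solves the DRBSDE with driver \({\bf G}\) and barriers \(\xi,\zeta\), it suffices to prove the two inequalities \({\bf u_0}\ge Y_0\) and \({\bf u_0}\le Y_0\), and then to identify the attaining super-hedge.

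For the inequality \({\bf u_0}\ge Y_0\), I would reproduce the dual argument of the second part of the proof of Theorem~\ref{superhedging}, but now model by model. Fix \(x\in\mathbb R\) admitting a robust super-hedge \((\sigma,\varphi)\in{\cal S}^r(x)\). For each fixed \(\alpha\in{\cal U}\), condition \eqref{condAbis} yields \(V^{\alpha,x,\varphi}_{\tau\wedge\sigma}\ge I(\tau,\sigma)\) a.s.\ for every \(\tau\in{\cal T}\). Taking the \({\cal E}^{g^\alpha}\)-evaluation, using its monotonicity together with the \({\cal E}^{g^\alpha}\)-martingale property of the wealth process (Proposition~\ref{rima} applied in \({\cal M}_\alpha\)), gives \(x\ge{\cal E}^{g^\alpha}_{0,\tau\wedge\sigma}[I(\tau,\sigma)]\); passing to the supremum over \(\tau\) and the infimum over \(\sigma\) yields \(x\ge Y^\alpha_0\). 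As this holds for every \(\alpha\), we get \(x\ge\sup_\alpha Y^\alpha_0={\bf v_0}=Y_0\), and taking the infimum over admissible \(x\) gives \({\bf u_0}\ge Y_0\).

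The harder inequality \({\bf u_0}\le Y_0\) is the core of the statement, and the main obstacle is that the \emph{same} pair \((\sigma^*,\varphi^*)\) must super-hedge \emph{simultaneously} for all ambiguity parameters \(\alpha\). I would argue as follows. Since \(\zeta\) is left-l.s.c.\ along stopping times, Proposition~\ref{lusc} applied to the \(\lambda\)-admissible driver \({\bf G}\) (admissible by Lemma~\ref{G}) shows that \(A'\) is continuous; together with the Skorokhod condition and \(Y_t<\zeta_t\) on \([0,\sigma^*[\), this forces \(A'_{\sigma^*}=0\) a.s., so on \([0,\sigma^*]\) the process \(Y\) solves the analogue of the forward equation \eqref{forwardb} with driver \({\bf G}\) and non-decreasing subtracted term \(A\). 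For an arbitrary \(\alpha\in{\cal U}\), the wealth \(V^{\alpha,Y_0,\varphi^*}\) solves the analogue of \eqref{ri} with driver \(g^\alpha\). The key point is the pointwise domination \({\bf G}\ge g^\alpha\): rewriting the \(Y\)-equation with the common driver \(g^\alpha\) absorbs the gap into the process \(\tilde A^\alpha_t:=A_t+\int_0^t[{\bf G}(s,Y_s,Z_s,K_s)-g^\alpha(s,Y_s,Z_s,K_s)]\,ds\), which is non-decreasing because both summands are. Applying the classical ODE comparison (Lemma~\ref{classique}) to the two forward equations, now sharing the driver \(g^\alpha\) with the extra non-decreasing subtracted term \(\tilde A^\alpha\) attached to \(Y\), yields \(V^{\alpha,Y_0,\varphi^*}_t\ge Y_t\ge\xi_t\) on \([0,\sigma^*]\) and \(V^{\alpha,Y_0,\varphi^*}_{\sigma^*}\ge Y_{\sigma^*}=\zeta_{\sigma^*}\), using the right-continuity of \(Y\) and \(\zeta\) at \(\sigma^*\). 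Since \(\alpha\) was arbitrary, \((\sigma^*,\varphi^*)\in{\cal S}^r(Y_0)\), whence \({\bf u_0}\le Y_0\).

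Combining the two inequalities gives \({\bf u_0}=Y_0={\bf v_0}\), and the membership \((\sigma^*,\varphi^*)\in{\cal S}^r({\bf u_0})\) established above shows that the infimum in \eqref{robustprice} is attained, so \({\bf u_0}\) is the robust super-hedging price. Finally, when \(U\) is compact, the second assertion of Theorem~\ref{cha} produces \(\bar\alpha\in{\cal U}\) with \(Y=Y^{\bar\alpha}\), hence \({\bf u_0}=Y_0=Y^{\bar\alpha}_0\), which identifies \(\bar\alpha\) as a worst-case scenario.
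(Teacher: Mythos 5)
Your proposal is correct and follows essentially the same route as the paper: Theorem \ref{cha} for ${\bf v_0}=Y_0$, the model-by-model dual argument for ${\bf u_0}\ge Y_0$, and the forward comparison on $[0,\sigma^*]$ (via Proposition \ref{lusc} and Lemma \ref{classique}) to show $(\sigma^*,\varphi^*)$ is a robust super-hedge. The only cosmetic difference is that you absorb the nonnegative gap ${\bf G}-g^\alpha$ into the non-decreasing term before invoking Lemma \ref{classique}, whereas the paper uses the lemma's driver-domination hypothesis directly; both are valid.
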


\begin{proof} By Theorem \ref{cha}, ${\bf v_0} = Y_0$. 
 Let $\mathcal{H}^r$  be the set of initial capitals which allow the seller to be {\em super-hedged}, that is
$\mathcal{H}^r= \{ x \in \mathbb{R}: \exists ( \sigma,\varphi) \in {\cal S}^r (x) \}. $
Note that ${\bf u_0}= \inf \mathcal{H}^r.$

Let us show that $Y_0 \geq {\bf u_0}$. It is sufficient to show that there exists 
$( \sigma^*,\varphi^*) \in {\cal S}^r (Y_0)$. By Proposition \ref{lusc}, since $- \zeta$ is left-u.s.c. along stopping times, the process  
$A'$ is continuous.
By definition of $\sigma^*$, the process $A'$ is constant on $[0, \sigma^*[$ a.s.\, and even on $[0, \sigma^*]$ by continuity.
Hence, $A'_{\sigma^*}=A'_0=0$ a.s.\,
 We thus have
\begin{align*}
Y_t=Y_0-\int_0^t {\bf G}(s,Y_s,Z_s,K_s)ds+\int_0^t Z_s dW_s +\int_0^t K_s dM_s-A_t, \,\, 0 \leq t \leq \sigma^* \quad {\rm a.s.}
\end{align*}
Let $\alpha \in {\cal U}$. In the market model ${\cal M}_{\alpha}$, the wealth process $V_.^{\alpha ,Y_0, \varphi ^*}$ associated with the initial capital $Y_0$ and the financial strategy 
$\varphi^*:=\Phi(Z,K)$ satisfies
\begin{align*}\label{riche}
V_t^{\alpha ,Y_0, \varphi ^*}=Y_0-\int_0^t g(s,V^{\alpha ,Y_0, \varphi ^*}_s,Z_s,K_s, \alpha_s)ds+\int_0^t Z_s dW_s +\int_0^t K_s dM_s.
\end{align*}
By definition of ${\bf G}$ (see \eqref{d}),   we have
    $-g(t, \omega, y,z, k, {\alpha}_t(\omega)) \geq -{\bf G}(t, \omega, y,z, k).
    $ 
    Hence, since $A$ is a non decreasing process, 
by the comparison property for deterministic differential equations (see Lemma \ref{classique}) applied to the two above forward equations, we derive that
\begin{equation*}
V_t^{\alpha ,Y_0, \varphi ^*} \geq Y_t  \geq \xi_t, \,\,0 \leq t \leq \sigma^* \quad {\rm a.s.}\,,
\end{equation*}
where the last inequality follows from the inequality $Y \geq \xi$.\\
Moreover, we have
$
  V_{\sigma^*}^{\alpha ,Y_0, \varphi ^*} \geq Y_{\sigma^*} = \zeta_{\sigma^*}$ a.s.\,,  and this holds for any $\alpha \in {\cal U}$. 
   Hence 
  $( \sigma^*,\varphi^*) \in {\cal S}^r (Y_0)$, which implies
  $Y_0 \in \mathcal{H}^r$. Thus, $Y_0 \geq {\bf u_0}$. 
  
Let us now show that ${\bf u_0} \geq Y_0$.
%
Let $x \in \mathcal{H}^r$. There exists $( \sigma,\varphi) \in {\cal S}^r (x)$, that is 
  a pair $(\sigma, \varphi)$ of a stopping time $\sigma \in {\cal T}$ and a portfolio strategy $ \varphi$ $\in$  ${\mathbb H}^2 \times  {\mathbb H}^2_{\lambda}$ such that for each 
  $\alpha \in {\cal U}$, we have
$V^{\alpha ,x, \varphi}_{t \wedge \sigma} \geq I(t,\sigma)$, $0 \leq t \leq T$ a.s. 
By the same arguments as in the proof of Theorem \ref{superhedging}, we derive that for each $\alpha \in {\cal U}$, 
\begin{align*}
x  \geq 
\inf_{\sigma \in \mathcal{T}} \sup_{\tau \in \mathcal{T}}
\mathcal{E}_{0, \tau \wedge \sigma}^{g^{\alpha}} [ I(\tau, \sigma)].
\end{align*}
By taking the supremum over $\alpha \in {\cal U}$ in this inequality, we obtain
\begin{align*}
x  \geq 
\sup_ { \alpha \in {\cal U} }\inf_{\sigma \in \mathcal{T}} \sup_{\tau \in \mathcal{T}}
\mathcal{E}_{0, \tau \wedge \sigma}^{g^{\alpha}} [ I(\tau, \sigma)]={\bf v_0},
\end{align*}
where the last equality follows from the fact that ${\bf v_0}$ is equal to the value function of the {\em mixed generalized Dynkin game} \eqref{q2}. By taking the infimum over $x \in \mathcal{H}^r$, 
we obtain ${\bf u_0} \geq {\bf v_0}=Y_0$. Since $Y_0 \geq {\bf u_0}$, we thus get $Y_0 ={\bf u_0}$. Since $( \sigma^*,\varphi^*) \in {\cal S}^r (Y_0)$, we derive that $( \sigma^*,\varphi^*) \in {\cal S}^r ({\bf u_0})$.
The last assertion of the theorem follows from Theorem \ref{cha}.
\end{proof}

When $\zeta$ is only RCLL, by using similar arguments to those used in the above proof and in the proof of Theorem \ref{epsil}, one can show the following result. 
\begin{theorem}\label{srp2} [seller's robust price and $\varepsilon$-super-hedge]  Suppose that the process $\zeta$ and $\xi$ are only RCLL. 
 The {\em seller's robust price}  of the game option is equal the dual value function, that is 
${\bf u_0} = {\bf v_0}.$ We also have ${\bf u_0}=Y_0,$ where
$(Y, Z ,K, A, A')$ is the solution of the DRBSDE associated with  
driver ${\bf G}$  defined by \eqref{d} and barriers $\xi$ and $\zeta$.

 Moreover, the infimum in \eqref{q} is not necessarily attained. 
For each $\varepsilon>0$, let 
$\sigma_{\varepsilon}:= \inf \{t \geq 0: \,\,\, Y_t \geq \zeta_t-\varepsilon \}.$  
 The pair  
$(\sigma_{\varepsilon}, \varphi^*)$, where  $\varphi^*:=\Phi(Z,K)$,
is  an 
 $\varepsilon$-{\em robust super-hedge}  for the seller, in the sense that
 \begin{equation*}
 V_t^{\alpha ,u_0, \varphi^*} \geq \xi_t, \,\,0 \leq t \leq \sigma_{\varepsilon}\,\,\, {\rm a.s.} \quad {\rm and }\quad
V^{\alpha ,u_0, \varphi^*}_{\sigma_{\varepsilon}}  \geq   \zeta_{\sigma_{\varepsilon}}- 
\varepsilon  \,\,\,{\rm a.s.}\quad \forall \alpha \in {\cal U}.
\end{equation*}
 \end{theorem}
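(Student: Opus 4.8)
By Theorem \ref{cha} we already know that ${\bf v_0} = Y_0$, where $(Y,Z,K,A,A')$ solves the DRBSDE associated with the worst-case driver ${\bf G}$ defined in \eqref{d} and barriers $\xi,\zeta$. The plan is to transpose the proof of Theorem \ref{epsil} to the driver ${\bf G}$, exploiting the single structural fact that makes the robust problem tractable: since ${\bf G}(t,\cdot)=\sup_{\alpha} g(t,\cdot,\alpha)$ dominates every $g(\cdot,\alpha)$ pointwise, the comparison property for the forward wealth equations (Lemma \ref{classique}) yields $V^{\alpha,x,\varphi}_t \geq \tilde V^{x,\varphi}_t$ for every $\alpha \in \mathcal{U}$, where $\tilde V^{x,\varphi}$ denotes the (fictitious) wealth process in the market $\mathcal{M}^{\bf G}$ with driver ${\bf G}$. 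Consequently any super-hedge in $\mathcal{M}^{\bf G}$ is automatically a robust super-hedge, and robust super-hedging against the whole family $\{\mathcal{M}_\alpha\}_{\alpha}$ reduces to ordinary super-hedging in the single market $\mathcal{M}^{\bf G}$.

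First I would construct the $\varepsilon$-robust super-hedge. Exactly as in the proof of Theorem \ref{epsil}, the Skorokhod condition together with $Y_{(\sigma_{\varepsilon})^-}\leq \zeta_{(\sigma_{\varepsilon})^-}-\varepsilon$ forces $A'$ to remain at $0$ on $[0,\sigma_{\varepsilon}]$, so on this interval $Y$ solves the forward ODE with driver ${\bf G}$ and the nonincreasing correction $-A$. Fixing $\alpha\in\mathcal{U}$ and comparing this forward equation with the one for $V^{\alpha,Y_0,\varphi^*}$ via Lemma \ref{classique} --- using $-g(\cdot,\alpha)\geq -{\bf G}$ and the monotonicity of $A$ --- gives $V^{\alpha,Y_0,\varphi^*}_t \geq Y_t \geq \xi_t$ on $[0,\sigma_{\varepsilon}]$ and $V^{\alpha,Y_0,\varphi^*}_{\sigma_{\varepsilon}} \geq Y_{\sigma_{\varepsilon}} \geq \zeta_{\sigma_{\varepsilon}}-\varepsilon$, the last bound following from the definition \eqref{sigmaep} of $\sigma_{\varepsilon}$ and the right-continuity of $Y$ and $\zeta$. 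Since this holds for every $\alpha$, the pair $(\sigma_{\varepsilon},\varphi^*)$ is an $\varepsilon$-robust super-hedge for the initial capital $Y_0$.

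Next I would prove ${\bf u_0}=Y_0$. For the inequality ${\bf u_0}\geq Y_0$ I would repeat the second part of the proof of Theorem \ref{srp}: given $x\in\mathcal{H}^r$ with $(\sigma,\varphi)\in\mathcal{S}^r(x)$, one has $V^{\alpha,x,\varphi}_{\tau\wedge\sigma}\geq I(\tau,\sigma)$ for all $\tau$ and all $\alpha$; taking the $\mathcal{E}^{g^{\alpha}}$-evaluation and using the $\mathcal{E}^{g^{\alpha}}$-martingale property of the wealth process (Proposition \ref{rima}) yields $x\geq Y^{\alpha}_0$, and the supremum over $\alpha$ gives $x\geq{\bf v_0}=Y_0$; this step uses no regularity of $\zeta$. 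For the converse ${\bf u_0}\leq Y_0$ I would run the $\varepsilon$-argument of Theorem \ref{epsil} inside $\mathcal{M}^{\bf G}$: let $(Y',Z',K')$ solve the BSDE with driver ${\bf G}$, terminal time $\sigma_{\varepsilon}$ and terminal condition $\zeta_{\sigma_{\varepsilon}}\vee\tilde V_{\sigma_{\varepsilon}}$, where $\tilde V=\tilde V^{Y_0,\varphi^*}$ is the ${\bf G}$-wealth (which satisfies $\tilde V_{\sigma_{\varepsilon}}\geq\zeta_{\sigma_{\varepsilon}}-\varepsilon$, so the two terminal conditions differ by at most $\varepsilon$). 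The a priori estimate on BSDEs with default jump (\cite[Proposition 2.4]{DQS4}) then gives $Y'_0\leq Y_0+K\varepsilon$ with $K$ depending only on $T$ and the common $\lambda$-constant $C$, while comparison yields $Y'_t\geq\tilde V_t\geq\xi_t$ and $Y'_{\sigma_{\varepsilon}}=\zeta_{\sigma_{\varepsilon}}\vee\tilde V_{\sigma_{\varepsilon}}\geq\zeta_{\sigma_{\varepsilon}}$, so the ${\bf G}$-wealth $Y'$ exactly super-hedges. Since $V^{\alpha,Y'_0,\varphi'}\geq Y'$ for every $\alpha$ with $\varphi':=\Phi(Z',K')$, the pair $(\sigma_{\varepsilon},\varphi')$ is an \emph{exact} robust super-hedge, whence ${\bf u_0}\leq Y'_0\leq Y_0+K\varepsilon$ for all $\varepsilon>0$, i.e. ${\bf u_0}\leq Y_0$.

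The part requiring the most care --- and the only place where being in the robust setting matters --- is the reduction to the worst-case market combined with the non-attainment of the infimum. Because $\zeta$ is only RCLL, the infimum defining ${\bf u_0}$ need not be reached, so no exact robust super-hedge exists at capital exactly $Y_0$; the purpose of perturbing the terminal condition to $\zeta_{\sigma_{\varepsilon}}\vee\tilde V_{\sigma_{\varepsilon}}$ is precisely to restore exact super-hedging at the slightly larger capital $Y'_0$. What makes this uniform over the ambiguity is that the constant $K$ in the a priori estimate depends only on $T$ and on $C$, which is common to all drivers $g^{\alpha}$; together with the pointwise domination ${\bf G}\geq g(\cdot,\alpha)$, this lets a single worst-case hedge control every model $\mathcal{M}_\alpha$ simultaneously.
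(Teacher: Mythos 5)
Your proof is correct and follows exactly the route the paper intends: the paper gives no written proof of this theorem, stating only that one combines the arguments of Theorem \ref{srp} and Theorem \ref{epsil}, and your write-up is precisely that combination (forward comparison against the worst-case driver ${\bf G}$ for the $\varepsilon$-robust super-hedge and for the reduction to the single market ${\cal M}^{\bf G}$, plus the a priori estimate with the common $\lambda$-constant for the inequality ${\bf u_0}\leq Y_0$). No gaps.
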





We will now show that the infimum over $\sigma$ and the supremum over $\alpha$ can be interchanged in the expression of the dual value function ${\bf v_0}$ (see \eqref{q2}), which, since ${\bf u_0}={\bf v_0}$, can be written as follows.
\begin{proposition} \label{interversion} The {\em seller's robust price} ${\bf u_0}$ of the game option satisfies:
 \begin{equation}\label{inter}
 {\bf u_0}= \sup_{\alpha \in \mathcal{U}} \inf_{\sigma \in \mathcal{T}} \sup_{\tau \in \mathcal{T}}  \mathcal{E}^{g^{\alpha}}_{0, \tau \wedge \sigma}[I(\tau,\sigma)]= \inf_{\sigma \in \mathcal{T} }\sup_{\alpha \in \mathcal{U}}  \sup_{\tau \in \mathcal{T}}
 \cal{E}^{g^\alpha}_{0,\tau \wedge \sigma}[I(\tau, \sigma)].
 \end{equation}
\end{proposition}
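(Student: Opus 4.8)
The plan is to prove the two equalities separately, the first being essentially free and the second carrying all the content. For the first equality, recall that by Theorem~\ref{srp} (or Theorem~\ref{srp2} when $\zeta$ is only RCLL) one has ${\bf u_0}={\bf v_0}$, while \eqref{q2} identifies ${\bf v_0}$ with $\sup_{\alpha\in\mathcal{U}}\inf_{\sigma\in\mathcal{T}}\sup_{\tau\in\mathcal{T}}\mathcal{E}^{g^\alpha}_{0,\tau\wedge\sigma}[I(\tau,\sigma)]$. Hence it remains only to establish the interchange of $\inf_\sigma$ and $\sup_\alpha$, i.e. the second equality, for which I would compute the right-hand side directly and show it equals $Y_0={\bf v_0}$, where $(Y,Z,K,A,A')$ is the solution of the DRBSDE with driver ${\bf G}$ and barriers $\xi,\zeta$.

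The key ingredient is a robust representation of the $\mathcal{E}^{\bf G}$-evaluation: for fixed stopping times $\tau,\sigma\in\mathcal{T}$, since $I(\tau,\sigma)$ is $\mathcal{G}_{\tau\wedge\sigma}$-measurable and square integrable, I claim
\begin{equation*}
\sup_{\alpha\in\mathcal{U}}\mathcal{E}^{g^\alpha}_{0,\tau\wedge\sigma}[I(\tau,\sigma)]=\mathcal{E}^{\bf G}_{0,\tau\wedge\sigma}[I(\tau,\sigma)].
\end{equation*}
The inequality $\leq$ follows from the comparison theorem for BSDEs with default jump together with ${\bf G}\geq g^\alpha$ pointwise (Lemma~\ref{G} guarantees ${\bf G}$ is a $\lambda$-admissible driver satisfying Assumption~\ref{Royer}). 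For the reverse inequality I would argue exactly as in the proof of Theorem~\ref{cha}: denoting by $(\bar Y,\bar Z,\bar K)$ the solution of the BSDE with driver ${\bf G}$, terminal time $\tau\wedge\sigma$ and terminal condition $I(\tau,\sigma)$, a measurable selection theorem on the Polish space $\Omega$ produces, for each $\varepsilon>0$, a predictable process $\alpha^\varepsilon\in\mathcal{U}$ with ${\bf G}(t,\bar Y_t,\bar Z_t,\bar K_t)-\varepsilon\leq g(t,\bar Y_t,\bar Z_t,\bar K_t,\alpha^\varepsilon_t)$, $dt\otimes dP$-a.s.; the a priori estimate on BSDEs with default jump (\cite[Proposition 2.4]{DQS4}) then yields $\mathcal{E}^{\bf G}_{0,\tau\wedge\sigma}[I(\tau,\sigma)]-K\varepsilon\leq\mathcal{E}^{g^{\alpha^\varepsilon}}_{0,\tau\wedge\sigma}[I(\tau,\sigma)]$ for a constant $K$ depending only on $C$ and $T$, and letting $\varepsilon\downarrow0$ gives $\geq$.

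With this representation in hand the conclusion is immediate. Since $\alpha$ and $\tau$ are independent supremum variables, $\sup_\alpha\sup_\tau=\sup_\tau\sup_\alpha$, so for each $\sigma$,
\begin{equation*}
\sup_{\alpha\in\mathcal{U}}\sup_{\tau\in\mathcal{T}}\mathcal{E}^{g^\alpha}_{0,\tau\wedge\sigma}[I(\tau,\sigma)]=\sup_{\tau\in\mathcal{T}}\mathcal{E}^{\bf G}_{0,\tau\wedge\sigma}[I(\tau,\sigma)].
\end{equation*}
Taking $\inf_\sigma$ and invoking Proposition~\ref{fairpricegame} applied to the driver ${\bf G}$ (legitimate by Lemma~\ref{G}) gives $\inf_\sigma\sup_\alpha\sup_\tau\mathcal{E}^{g^\alpha}_{0,\tau\wedge\sigma}[I(\tau,\sigma)]=\inf_\sigma\sup_\tau\mathcal{E}^{\bf G}_{0,\tau\wedge\sigma}[I(\tau,\sigma)]=Y_0={\bf v_0}={\bf u_0}$, which is the desired second equality.

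I expect the main obstacle to be the rigorous justification of the robust representation, specifically the measurability of the selection set and the application of the measurable selection theorem; however this is already carried out for DRBSDEs in the proof of Theorem~\ref{cha}, and the BSDE version needed here is strictly simpler, so the work is essentially to transcribe that argument with $\tau\wedge\sigma$ as terminal time. As an alternative organization, one could first note the automatic weak-duality inequality $\sup_\alpha\inf_\sigma\sup_\tau\leq\inf_\sigma\sup_\alpha\sup_\tau$ and then prove only the reverse bound via the computation above, but computing the right-hand side directly as $Y_0$ settles both inequalities at once.
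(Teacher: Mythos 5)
Your proposal is correct and follows essentially the same route as the paper: reduce to the identity $\sup_{\alpha}\mathcal{E}^{g^{\alpha}}_{0,\tau\wedge\sigma}[I(\tau,\sigma)]=\mathcal{E}^{\bf G}_{0,\tau\wedge\sigma}[I(\tau,\sigma)]$ via comparison plus the measurable selection argument from Theorem~\ref{cha}, then conclude with Proposition~\ref{fairpricegame} applied to the driver ${\bf G}$. The only difference is that you spell out the selection step in more detail than the paper, which simply cites the argument of Theorem~\ref{cha}.
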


\begin{proof} 
The first equality in \eqref{inter} holds by the above theorem. Let us prove the second one.
%
By the above theorem, we have ${\bf u_0}=Y_0$, where $(Y, Z ,K, A, A')$ is the solution of the DRBSDE associated with  
driver ${\bf G}$  defined by \eqref{d} and barriers $\xi$ and $\zeta$. To obtain the desired result, it is thus sufficient to prove that 
\begin{equation}\label{ud}
Y_0= \inf_{\sigma \in \mathcal{T} }\sup_{\alpha \in \mathcal{U}}  \sup_{\tau \in \mathcal{T}}
 \cal{E}^{g^\alpha}_{0,\tau \wedge \sigma}[I(\tau, \sigma)].
 \end{equation}
Since by definition \eqref{d}, ${\bf G}= \sup_{\alpha \in U} g(\cdot, \alpha)$, by using similar arguments to those used in the proof of Theorem \ref{cha} (in particular a measurable selection theorem), one can show that the solution of the BSDE associated with driver ${\bf G}$
 and terminal condition $I(\tau,\sigma)$ is equal to the supremum over $\alpha$ of  the solutions of the BSDEs associated with drivers $g(\cdot, \alpha)$ and the same terminal condition, that is
\begin{equation}\label{uu}
\mathcal{E}_{0,\tau\wedge \sigma}^{\bf G}[I(\tau,\sigma)]= \sup_{\alpha \in \mathcal{U}} \mathcal{E}_{0,\tau\wedge \sigma}^{g^{\alpha}}[I(\tau,\sigma)].
\end{equation}
On the other hand, applying Proposition \ref{fairpricegame} to the {\em generalized
 Dynkin game} associated with driver ${\bf G}$, we obtain the equality
\begin{equation}\label{ddy}
Y_0= \inf_{\sigma \in \mathcal{T}} \sup_{\tau \in \mathcal{T}}  \mathcal{E}_{0,\tau\wedge \sigma}^{\bf G}[I(\tau,\sigma)].
\end{equation}
Combining  \eqref{uu} and  \eqref{ddy}, we obtain the desired equality \eqref{ud}.
 \end{proof}
\subsection{Application to the case of ambiguity on the default probability} \label{example}
We consider a family of a priori probability measures parametrized by $\alpha \in {\cal U}$.
More precisely, for each $\alpha \in {\cal U}$, let $Q^{\alpha }$ be the probability measure equivalent to $P$, which admits $Z^{\alpha }_T$ as density with respect to $P$, where $(Z_t^{\alpha })$ is the solution of the following SDE:
$$
dZ_t^{\alpha }=Z_t^{\alpha } \nu(t,{\alpha}_t)dM_t; \quad Z^{\alpha }_0=1,
$$
where $\nu: (\omega,t,\alpha) \mapsto \nu(t, \omega,\alpha)$ is  a bounded ${\cal P}\otimes {\cal B}(U) $-measurable function defined on $\Omega  \times [0,T] \times U $ 
with $\nu (t,\alpha)> C_1   >-1$.

By Girsanov's theorem, we derive that under $Q^{\alpha }$, $W$ is a $\mathbb G$-Brownian motion and $M^{\alpha}_t:= N_t- \int_0^t \lambda_s (1+ \nu(s,{\alpha}_s))ds$ is a $\mathbb G$-martingale.  Hence, under $Q^{\alpha }$, the $\mathbb G$-default intensity is equal to 
$\lambda_t (1+ \nu(t,{\alpha}_t))$.
The process $\nu(t,{\alpha}_t)$ represents the {\em uncertainty on the default intensity}.

To each $\alpha \in {\cal U}$, corresponds a market model ${\cal M}_{\alpha}$ associated with the a priori probability measure $Q^{\alpha }$. In the market  ${\cal M}_{\alpha}$, the dynamics of the wealth process $V^{\alpha ,x, \varphi}$ associated with an initial wealth $x$ and  a risky assets stategy $\varphi$ $\in$ ${\mathbb H}^2 
\times {\mathbb H}^2_{\lambda}$ 
are supposed to satisfy 
 \begin{equation}\label{richessea}
-dV^{\alpha ,x, \varphi}_t= f(t,V^{\alpha ,x, \varphi}_t,  \varphi_t ' \sigma_t ,- \varphi_t^{2} , \alpha_t) dt -  \varphi_t ' \sigma_t dW_t +\varphi_t^{2}dM^{\alpha}_t\,;\,\,\,
V^{\alpha ,x, \varphi}_0=x, 
\end{equation}
where $f: (t, \omega, y,z, k,\alpha) \mapsto  f(t, \omega, y,z, k, \alpha)$ is a map  supposed to be {\em uniformly $\lambda$- admissible}  with respect to $(y,z, k)$, satisfying \eqref{rrrbis} with 
 $\theta^{t,y,z,k_1, k_2}> (-1-C_1) \vee (-1)$ and  $ \sup_{\alpha \in U}  | f(  t, .,0,0,0, \alpha)|  \in \H^{p}$, for some $p > 2$. 
 For example, $f$ can be given as in \eqref{perfectlineaire} in the case of a perfect market,
 or as in Examples \ref{eximp} of market imperfections, with coefficients which may depend on $\alpha$.

By \cite[Proposition A.3]{DQS4}, there is a martingale representation theorem for ${\mathbb G}$-martingales under $Q^{\alpha}$ with respect to $W$ and $M^{\alpha}$.
Let $\xi \in {L}^p({\cal G_T})$, where $p>2$. By  \cite[Proposition 2.11]{DQS4}, the density 
$Z^{\alpha}_T$ of $Q^{\alpha}$ with respect to $P$ belongs to $L^q$
for all $q\geq 2$. 
Let $p' \in ]2, p[$.
Applying H\"{o}lder's inequality, we derive that $E_{Q^{\alpha}}(\xi^{p'}) < + \infty$. Similarly, since by assumption $f(  t, 0,0,0, \alpha_t)  \in \H^{p}$, we derive that 
$   f(  t, 0,0,0, \alpha_t)   \in \H^{p'}_{Q^{\alpha}}$.
By 
\cite[Corollary A.4]{DQS4}, 
there exists an unique solution  $(X^{\alpha}, Z^{\alpha}, K^{\alpha})$ in $ \mathcal{S}_{Q^{\alpha}}^{p'} \times {\mathbb H}_{Q^{\alpha}} ^{p'} \times  {\mathbb H}^{p'}_{Q^{\alpha},\lambda}$ of the following $Q^{\alpha }$-BSDE:
\begin{equation}\label{BSDEalpha}
-dX^{\alpha}_t = f(t,X^{\alpha}_t, Z^{\alpha}_t,K^{\alpha}_t , \alpha_t) dt  -  Z^{\alpha}_t dW_t - K^{\alpha}_t dM^{\alpha}_t; \quad
X^{\alpha}_T=\xi.
\end{equation}

%
\noindent As in the previous section, to simplify notation, for each $\alpha \in \mathcal{U}$, we denote by $f^{\alpha}$ the driver  
 $
 f^{\alpha}(t, y,z,k)= f(t, y,z,k,  \alpha_t)$. The {\em nonlinear price system} in the market model ${\cal M}_{\alpha}$, denoted by $\mathcal{E}^{f^{\alpha}}_{Q^{\alpha} }$,   is thus the $f^{\alpha}$-evaluation under  the 
{\em a priori probability} measure $Q^{\alpha }$, defined on $L^{p'}$.
The robust super-hedges are defined as in Definition \ref{definitionrobust} and the seller's robust price ${\bf u_0}$ is defined by 
\eqref{robustprice}.
Since $M^{\alpha}_t= M_t- \int_0^t \lambda_s  \nu(s,{\alpha}_s)ds$, the dynamics \eqref{richessea} of the wealth process $V^{\alpha ,x, \varphi}$ in the market model ${\cal M}_{\alpha}$ can be written as follows: 
 \begin{equation*}
-dV^{\alpha ,x, \varphi}_t= - \lambda_t \nu(t,\alpha_t)\varphi_t^{2} dt+ f(t,V^{\alpha ,x, \varphi}_t,  \varphi_t \sigma_t ,- \varphi_t^{2} , \alpha_t) dt -  \varphi_t \sigma_t dW_t +\varphi_t^{2}dM_t.
\end{equation*}
 This example thus corresponds to the model with ambiguity defined in Section \ref{mamo}  with $g(\cdot, \alpha)$ defined  by 
 $$g(t, \omega,y,z,k,  \alpha):= \lambda_t(\omega) \nu(t,\omega,\alpha)k+f(t,\omega, y,z,k, \alpha).$$
By the assumptions on $f$, the map $g$ satisfies the required conditions, in particular  inequality \eqref{rrrbis}.
Theorems \ref{srp} and \ref{srp2} as well as Proposition \ref{interversion} hold. In particular,  the {\em seller's robust price} ${\bf u_0}$ of the game option admits the following dual representation:
 \begin{equation}\label{inter2}
 {\bf u_0}= \sup_{\alpha \in \mathcal{U}} \inf_{\sigma \in \mathcal{T}} \sup_{\tau \in \mathcal{T}}  \mathcal{E}^{g^\alpha}_{0, \tau \wedge \sigma}[I(\tau,\sigma)]= \inf_{\sigma \in \mathcal{T} }\sup_{\alpha \in \mathcal{U}}  \sup_{\tau \in \mathcal{T}}
 \cal{E}^{g^\alpha}_{0,\tau \wedge \sigma}[I(\tau, \sigma)].
 \end{equation}
%
We now show that  for each $\alpha \in \mathcal{U}$, $ \mathcal{E}^{g^\alpha}$ is equal to the nonlinear price system $\mathcal{E}^{f^{\alpha} }_{Q^{\alpha} }$ relative to the market model ${\cal M}_{\alpha}$.
First, we have $(Z^{\alpha}_T)^{-1}$ $\in$ $L^{q}$ for all $q \geq 1$. Indeed,
The process $(Z_t^{\alpha })^{-1}$ satisfies the following $Q^{\alpha }$-SDE:
$
d(Z_t^{\alpha })^{-1}=-(Z_{t^-}^{\alpha })^{-1}\nu(t,{\alpha}_t)dM^{\alpha }_t,$ with $(Z_0^{\alpha })^{-1}=1
$.
By  \cite[Proposition 2.11]{DQS4}, $(Z^{\alpha}_T)^{-1}$ belongs to $L^{q'}_{Q^{\alpha }}$
for all $q'\geq 1$, which implies that $(Z^{\alpha}_T)^{-1}$ $\in$ $L^{q}$ for all $q \geq 1$.
Since $p' >2$, by H\"older's inequality, 
we derive that $(X^\alpha, Z^\alpha, K^\alpha)$ (solution of \eqref{BSDEalpha}) belongs to $ S^{2}\times \H^{2} \times \H_{ \lambda}^{2}$ and is thus the unique solution in $ S^{2}\times \H^{2} \times \H_{ \lambda}^{2}$ of the $P$-BSDE: 
\begin{equation*}
-dX^{\alpha}_t = g^{\alpha}(t,X^{\alpha}_t, Z^{\alpha}_t,K^{\alpha}_t ) dt  -  Z^{\alpha}_t dW_t - K^{\alpha}_t dM_t; \quad
X^{\alpha}_T=\xi.
\end{equation*}
Hence, for each maturity $S$ and  each payoff $\eta$ $\in$ $L^p({\cal G}_S)$, we have
$$\mathcal{E}^{f^{\alpha} }_{Q^{\alpha },\cdot,S}(\eta  )= \mathcal{E}^{g^{\alpha}}_{\cdot, S}(\eta),$$
which gives that $ \mathcal{E}^{g^\alpha}$ is equal to the nonlinear price system $\mathcal{E}^{f^{\alpha} }_{Q^{\alpha} }$ relative to the market model ${\cal M}_{\alpha}$.
Using this property together with equalities  \eqref{inter2} and Theorem \ref{srp2}, we derive the following result.
\begin{proposition}\label{exemple}
(Seller's robust price) 
 The {\em seller's robust price}  of the game option in this model admits the following dual representation: 
  \begin{equation} \label{uuu}
 {\bf u_0}= \sup_{\alpha \in \mathcal{U}} \inf_{\sigma \in \mathcal{T}} \sup_{\tau \in \mathcal{T}} \mathcal{E}^{f^{\alpha} }_{Q^{\alpha },0, \tau \wedge \sigma}[I(\tau,\sigma)]= \inf_{\sigma \in \mathcal{T} }\sup_{\alpha \in \mathcal{U}}  \sup_{\tau \in \mathcal{T}}
\mathcal{E}^{f^{\alpha} }_{Q^{\alpha },0, \tau \wedge \sigma}[I(\tau, \sigma)].
\end{equation}
Let ${\bf G}$ be the map defined for each $(t, \omega, z, k)$ by
\begin{equation}\label{ddd}
{\bf G}(t, \omega, y,z, k):=\sup_{\alpha \in U} \left(\lambda_t(\omega) \nu(t,\omega,\alpha)k+f(t,\omega, y,z,k, \alpha)\right).
\end{equation}
We have ${\bf u_0}=Y_0$, where  $Y$ is the solution of the $P$-DRBSDE associated with  
driver ${\bf G}$   and barriers $\xi$ and $\zeta$.
\end{proposition}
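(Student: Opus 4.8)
The plan is to deduce the proposition as a corollary of three ingredients already assembled, without reopening the analysis of doubly reflected BSDEs: the change-of-measure identity $\mathcal{E}^{g^{\alpha}}=\mathcal{E}^{f^{\alpha}}_{Q^{\alpha}}$ recorded immediately above, the dual representation \eqref{inter2} of ${\bf u_0}$ in terms of the $g^{\alpha}$-evaluations, and Theorem \ref{srp2} applied to the driver
$$g(t,\omega,y,z,k,\alpha)=\lambda_t(\omega)\nu(t,\omega,\alpha)k+f(t,\omega,y,z,k,\alpha).$$
Accordingly I would split the argument along the two assertions of the statement: first the dual formula \eqref{uuu}, then the identification ${\bf u_0}=Y_0$ through the DRBSDE with driver ${\bf G}$ of \eqref{ddd}. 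The whole content has in effect been front-loaded into the verification that $g$ fits the framework of Section \ref{mamo} and into the coincidence of the two nonlinear pricing systems, so what remains is essentially bookkeeping.

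For the dual formula I would start from \eqref{inter2}, which reads ${\bf u_0}=\sup_{\alpha}\inf_{\sigma}\sup_{\tau}\mathcal{E}^{g^{\alpha}}_{0,\tau\wedge\sigma}[I(\tau,\sigma)]=\inf_{\sigma}\sup_{\alpha}\sup_{\tau}\mathcal{E}^{g^{\alpha}}_{0,\tau\wedge\sigma}[I(\tau,\sigma)]$, and replace, for each fixed $(\alpha,\sigma,\tau)$, the quantity $\mathcal{E}^{g^{\alpha}}_{0,\tau\wedge\sigma}[I(\tau,\sigma)]$ by $\mathcal{E}^{f^{\alpha}}_{Q^{\alpha},0,\tau\wedge\sigma}[I(\tau,\sigma)]$. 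Since these two numbers are equal for every admissible payoff, the three nested extrema are unchanged and \eqref{uuu} follows termwise. The one point that deserves care is that the coincidence $\mathcal{E}^{g^{\alpha}}=\mathcal{E}^{f^{\alpha}}_{Q^{\alpha}}$ was recorded for payoffs in $L^{p}(\mathcal{G}_S)$ with $p>2$, whereas the payoffs $I(\tau,\sigma)=\xi_{\tau}{\bf 1}_{\tau\le\sigma}+\zeta_{\sigma}{\bf 1}_{\sigma<\tau}$ only lie in $L^2$ because $\xi,\zeta\in S^2$; I would check that the moment bounds $(Z^{\alpha}_T)^{\pm1}\in L^{q}$ for all $q\ge1$ (established just above) guarantee, via H\"older's inequality, that each $I(\tau,\sigma)$ belongs to the common domain on which the $P$-BSDE with driver $g^{\alpha}$ and the $Q^{\alpha}$-BSDE with driver $f^{\alpha}$ share the same solution. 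This is the only genuinely technical step, and it is routine given those estimates.

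For the identification of ${\bf u_0}$ I would simply invoke Theorem \ref{srp2}. The driver $g$ above is uniformly $\lambda$-admissible and satisfies \eqref{rrrbis} (the lower bound on $\theta_t^{y,z,k_1,k_2}$ being inherited from $\nu>C_1>-1$), as already verified before the statement, so the model belongs to the framework of Section \ref{mamo}. Since $\xi$ and $\zeta$ are RCLL, Theorem \ref{srp2} yields ${\bf u_0}={\bf v_0}=Y_0$, where $Y$ is the $P$-component of the solution of the DRBSDE associated with the driver ${\bf G}=\sup_{\alpha\in U}g(\cdot,\alpha)$ and barriers $\xi,\zeta$. Spelling out ${\bf G}$ gives exactly $\sup_{\alpha\in U}\big(\lambda_t\nu(t,\alpha)k+f(t,y,z,k,\alpha)\big)$, i.e. the map \eqref{ddd}, which closes the argument.

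I expect the only real obstacle to be the integrability matching of the second paragraph, namely ensuring that the two evaluations still agree on the $L^2$ payoffs $I(\tau,\sigma)$ even though the change of measure is naturally stated in $L^{p}$; the all-order moment bounds on $(Z^{\alpha}_T)^{\pm1}$ are precisely what makes this work. Everything else is a mechanical substitution into \eqref{inter2} together with a direct appeal to Theorem \ref{srp2}.
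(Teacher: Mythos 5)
Your proposal is correct and follows essentially the same route as the paper, whose proof of this proposition is precisely the combination of the identity $\mathcal{E}^{g^{\alpha}}=\mathcal{E}^{f^{\alpha}}_{Q^{\alpha}}$, the dual representation \eqref{inter2}, and Theorem \ref{srp2} applied to the driver $g(t,y,z,k,\alpha)=\lambda_t\nu(t,\alpha)k+f(t,y,z,k,\alpha)$. Your extra remark on matching the $L^{p}$ domain of the change-of-measure identity with the $L^2$ payoffs $I(\tau,\sigma)$ is a legitimate point of care that the paper passes over silently, but it does not change the argument.
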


\section{Complementary results}\label{sec-comp}
\subsection{Pricing of European options from the buyer's point of view }\label{optionbuyer}
Let us consider the pricing and hedging problem of a European option with maturity $T$ and payoff $\xi \in L^2({\cal G}_T)$ from 
the buyer's point of view. Supposing the initial price of the option is $z$, he starts with the amount $-z$ at time $t=0$, and looks to find a risky-assets strategy $\tilde \varphi$ such that the payoff  that he receives at time $T$ allows him to recover the debt he incurred at time $t=0$ by buying the option, that is such that 
$V^{-z, \tilde \varphi}_T + \xi=0 \quad {\rm a.s.}\,$
or equivalently, $V^{-z, \tilde \varphi}_T =- \xi$ a.s.\,

The buyer's price of the option is thus equal to 
the opposite of the seller's price of the  option with payoff  $-\xi$, that is $-{\cal E}_{0,T}^{^{g}} (-\xi)=- \tilde X_0$, where
$( \tilde X,  \tilde Z, \tilde K)$ is the solution of the BSDE associated with driver $g$ and terminal condition $-\xi$. Let us specify the hedging strategy for the buyer. Suppose that the initial price of the option is $z:= - \tilde X_0$. The process $\tilde X$ is equal to  the value of the portfolio associated with initial value $-z= \tilde X_0$ and
strategy $\tilde \varphi $ $:= \Phi  
( \tilde Z,\tilde K)$ (where $\Phi$ is defined in Definition \ref{stbis})  that is
 $\tilde X= V^{\tilde X_0, \tilde \varphi}= V^{-z, \tilde \varphi}$. Hence, $V^{-z, \tilde \varphi}_T = \tilde X_T= -\xi$ a.s.\,, which yields that  $\tilde \varphi$ 
is the hedging risky-assets strategy for the buyer. Similarly, $-{\cal E}_{t,T}^{^{g}} (-\xi)=- \tilde X_t$ satisfies an analogous property at time $t$, and is is called the {\em hedging price for the buyer}
 at time $t$.


This leads to the {\em nonlinear pricing system $\tilde {\cal E}^{^{g}}$ relative to the buyer} in the market ${\cal M}^g$
 defined for each $(S, \xi) \in [0,T]\times L^2({\cal G}_S)$ by 
 \begin{equation}\label{tildeE}
 \tilde {\cal E}^{^{g}}_{\cdot, S}(\xi):=
-{\cal E}^{^{g}}_{\cdot, S} (-\xi).
\end{equation}
%
%

%
\begin{remark} \label{perfectegal}
When $g(t,0,0,0) =0$, then $\tilde {\cal E}^{^{g}}_{\cdot, S}(0)=0$. Moreover, by the comparison theorem for BSDEs with default, if $\xi \geq 0$, then $\tilde {\cal E}^{^{g}}_{\cdot, S}(\xi)\geq 0$.

Note that $\tilde {\cal E}^{^{g}}_{\cdot, S}(\xi)$ is equal to the solution of the BSDE with driver $-g(t,-y,-z,-k)$ and terminal condition $\xi$. Hence, if we suppose that 
$-g(t,-y,-z,-k) \leq g(t,y,z,k)$ (which is satisfied if, for example, $g$ is convex with respect to $(y,z,k)$), then, by the comparison theorem for BSDEs, we have  $\tilde {\cal E}^{^{g}}_{\cdot, S}(\xi)= -{\cal E}^{^{g}}_{\cdot, S} (-\xi) \leq  {\cal E}^{^{g}}_{\cdot, S}(\xi)$  for each $(S, \xi) \in [0,T]\times L^2({\cal G}_S)$.
\footnote{Note that a price functional $p$ generally satisfies $-p (-\xi) \leq  p(\xi)$(see e.g. \cite{Jouini} Section 2).}

Moreover, when $-g(t,-y,-z,-k) = g(t,y,z,k)$ (which is satisfied if, for example, $g$ is linear with respect to $(y,z,k)$, as in the perfect market case), we have $\tilde {\cal E}^{^{g}}=  {\cal E}^{^{g}}$. 
%
\end{remark}

 \subsection{Pricing of the game option from the buyer's point of view }
 In this section, we consider  the point of view of the buyer of the game option. Supposing the initial price of the game option is $z$, he starts with the amount $-z$ at time $t=0$, and looks to find a {\em super-hedge}, that is
an exercise time $\tau$  and a risky-assets strategy $\varphi$, such that the payoff  that he receives allows him to recover the debt he incurred at time $t=0$ by buying the game option, no matter the cancellation time chosen by the seller. This notion of super-hedge for the buyer can be defined more precisely as follows.

 \begin{definition}
  A {\em buyer's super-hedge} against the game option with initial price $z\in {\mathbb R}$ is a pair $(\tau, \varphi)$ of a stopping time $\tau \in {\cal T}$ and a risky-assets strategy $ \varphi$ $\in$  ${\mathbb H}^2 \times  {\mathbb H}^2_{\lambda}$ such that 
  \begin{equation}\label{condB}
V^{-z, \varphi}_{t } \geq -\zeta_t , \; 0\leq t < \tau \; \text{ a.s. and } V^{-z, \varphi}_{\tau } \geq - \xi_{\tau}
 \text{ a.s.}
\end{equation} 
We denote by $\mathcal{B}_{\xi, \zeta}(z)$ the set of all  {\em buyer's super-hedges} against the game option with payoffs $(\xi, \zeta)$
associated with initial price $z\in {\mathbb R}$. 

The {\em buyer's  price} of the game option in the market model ${\cal M}^g$, denoted by $\tilde u_0$,  is defined as the supremum of the initial prices which allow the buyer to be super-hedged,
 that is \footnote{
We have $(0,0) \in {\cal B}_{\xi, \zeta}(\xi_0)$. Hence, $\tilde u_0\geq \xi_0$.
Moreover, similarly to Remark \ref{positive2}, if $g(t,0,0,0)=0$ and $\xi_0\geq 0$, then 
$\tilde u_0= \sup\{z \geq 0, \,\,\, \exists (\tau, \varphi) \in\mathcal{B}_{\xi, \zeta}(z)\}$.}
\begin{equation}\label{rbp}
\tilde u_0:= \sup\{z \in \mathbb{R}, \,\,\, \exists (\tau, \varphi) \in\mathcal{B}_{\xi, \zeta}(z)\}.
\end{equation}

\end{definition}
The first inequality of \eqref{condB} also holds at time $t= \tau$  
because $\xi \leq \zeta$.
It follows that ${\cal B}_{\xi, \zeta}(z)= {\cal S}_{-\zeta, -\xi}(-z)$, where ${\cal S}_{-\zeta, -\xi}(-z)$ is the set of 
{\em seller's super-hedges} against the game option with payoffs $(-\zeta, -\xi)$
associated with initial capital $-z$.\\
Hence,
$- \tilde u_0= \inf \{x \in \mathbb{R}, \,\,\, \exists (\tau, \varphi) \in{\cal S}_{-\zeta, -\xi}(x)\}.$
We thus have:
\begin{theorem}\label{bs}
The {\em buyer's price} of the game option with payoffs $(\xi,\zeta)$ is equal to 
the opposite of the {\em seller's price} of the game option with payoffs  $(-\zeta, -\xi)$. 
%
\end{theorem}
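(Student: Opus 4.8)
The plan is to establish the set-theoretic identity ${\cal B}_{\xi, \zeta}(z)= {\cal S}_{-\zeta, -\xi}(-z)$ already announced above the statement, and then to read off the claim by passing to the supremum and performing the change of variable $x=-z$. First I would unwind the two definitions. By \eqref{condB}, a pair $(\tau,\varphi)$ lies in ${\cal B}_{\xi,\zeta}(z)$ precisely when $V^{-z,\varphi}_t\ge -\zeta_t$ for $0\le t<\tau$ and $V^{-z,\varphi}_\tau\ge-\xi_\tau$. The only delicate point is the behaviour at the boundary $t=\tau$: the first inequality in \eqref{condB} is imposed on the half-open interval $[0,\tau[$, whereas the seller's super-hedge condition \eqref{condA} requires its first inequality on the closed interval $[0,\sigma]$. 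This gap is bridged by the standing assumption $\xi_t\le\zeta_t$: since $-\xi_\tau\ge-\zeta_\tau$, the terminal inequality $V^{-z,\varphi}_\tau\ge-\xi_\tau$ forces $V^{-z,\varphi}_\tau\ge-\zeta_\tau$ as well, so in fact $V^{-z,\varphi}_t\ge-\zeta_t$ holds on the whole closed interval $[0,\tau]$.

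Setting $x:=-z$ and regarding the exercise time $\tau$ as a cancellation time, the conditions just obtained, namely $V^{x,\varphi}_t\ge-\zeta_t$ for $0\le t\le\tau$ and $V^{x,\varphi}_\tau\ge-\xi_\tau$, are exactly those defining a seller's super-hedge against the game option with payoffs $(-\zeta,-\xi)$, i.e.\ $(\tau,\varphi)\in{\cal S}_{-\zeta,-\xi}(-z)$. Conversely, any $(\tau,\varphi)\in{\cal S}_{-\zeta,-\xi}(-z)$ satisfies $V^{-z,\varphi}_t\ge-\zeta_t$ on $[0,\tau]$ and $V^{-z,\varphi}_\tau\ge-\xi_\tau$, which in particular yields \eqref{condB}; hence $(\tau,\varphi)\in{\cal B}_{\xi,\zeta}(z)$. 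This establishes the equality of sets ${\cal B}_{\xi,\zeta}(z)={\cal S}_{-\zeta,-\xi}(-z)$ for every $z\in\R$. I would also note in passing that $(-\zeta,-\xi)$ is a legitimate pair of payoffs for a game option, since the ordering $\xi\le\zeta$ gives $-\zeta\le-\xi$ and $\zeta_T=\xi_T$ gives $(-\zeta)_T=(-\xi)_T$.

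It then remains to take suprema. By definition \eqref{rbp},
\[
\tilde u_0=\sup\{z\in\R:\ {\cal B}_{\xi,\zeta}(z)\neq\emptyset\}
=\sup\{z\in\R:\ {\cal S}_{-\zeta,-\xi}(-z)\neq\emptyset\},
\]
and the change of variable $x=-z$ turns the supremum over $z$ into minus the infimum over $x$, giving
\[
\tilde u_0=-\inf\{x\in\R:\ {\cal S}_{-\zeta,-\xi}(x)\neq\emptyset\}.
\]
By the definition \eqref{seller's price} of the seller's price, the infimum on the right-hand side is precisely the seller's price of the game option with payoffs $(-\zeta,-\xi)$, so $\tilde u_0$ equals its opposite, which is the assertion. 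I expect no genuine analytic obstacle here: the entire content is the boundary observation at $t=\tau$, which is exactly where the hypothesis $\xi\le\zeta$ is used, together with the elementary fact that reflecting the wealth variable through $z\mapsto-z$ exchanges a supremum of attainable prices for the buyer with an infimum of attainable capitals for the seller.
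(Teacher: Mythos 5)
Your proposal is correct and follows exactly the paper's own argument: the identity ${\cal B}_{\xi,\zeta}(z)={\cal S}_{-\zeta,-\xi}(-z)$, justified by the boundary observation that $\xi\le\zeta$ extends the first inequality of \eqref{condB} to the closed interval $[0,\tau]$, followed by the change of variable $x=-z$ turning the supremum into minus an infimum. The paper states this more tersely in the two sentences preceding the theorem, but the content is identical.
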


The previous results (Theorem \ref{superhedging} and Theorem \ref{epsil}) can thus be applied.
In particular, we have the following dual formulation of the buyer's price:
\begin{equation} \label{qbb}
\tilde u_0 =     \sup_{\tau \in \mathcal{T}} \inf_{\sigma \in \mathcal{T} } 
 \tilde { \cal{E}}^{g}_{0,\tau \wedge \sigma} [I(\tau, \sigma)] =\inf_{\sigma \in \mathcal{T} }  \sup_{\tau \in \mathcal{T}} \tilde { \cal{E}}^{g}_{0,\tau \wedge \sigma}[I(\tau, \sigma)], 
\end{equation} where 
$ \tilde  {\cal{E}}^{g}_{0,\tau \wedge \sigma} [I(\tau, \sigma)] = -\cal{E}^{g}_{0,\tau \wedge \sigma} 
 [- I(\tau, \sigma)] $. 
The quantity $ \tilde  {\cal{E}}^{g}_{0,\tau \wedge \sigma} [I(\tau, \sigma)] $ corresponds to the buyer's price  of the European option with payoff $I(\tau, \sigma)$ and terminal time $\tau \wedge \sigma$
(see \eqref{tildeE}).
\begin{remark}
In the special case of a perfect market, the dynamics of the wealth process $X$ are linear with respect to $(X, \varphi)$, which implies that 
the {\em buyer's price} $\tilde u_0$ is equal to the {\em seller's price} $ u_0$ (and $\tilde {\cal E}^{^{g}}=  {\cal E}^{^{g}}$, as seen in Remark \ref{perfectegal}).
\end{remark}

Let  $(\Tilde{Y}, \Tilde{Z}, \Tilde{K}, \Tilde{A}, \Tilde{A}')$ be the solution of the DRBSDE associated with  
driver $g$  and barriers $(-\zeta,-\xi)$. By  Theorem \ref{epsil}, the {\em buyer's price} is equal to 
the opposite of the solution, that is, $ \tilde  u_0=-\Tilde{Y}_0.$ 

Moreover, by Theorem \ref{superhedging}, when $\xi$  is left-u.s.c. along stopping times (but not necessarily $-\zeta$), the pair $(\tilde \tau, \tilde \varphi)$, where
$\tilde \tau:= \inf \{t \geq 0: \,\,\, -\tilde Y_t = \xi_t\}$
and $\tilde \varphi $ $:= \Phi  
( \tilde Z, \tilde K)$,  is a {\em buyer's super-hedge}.

\paragraph{Buyer's robust  price of the game option in the case with ambiguity.}

In this paragraph, we consider the market model with ambiguity described in Section \ref{mamo}.
 \begin{definition}
 A {\em buyer's robust super-hedge} against the game option with initial price $z\in {\mathbb R}$ is a pair $(\tau, \varphi)$ of a stopping time $\tau \in {\cal T}$ and a strategy $ \varphi$ $\in$  ${\mathbb H}^2 \times  {\mathbb H}^2_{\lambda}$ such that 
  \begin{equation}\label{condBa}
V^{\alpha ,z, \varphi}_{t } \geq -\zeta_t , \; 0\leq t < \tau \; \text{ a.s. and } V^{\alpha ,z, \varphi}_{\tau } \geq - \xi_{\tau}
 \text{ a.s.}\,, \quad \forall \alpha \in {\cal U}.
\end{equation} 
We denote by $\mathcal{B}^r_{\xi, \zeta}(z)$ the set of all  {\em buyer's robust super-hedges} against the game option with payoffs $(\xi, \zeta)$
associated with initial price $z\in {\mathbb R}$.

The {\em buyer's  robust price} of the game option is defined as the supremum of the initial prices which allow the buyer to 
construct a {\em robust superhedge}, that is
\begin{equation}\label{robustb}
{\bf \tilde  u_0}:= \sup\{z \in \mathbb{R}, \,\,\, \exists (\tau, \varphi) \in \mathcal{B}^r_{\xi, \zeta}(z)\}.
\end{equation}

\end{definition}
 Since $\xi \leq \zeta$, condition \eqref{condBa} is equivalent to 
$$V^{-z, \varphi}_{t } \geq -\zeta_t , \; 0\leq t \leq \tau \; \text{ a.s. and } V^{-z, \varphi}_{\tau } \geq - \xi_{\tau}
 \text{ a.s.}\,, \quad \forall \alpha \in {\cal U}.$$
It follows that ${\cal B}^r_{\xi, \zeta}(z)= {\cal S}^r _{-\zeta, -\xi}(-z)$, where ${\cal S}^r _{-\zeta, -\xi}(-z)$ is the set of 
 seller's robust super-hedges against the game option with payoffs $(-\zeta, -\xi)$
associated with initial capital $-z$. We thus have
\begin{theorem}
The {\em buyer's robust price} of the game option with payoffs $(\xi,\zeta)$ is equal to 
the opposite of the {\em seller's robust price} of the game option with payoffs  $(-\zeta, -\xi)$. 
\end{theorem}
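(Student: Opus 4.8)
The plan is to reduce the statement to the definitional duality already recorded just above, namely that a buyer's robust super-hedge for the payoffs $(\xi,\zeta)$ starting from $-z$ is literally a seller's robust super-hedge for the reflected payoffs $(-\zeta,-\xi)$ starting from $-z$. The only wealth object in play is $V^{\alpha,-z,\varphi}$, the wealth in the market ${\cal M}_\alpha$ associated with initial capital $-z$ and strategy $\varphi$; the buyer's and seller's dynamics coincide because both are governed by \eqref{richessealpha}, so no new wealth process needs to be introduced and the whole argument is a reflection of signs and barriers.

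First I would observe that the buyer's constraint in \eqref{condBa}, which is imposed on $[0,\tau[$, automatically extends to the closed interval $[0,\tau]$: at $t=\tau$ one has $V^{\alpha,-z,\varphi}_\tau \geq -\xi_\tau \geq -\zeta_\tau$, the last inequality being $\xi\leq\zeta$. Hence \eqref{condBa} is equivalent to requiring $V^{\alpha,-z,\varphi}_t\geq -\zeta_t$ for $0\leq t\leq\tau$ and $V^{\alpha,-z,\varphi}_\tau\geq -\xi_\tau$, for every $\alpha\in{\cal U}$. Comparing this with the seller's robust super-hedge condition \eqref{condAbis} written for the barriers $(-\zeta,-\xi)$ (which form an admissible pair since $-\zeta\leq -\xi$), and identifying the seller's initial capital $x$ with $-z$ and the seller's cancellation time $\sigma$ with the buyer's exercise time $\tau$, the two systems of inequalities are identical. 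This gives the set equality
$$
{\cal B}^r_{\xi,\zeta}(z)={\cal S}^r_{-\zeta,-\xi}(-z),
$$
where ${\cal S}^r_{-\zeta,-\xi}(\cdot)$ denotes the set of seller's robust super-hedges for the payoffs $(-\zeta,-\xi)$.

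Finally I would pass to the prices. From \eqref{robustb} and the set equality,
$$
{\bf \tilde u_0}=\sup\{z\in\R:\ {\cal S}^r_{-\zeta,-\xi}(-z)\neq\emptyset\}
=-\inf\{x\in\R:\ {\cal S}^r_{-\zeta,-\xi}(x)\neq\emptyset\},
$$
by the substitution $x=-z$ and the elementary identity $\sup(-A)=-\inf A$. The right-hand infimum is exactly the seller's robust price of the game option with payoffs $(-\zeta,-\xi)$, by the analogue of \eqref{robustprice} for those barriers. Therefore ${\bf \tilde u_0}$ equals the opposite of that seller's robust price, which is the claim.

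The argument is a verbatim robust analogue of the proof of Theorem \ref{bs}, so the only point requiring care---and the closest thing to an obstacle---is the boundary bookkeeping at $t=\tau$ together with the swap-and-negate of the barriers; once the set equality is secured, the price identity is purely order-theoretic and holds for all $\alpha\in{\cal U}$ simultaneously, so that no regularity assumption on $\xi$ or $\zeta$ and no solvability of a DRBSDE is needed.
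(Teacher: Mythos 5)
Your proposal is correct and follows essentially the same route as the paper: the closed-interval extension of the buyer's constraint at $t=\tau$ via $\xi\leq\zeta$, the resulting set identity ${\cal B}^r_{\xi,\zeta}(z)={\cal S}^r_{-\zeta,-\xi}(-z)$, and the substitution $x=-z$ turning the supremum into the opposite of an infimum. The paper states this almost verbatim in the two lines preceding the theorem; your only addition is to make the order-theoretic step $\sup(-A)=-\inf A$ explicit.
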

The previous results (Theorem \ref{srp} and \ref{srp2}) can thus be applied. In particular, we have the following dual formulation of the {\em buyer's robust price}:
\begin{equation} \label{qter}
{\bf \tilde  u_0} = \inf_{\alpha \in \mathcal{U}}    \sup_{\tau \in \mathcal{T}} \inf_{\sigma \in \mathcal{T} } \tilde{\cal{E}}^{g^{\alpha}}_{0,\tau \wedge \sigma}[I(\tau, \sigma)]
= \inf_{\alpha \in \mathcal{U}} \inf_{\sigma \in \mathcal{T} } \sup_{\tau \in \mathcal{T}} \tilde{\cal{E}}^{g^{\alpha}}_{0,\tau \wedge \sigma}[I(\tau, \sigma)], 
\end{equation} where $\tilde{\cal{E}}^{g^{\alpha}}_{0,\tau \wedge \sigma}[I(\tau, \sigma)] = 
-\cal{E}^{g^{\alpha}}_{0,\tau \wedge \sigma}[-I(\tau, \sigma)]$. 
Using \eqref{qbb}, we derive that the {\em buyer's robust price }${\bf \tilde  u_0}$ is equal to the infimum over $\alpha \in \mathcal{U}$ of the 
buyer's  prices in ${\cal M}_{\alpha}$. 
\begin{remark} 
By Proposition \ref{interversion}, we derive that 
$$
{\bf \tilde  u_0} = 
\sup_{\tau \in \mathcal{T}} \inf_{\alpha \in \mathcal{U}}    \inf_{\sigma \in \mathcal{T} } \tilde{\cal{E}}^{g^{\alpha}}_{0,\tau \wedge \sigma}[I(\tau, \sigma)].
$$
Note that 
for each $\alpha \in \mathcal{U}$, the quantity $\tilde{\cal{E}}^{g^{\alpha}}_{0,\tau \wedge \sigma}[I(\tau, \sigma)]$ is the buyer's price in the market model ${\cal M}_{\alpha}$ of the European option with payoff $I(\tau, \sigma)$ and terminal time $\tau \wedge \sigma$ (see \eqref{tildeE}).
\end{remark}
Let  $(\Tilde{Y}, \Tilde{Z}, \Tilde{K}, \Tilde{A}, \Tilde{A}')$ be the solution of the DRBSDE associated with  
driver ${\bf G}$  defined by \eqref{d} and barriers $(-\zeta,-\xi)$. By Theorem \ref{srp2}, 
the {\em buyer's robust price} of the game option is equal to $-\Tilde{Y}_0$, that is,
 ${\bf \tilde  u_0}=-\Tilde{Y}_0.$
 Moreover, by Theorem \ref{srp}, when $\xi$  is left-u.s.c. along stopping times (but not necessarily $-\zeta$), the pair $(\tilde \tau, \tilde \varphi)$, where $\tilde \tau:= \inf \{t \geq 0: \,\,\, -\tilde Y_t = \xi_t\}$ and $\tilde \varphi $ $:= \Phi  
( \tilde Z, \tilde K)$,  is a {\em buyer's robust super-hedge} of the game option.
 \subsection{Seller's price and buyer's price {\em processes} of the game option}
 We can define the seller's price of the game option at each stopping time $S \in {\cal T}$. More precisely, for each wealth  $X\in L^2( {\cal F}_S) $ (at initial time $S$), an {\em $S$-super-hedge} against the game option is a pair $(\sigma, \varphi)$ of a stopping time $\sigma \in {\cal T}_S$ and a portfolio strategy $ \varphi$ $\in$  ${\mathbb H}^2 \times  {\mathbb H}^2_{\lambda}$ such that
$
V^{S,X, \varphi}_{t } \geq \xi_t,$  $S\leq t \leq \sigma$ a.s. and $ V^{S,X, \varphi}_{\sigma } \geq \zeta_{\sigma}$ a.s.\,,
where $V^{S,X, \varphi}$ denotes the wealth process associated with initial time $S$ and initial condition $X$.
The {\em seller's price} at time $S$ is defined by 
$
 u(S):= {\rm ess} \inf \{X\in L^2( {\cal F}_S),\,\, \exists  (\sigma, \varphi) \in {\cal S}_S (X) \},
$ where ${\cal S} _S(X) $ is the  set of all $S$-super-hedges associated with initial wealth $X$. Using similar arguments to those used in 
the proof of Theorem \ref{epsil}, we obtain:
$$u(S)={\rm ess} \inf_{\sigma \in {\cal T}_S }\,{\rm ess} \sup_{\tau \in {\cal T}_S} \, \cal{E}^g_{S,\tau \wedge \sigma}(I(\tau, \sigma))=
{\rm ess} \sup_{\tau \in {\cal T}_S} \, {\rm ess} \inf_{\sigma \in {\cal T}_S} \, \cal{E}^g_{S,\tau \wedge \sigma}(I(\tau, \sigma)) =Y_S \quad {\rm a.s.}$$
where $(Y, Z, K, A,A')$ is the  solution of 
DRBSDE \eqref{DRBSDE}.\\
Similarly, we can define the  {\em buyer's price} at time $S$.

 \subsection{Game options with intermediate
dividends }
Suppose that a European option pays a terminal payoff $\xi$ at terminal time $S$ and an intermediate dividend, modeled by a nondecreasing  RCLL  adapted process 
$(D_t)$ with $D_0=0$. 
There exists an unique solution  $(X, Z, K)$  in $ \mathcal{S}^2 \times {\mathbb H}^2 \times  {\mathbb H}^2_{\lambda}$ of the following BSDE:
\begin{equation}\label{syste}
-dX_t = g(t,X_t, Z_t,K_t ) dt + dD_t -  Z_t dW_t - K_t dM_t; \quad
X_S=\xi.
\end{equation}
The process $X$ is  the wealth process associated with initial value $x= X_0$
 and strategy $\varphi $ $= \Phi  ( Z,K)$. Here, $dD_t$ represents the amount withdrawn from the portfolio between $t$ and $t + dt$ in order to pay the dividends to the buyer.
Hence, the amount $X_0$ allows the seller to be perfectly hedged against the option, in the sense that 
 it allows him/her to pay the intermediate dividends and the terminal payoff to the buyer, 
by investing the amount $X_0$ along the strategy  $\varphi $  in the market.
The  price for the seller  (at time $0$)  
of this option is thus given by $X_0$ and the associated hedging strategy is equal to $\varphi $.
Note that the driver of BSDE \eqref{syste} is given by the $\lambda$-{\em admissible}  
 ``generalized" driver $g(t,X_t, Z_t,K_t ) dt + dD_t$. 
 This  leads  to  the following {\em nonlinear pricing} system:\\
For each $S\in [0,T]$, for each $\xi \in {L}^2({\cal G_S})$ 
and for each $D \in {\cal A}^2$, the associated 
{\em $g$-value} is defined by 
${\cal E}_{t,S}^{^{g,D}} (\xi):= X^D_t(S, \xi)$ for each $t \in [0,S]$. Note that ${\cal E}_{t,S}^{^{g,D}} (\xi)$ can be defined 
on the whole interval $[0,T]$ by setting ${\cal E}_{t,S}^{^{g,D}} (\xi):= {\cal E}_{t,T}^{^{g^S,D^S}} (\xi)$ for $t \geq S$, where 
$g^S(t,.):= g (t,.) {\bf 1}_{t \leq S}$ and $D_t^S := D_{t \wedge S}$. 
Some properties of this nonlinear pricing system are provided in \cite{DQS4}.\\
Concerning the pricing of the game option, the approach is the same, replacing the driver $g$ by the ``generalized" driver $g(\cdot ) dt + dD_t$, and ${\cal E}^{^{g}}$ by ${\cal E}^{^{g,D}}$.

\section{Appendix}

We show the following estimates for DRBSDEs in our framework, with universal  constants. 

\begin{proposition}[A priori estimate for DRBSDEs] \label{est}
Let $f^1$ be a $\lambda$-{\em admissible} driver with $\lambda$-constant $C$ and let  $f^2$ be a driver. 
Let $\xi$  and $\zeta$ be two adapted RCLL processes with $\zeta_T= \xi_T$ a.s.,   $\xi \in {\cal S}^2$,  $\zeta \in {\cal S}^2$, $\xi_t \leq \zeta_t$, $0 \leq t \leq T$ a.s.\,, and satisfying  Mokobodzki's condition.\\
 For $i=1,2$, let $(Y^i, Z^i ,K^i, A^i, A^{'i})$  be  a solution of the DRBSDE associated with  
terminal time $T$, driver $f^i$  and barriers $\xi$ and $\zeta$.
 Let $ \eta, \beta >0 $ be such that 
 $\beta \geq \frac{3}{\eta} +2C $ 
and $\eta \leq \frac{1}{C^2}$. \\
Let $\bar f(s): = f^1(s, Y^2_s, Z^2_s, K_s^2) - f^2(s, Y^2_s, Z^2_s, K_s^2)$.
For each $t \in [0,T]$, we then have
\begin{equation}\label{A26}
e^{\beta  t} (Y^1_s - Y^2_s)   ^2 \leq   \eta \,{\mathbb E}[ \int_t^T e^{\beta  s} \bar f(s)^2  ds \mid 
{\cal G}_t ] \;\; \text{ \rm a .s.}\, 
\end{equation}
Moreover, 
$\|\bar Y \|_\beta^2 \leq T  \eta
\|\bar f \|_\beta^2,$ and if $\eta < \frac{1}{C^2}$, we then have 
$\|\bar Z \|_\beta^2 + \|\bar K \|_{\lambda,\beta}^2
\leq \frac{\eta}{1 - \eta C^2}  \|\bar f \|_\beta^2.$

\end{proposition}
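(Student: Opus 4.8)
The plan is to apply Itô's formula to the semimartingale $e^{\beta s}\bar Y_s^2$, where $\bar Y := Y^1 - Y^2$, $\bar Z := Z^1 - Z^2$, $\bar K := K^1 - K^2$, $\bar A := A^1 - A^2$ and $\bar A' := A'^1 - A'^2$, between a fixed $t$ and $T$, using that $\bar Y_T = \xi_T - \xi_T = 0$. Writing $\Delta f_s := f^1(s, Y^1_s, Z^1_s, K^1_s) - f^2(s, Y^2_s, Z^2_s, K^2_s)$, the difference satisfies $-d\bar Y_s = \Delta f_s\, ds + d\bar A_s - d\bar A'_s - \bar Z_s\, dW_s - \bar K_s\, dM_s$. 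After Itô's formula, the continuous bracket of $\int \bar Z\, dW$ produces $\int |\bar Z_s|^2 ds$, while the jump part of $[\bar Y]$, once compensated, yields $\int \lambda_s|\bar K_s|^2 ds$ together with nonnegative contributions coming from the jumps of $\bar A$ and $\bar A'$, which I shall simply discard. The stochastic integrals against $W$ and $M$ are true martingales by the $\mathbb{H}^2$ and $\mathbb{H}^2_{\lambda}$ integrability of $\bar Z,\bar K$, hence vanish under $\mathbb{E}[\,\cdot\mid \mathcal{G}_t]$.

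The key structural step is to discard the reflection contribution $2\int_t^T e^{\beta s}\bar Y_{s^-}\, d(\bar A_s - \bar A'_s)$ by showing it is nonpositive. At the lower obstacle, on $\{dA^1_s>0\}$ the Skorokhod condition (iii) forces $Y^1_{s^-}=\xi_{s^-}$, so $\bar Y_{s^-}=\xi_{s^-}-Y^2_{s^-}\le 0$ (using $Y^2\ge\xi$), and symmetrically $\bar Y_{s^-}\ge 0$ on $\{dA^2_s>0\}$; the predictable-jump conditions in (iii) ensure these sign relations persist for the jump parts of $A^1,A^2$. Thus $\int \bar Y_{s^-}\,d\bar A_s\le 0$, and the analogous argument at the upper obstacle $\zeta$ (where $Y\le\zeta$) gives $-\int \bar Y_{s^-}\,d\bar A'_s\le 0$.

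Next I would split $\Delta f_s = \bigl[f^1(s, Y^1_s, Z^1_s, K^1_s) - f^1(s, Y^2_s, Z^2_s, K^2_s)\bigr] + \bar f(s)$, control the first bracket by the $\lambda$-Lipschitz property \eqref{lip} of $f^1$, and apply Young's inequality in the form $2ab\le \tfrac1\eta a^2 + \eta b^2$ to each of the three resulting cross-terms, obtaining $2\bar Y_s\Delta f_s \le (2C+\tfrac{3}{\eta})\bar Y_s^2 + \eta C^2|\bar Z_s|^2 + \eta C^2\lambda_s|\bar K_s|^2 + \eta\,\bar f(s)^2$. Substituting into the energy identity, dropping the reflection term, and rearranging gives, almost surely,
\[
e^{\beta t}\bar Y_t^2 + \mathbb{E}\Big[\int_t^T e^{\beta s}\big((\beta - 2C - \tfrac{3}{\eta})\bar Y_s^2 + (1-\eta C^2)(|\bar Z_s|^2 + \lambda_s|\bar K_s|^2)\big)\,ds \,\Big|\, \mathcal{G}_t\Big] \le \eta\,\mathbb{E}\Big[\int_t^T e^{\beta s}\bar f(s)^2\,ds \,\Big|\, \mathcal{G}_t\Big].
\]
Since $\beta - 2C - \tfrac{3}{\eta}\ge 0$ by hypothesis and $1-\eta C^2\ge 0$ because $\eta\le 1/C^2$, every integrand on the left is nonnegative; discarding them yields \eqref{A26}.

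The two global estimates then follow by specialization. Integrating \eqref{A26} over $t\in[0,T]$ and using Fubini, $\int_0^T\!\!\int_t^T e^{\beta s}\bar f(s)^2\,ds\,dt = \int_0^T s\,e^{\beta s}\bar f(s)^2\,ds \le T\int_0^T e^{\beta s}\bar f(s)^2\,ds$, gives $\|\bar Y\|_\beta^2\le T\eta\|\bar f\|_\beta^2$. Taking $t=0$ with full expectation in the displayed inequality and now \emph{keeping} the $\bar Z,\bar K$ terms, whose coefficient $1-\eta C^2$ is strictly positive when $\eta<1/C^2$, gives $\|\bar Z\|_\beta^2 + \|\bar K\|_{\lambda,\beta}^2 \le \tfrac{\eta}{1-\eta C^2}\|\bar f\|_\beta^2$. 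I expect the main obstacle to be the careful bookkeeping of the jump terms in Itô's formula together with the verification, through the predictable-jump Skorokhod conditions in (iii), that the reflection term is genuinely nonpositive, including at the jumps of $A$ and $A'$.
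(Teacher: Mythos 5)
Your proposal is correct and follows essentially the same route as the paper's proof: Itô's formula applied to $e^{\beta s}\bar Y_s^2$, elimination of the reflection terms via the Skorokhod conditions (including the predictable-jump conditions), the $\lambda$-Lipschitz bound on the driver difference, and Young's inequality with the stated choice of constants (the paper groups the three cross-terms before squaring and sets $\eta=3\varepsilon^2$, while you apply Young termwise, but the resulting constants and the conditions $\beta\ge 2C+\tfrac{3}{\eta}$, $\eta\le\tfrac{1}{C^2}$ come out identically). The derivation of the two global estimates by integration and by retaining the $\bar Z,\bar K$ terms also matches the paper.
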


\begin{proof}  For $s$ in $[0,T]$, denote $\bar Y_s := Y^1_s - Y^2_s, \,\,\, \bar Z_s := Z^1_s - Z^2_s$,  $\bar K_s := K^1_s - K^2_s $.
By It\^o's formula applied  to the semimartingale $e^{\beta s} \bar Y_s$ 
between $t$ and $T$, we get 
\begin{align}
e^{\beta t} \bar Y_t ^2  & + \beta \int_t^T e^{\beta s} \bar Y_s^2 ds +
 \int_t^T e^{\beta s} \bar Z_s^2 ds + \int_t^T e^{\beta s}  \bar K_s^2 \lambda_s ds
  + \sum_{0 <s \leq T} e^{\beta s }(\Delta A^1_s-\Delta A_s^2-\Delta A_s^{'1}+\Delta A_s^{'2})^{2} 
 \nonumber \\  
 &=  2 \int_t^T e^{\beta s} \bar Y_s (f^1 (s, Y^1_s, Z^1_s, K^1_s) - f^2 (s, Y^2_s, Z^2_s, K^2_s)) ds  \nonumber \\
 &\quad - 2 \int_t^T e^{\beta s} \bar Y_s \bar Z_s dW_s
  -  \int_t^T e^{\beta s} (2  \bar Y_{s^-} \bar K_s +  \bar K_s^2) d M_t \nonumber \\
&+ 2 \int^T_t e^{\beta s} \overline{Y}_{s^-}  dA^1_s - 2\int^T_t e^{\beta s} \overline{Y}_{s^-}  dA^2_s
- 2  \int^T_0 e^{\beta s} \overline{Y}_{s^-} \, dA'^1_s - \int^T_0 e^{\beta s} \overline{Y}_{s^-}\,  dA'^2_s .  \label{russ}
\end{align}
Now, we have 
$\overline{Y}_s dA^{1,c}_s = (Y^1_s - \xi_s)dA^{1,c}_s - (Y^2_s - \xi_s)dA^{1,c}_s = 
- (Y^2_s - \xi_s)dA^{1,c}_s\leq 0, $
and by symmetry, $\overline{Y}_s dA^{2,c}_s  \geq 0$. By similar arguments, we obtain $\overline{Y}_{s^-} \Delta A_{s}^{1,d}$ 
 $\leq 0 $,  $\,\overline{Y}_{s^-} \Delta A_{s}^{2,d}  \geq 0$, 
 $\, \overline{Y}_s dA^{' 1, c}_{s}  \geq 0 $, 
 $\overline{Y}_{s^-} \Delta A^{'1, d}_{s}\geq 0 $
and $\overline{Y}_{s^-} \Delta {{A'}^{2,d}}_s  \leq 0$.
Hence, the four last terms of the r.h.s. of \eqref{russ} are non positive.
Taking the conditional expectation given ${\cal G}_t$, we obtain
\begin{align}
e^ {\beta t} & \bar Y_t^2 
 +  E \left[\beta \int_t^T e^{\beta s}  \bar Y_s^2 ds + \int_t^T e^{\beta s} ( \bar Z_s^2 + \bar K_s^2 \lambda_s) ds \mid {\cal G}_t \right] \nonumber \\
& \leq 2  E  \left[    \int_t^T e^{\beta s} \bar Y_s (f^1 (s, Y^1_s, Z^1_s, K^1_s) - f^2 (s, Y^2_s, Z^2_s, K^2_s)) ds   \mid {\cal G}_t\right].
\end{align}
Also, $|f^1(s,Y^1_s, Z^1_s, K^1_s) - f^2(s,Y^2_s,Z^2_s,K^2_s)|  \leq |f^1(s,Y^1_s, Z^1_s, K^1_s) - f^1(s,Y^2_s,Z^2_s,K^2_s)|  + |\bar f_s| $.\\
Using the $\lambda$-{\em admissibility} property of $f^1$, we derive that
$$
|f^1(s,Y^1_s, Z^1_s, K^1_s) - f^2(s,Y^2_s,Z^2_s,K^2_s)| 
  \leq C  | \bar Y_s| + C|\bar Z_s| + C |\bar K_s| \sqrt \lambda_s + |\bar f_s|.
$$
Note now that,  for all non negative numbers $\lambda$, $y$, 
$z$, $k$, $f$ and $\varepsilon >0$, we have\\
$ 2y (Cz + Ck \sqrt  \lambda  + f) \leq \frac{ y^2}{\varepsilon^2}+ \varepsilon^2(Cz+ Ck  \sqrt \lambda   + f)^2 \leq \frac{ y^2}{\varepsilon^2} +  3 \varepsilon^2(C^2 y^2+ C^2 k^2\lambda +f^2)$. Hence, 
\begin{align}\label{eq2a}
&e^ {\beta t}  \bar Y_t^2  +  {\mathbb E} \left[\beta \int_t^T e^{\beta s}  \bar Y_s^2 ds + \int_t^T e^{\beta s} ( \bar Z_s^2 +
 \bar K_s^2\lambda_s) ds \mid {\cal G}_t \right] \nonumber \\
& \leq {\mathbb E} \left[ (2C+\frac{ 1}{\varepsilon^2}) \int_t^T  e^{ \beta s}  \bar Y_s^2 ds + 3C^2 \varepsilon^2 \int_t^T e^{\beta s} ( \bar Z_s^2 + \bar K_s^2\lambda_s)ds
  + 3 \varepsilon^2   \int_t^T e^{\beta s}  \bar f_s^2 ds \mid {\cal G}_t \right].
\end{align}
Let us make the change of variable $\eta = 3 \epsilon^2$. Then, for each  $\beta,  \eta>0$  chosen as in the proposition,  these inequalities lead to
\eqref{A26}. 
By integrating  \eqref{A26}, we obtain $\|\bar Y \|_\beta^2 \leq T  \eta
\|\bar f \|_\beta^2$. Using  inequality \eqref{eq2a}, the last assertion of the Proposition  follows.
\end{proof}
From these estimates, we derive an existence and uniqueness result for DRBSDEs.\\
{\bf Proof of the existence and the uniqueness of the solution of a DRBSDE \eqref{DRBSDE}}: 
Let us first consider the case when the driver $g(t)$ does not depend on the solution. 
By using the representation property of ${\mathbb G}$-martingales (Lemma \ref{theoreme representation}) 
and some results of Dynkin games theory, one can show, proceeding as in \cite{DQS2}, that there exists a unique solution of the associated 
DRBSDE \eqref{DRBSDE}. 
The proof in the general case is the same as for non reflected BSDEs with default jump (see the proof of Proposition 2.6 in \cite{DQS4}). It is based on a fixed point argument, using the previous estimates. $\square$

We state a comparison result of  analysis for differential equations (deterministic).
Let $L^2=L^2([0,T], dt)$ be the space of square integrable Borelian  real valued maps on $[0,T]$.
\begin{lemma}[Comparison  for differential equations] \label{classique}
For $i=1,2$, let $b_i: [0,T] \times {\mathbb R} \rightarrow {\mathbb R}; (t,y) \mapsto b_i(t,y)$ be a Borelian map with 
$b_i(.,0)$ $\in$ $L^2$, and supposed to be uniformly Lipschitz with respect to $y$.
Let $f^1$, $f^2$ be right-continuous maps in $L^2$ and let 
 $x_1, x_2$ $\in {\mathbb R}$.
For $i=1,2$, let  $y^i$  be the unique right-continuous map  in $L^2$ satisfying  the differential equation:
$$y^i_t := x_i + \int_0^t b_i(s, y^i_s)ds + f^i(t).$$
Suppose that $x_1 \geq x_2$ and $b_1(t, y^2_t)\geq b_2(t, y^2_t)$ $0 \leq t \leq T$ $ds$-a.e.  Suppose also that 
 $f^1= f^2 + A$, where $A$ is a 
non decreasing right-continuous 
map on $[0,T]$ with $A_0=0$.
 We then have $y^1_t\geq y^2_t$ for each $t \in [0,T]$. 
 Moreover, if $x_1 > x_2$, then $y^1_t > y^2_t$ for each $t \in [0,T]$. 
\end{lemma}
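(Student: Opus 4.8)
The plan is to reduce the comparison to a scalar \emph{linear} integral equation for the difference $\Delta_t := y^1_t - y^2_t$ and then to solve it explicitly by a variation-of-constants (Duhamel) formula, producing a manifestly non-negative representation. First I would subtract the two defining equations and split
$$b_1(s,y^1_s) - b_2(s,y^2_s) = \bigl[b_1(s,y^1_s) - b_1(s,y^2_s)\bigr] + \bigl[b_1(s,y^2_s) - b_2(s,y^2_s)\bigr].$$
The first bracket linearizes: setting $h_s := \tfrac{b_1(s,y^1_s) - b_1(s,y^2_s)}{y^1_s - y^2_s}$ on $\{y^1_s \neq y^2_s\}$ and $h_s := 0$ otherwise, the map $h$ is Borel (since $y^1,y^2$ and $b_1$ are), satisfies $|h_s|\le L$ with $L$ the Lipschitz constant of $b_1$, and the first bracket equals $h_s\Delta_s$. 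The second bracket $\delta_s := b_1(s,y^2_s) - b_2(s,y^2_s)\ge 0$ by hypothesis, and belongs to $L^1([0,T])$ because $b_i(\cdot,0)\in L^2$, $b_i$ is Lipschitz and $y^2\in L^2$. Using $f^1 = f^2 + A$, I obtain
$$\Delta_t = c_t + \int_0^t h_s\Delta_s\,ds, \qquad c_t := (x_1-x_2) + \int_0^t \delta_s\,ds + A_t,$$
where $c$ is right-continuous, non-decreasing, bounded on $[0,T]$, with $c_0 = x_1-x_2\ge 0$ and $dc_s = \delta_s\,ds + dA_s \ge 0$.

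Next I would solve this equation. Introduce $\psi_t := \int_0^t h_s\Delta_s\,ds$, so that $\Delta_t = c_t + \psi_t$ and $\psi$ is absolutely continuous with $\psi_0 = 0$ and $\psi'_t = h_t\Delta_t = h_t\psi_t + h_t c_t$ for a.e.\ $t$. Since $h$ is bounded and $c$ is bounded on $[0,T]$, the forcing $h_\cdot c_\cdot$ lies in $L^1$, so this linear ODE has the unique absolutely continuous solution $\psi_t = \int_0^t e^{\int_s^t h_u\,du}\,h_s c_s\,ds$. Writing $e^{\int_s^t h_u\,du}h_s = -\partial_s\, e^{\int_s^t h_u\,du}$ and integrating by parts in the Lebesgue–Stieltjes sense, I arrive at
$$\Delta_t = c_t + \psi_t = (x_1-x_2)\,e^{\int_0^t h_u\,du} + \int_0^t e^{\int_s^t h_u\,du}\,dc_s.$$
Because $x_1-x_2\ge 0$, the exponential weights are strictly positive and $dc_s\ge 0$, every term on the right is non-negative, whence $\Delta_t\ge 0$, i.e.\ $y^1_t\ge y^2_t$ for all $t\in[0,T]$. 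If moreover $x_1 > x_2$, the first term is strictly positive for every $t$, so $y^1_t > y^2_t$ throughout, which gives the strict assertion.

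The only delicate point is the possible discontinuity of $A$, and hence of $c$ and of $\Delta$: both the variation-of-constants formula and the integration by parts must be read in the Lebesgue–Stieltjes sense. What makes this harmless is that the integrating factor $s\mapsto e^{\int_s^t h_u\,du}$ is \emph{continuous}, so the product rule $d\bigl(c_s\,e^{\int_s^t h_u\,du}\bigr) = c_{s^-}\,d_s\bigl(e^{\int_s^t h_u\,du}\bigr) + e^{\int_s^t h_u\,du}\,dc_s$ carries no covariation (jump) term and the boundary evaluation at $s=0$ and $s=t$ is clean; moreover $\int_0^t c_{s^-}\,d_s(e^{\int_s^t h_u\,du})$ may be replaced by $\int_0^t c_s\,d_s(e^{\int_s^t h_u\,du})$ since the integrator is absolutely continuous in $s$. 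Everything else — measurability and boundedness of $h$, integrability of $\delta$, and boundedness of $c$ on $[0,T]$ — is routine under the stated hypotheses, so I expect this jump-handling in the integration by parts to be the main (and essentially the only) obstacle.
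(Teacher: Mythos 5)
Your proof is correct and follows essentially the same route as the paper's: linearize the difference via the incremental ratio $h_s$ (the paper's $\lambda_t$) and represent $y^1-y^2$ by a variation-of-constants formula whose every term is nonnegative. Your treatment is in fact more careful than the paper's one-line version, notably in justifying the Lebesgue--Stieltjes integration by parts across the jumps of $A$ and in keeping the exponential factor on the initial difference $x_1-x_2$.
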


\begin{proof} 
The proof is classical.
We have $d\bar y_t= \lambda _t \bar y_t dt +(b_1(t, y^2_t)- b_2(t, y^2_t))dt+  dA_t$, where $\bar y:= y^1-y^2$
 and 
$\lambda _t:= ( b_1(t, y^1_t)- b_1(t, y^2_t)) (\bar y_t)^{-1} {\bf 1}_{ y^1_t\neq y^2_t}$. 
Hence, 
$\bar y_t= x_1- x_2+ \int_0^t e^{-\int_s ^t \lambda_u du}(b_1(s, y^2_s)- b_2(s, y^2_s))ds + \int_0^t e^{-\int_s ^t \lambda_u du} dA_s \geq 0$. Moreover, if $x_1 > x_2$, then the inequality is strict.
\end{proof}

\end{document}